\numberwithin{equation}{section}
\theoremstyle{plain}
\newtheorem{theorem}{Theorem}[section]
\newtheorem{Theorem}{Theorem}
\newtheorem{proposition}[theorem]{Proposition}
\newtheorem{lemma}[theorem]{Lemma}
\newtheorem{corollary}[theorem]{Corollary}
\newtheorem{Lemma}[Theorem]{Lemma}
\newtheorem*{conjecture}{Conjecture}
\theoremstyle{definition}
\newtheorem{definition}[theorem]{Definition}
\newtheorem{remark}[theorem]{Remark}
\newenvironment{manualproposition}[1]{%
  \manualpropositioninner
}{\endmanualpropositioninner}
\newenvironment{manualtheorem}[1]{%
  \manualtheoreminner
}{\endmanualtheoreminner}
\newcommand{\n}{\mathbf n}
\newcommand{\loc}{\mathrm{loc}}
\renewcommand{\u}{\mathbf u}
\newcommand{\w}{\mathbf w}
\newcommand{\N}{\mathbf N}
\newcommand{\g}{\boldsymbol g}
\newcommand{\Y}{\mathbf Y}
\newcommand{\W}{\mathbf W}
\newcommand{\V}{\mathbf V}
\newcommand{\U}{\mathbf U}
\newcommand{\bnabla}{{\bm \nabla}}
\newcommand{\C}{\mathcal C}
\DeclareMathOperator{\Ric}{Ric}
\DeclareMathOperator{\Range}{Range}
\DeclareMathOperator{\Div}{div}
\DeclareMathOperator{\Det}{det}
\DeclareMathOperator{\adj}{adj}
\DeclareMathOperator{\tr}{tr} 
\DeclareMathOperator{\Int}{Int}
\DeclareMathOperator{\Span}{span}
\begin{document}
\title[Monotonicity and ADM  Mass Minimizers]{Monotonicity of Causal Killing Vectors and Geometry of ADM Mass Minimizers}
\author{Sven Hirsch}
\address{Columbia University, 2990 Broadway, New York NY 10027, USA}
\email{sven.hirsch@columbia.edu }
\author{Lan-Hsuan Huang}
\address{University of Connecticut, Storrs, CT 06269, USA}
\email{lan-hsuan.huang@uconn.edu}

\begin{abstract}
We address two problems concerning the ADM mass-minimizing initial data sets. First, we show that the equality case of the positive mass theorem embeds into a pp-wave spacetime. Second, we show that positive Bartnik mass minimizers embed into strongly stationary vacuum spacetimes, thereby confirming the Bartnik stationary vacuum conjecture. A key ingredient is a new monotonicity formula for the Lorentzian length of a causal Killing vector field, which, among other applications, yields a strong maximum principle for the length.
\end{abstract}

\maketitle

%\tableofcontents
%\addtocontents{toc}{\setcounter{tocdepth}{1}}  

\section{Introduction}

Monotonicity formulas play a fundamental role in  mathematical relativity, for example, as seen in the connections between weak cosmic censorship, Hawking’s area theorem, and the proofs of the Riemannian Penrose inequality~\cite{Huisken-Ilmanen:2001, Bray:2001, Agostiniani-Mantegazza-Mazzieri-Oronzio:2025}. For a spacetime  admitting a causal Killing vector field~$\Y$, we derive a new monotonicity formula along the level sets of the length function of $\Y$ on initial data sets transverse to $\Y$.

 We let $n \ge 3$ and $(\N, \g)$ be an $(n+1)$-dimensional spacetime. We say that $\Y$ is \emph{causal} if $\Y$ is timelike or null, or equivalently,  $\eta := -\g(\Y, \Y) \ge 0$. The  monotonicity formula states:

\begin{Theorem}[Monotonicity formula]\label{TH:monotonicity}
Let $(\N, \g)$ be a spacetime admitting a future-directed causal Killing vector field $\Y$ satisfying $\Ric_{\g}(\Y, \Y) \le 0$. Let $U$ be spacelike hypersurface in $\N$ with the future-directed unit normal $\n$. Denote by $(f, X)$ the lapse-shift pair of $\Y$ along $U$; namely, $\Y = f \n +X$.    Suppose the level  sets  $\Sigma_t = \{ x\in  U: \eta (x) = t\}$ are compact without boundary for regular values of $t$, and define, for $t>0$, 
\begin{align*}
 M(t) &:= t^{-1}\int_{\Sigma_t}\tfrac{f}{|\nabla \eta|}  \big(|\nabla \eta|^2 - \big(\nabla \eta \cdot \tfrac{ X}{f}\big)^2 \big)\, d\sigma
\end{align*}
where $\nabla$ is the gradient with respect to the induced metric $g$ on $U$. Then $M(t)$ is non-increasing in $t$. 

Moreover, if $M(t)$ is constant for $t\in (a, b)$, then   $\Ric_{\g}(\Y, \Y)=0$, the $g$-dual one-form of $X$ is closed, and $\nabla \eta \cdot X=0$ on $\Omega_b \setminus \Omega_a$ where $\Omega_a := \{ x\in U: \eta(x)<a\}$.

\end{Theorem}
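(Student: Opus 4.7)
The strategy is to rewrite the integrand in an intrinsic spacetime form, and then to reduce the monotonicity to a pointwise divergence inequality on $U$ driven by the Ricci hypothesis together with three algebraic squares whose vanishing recovers the stated rigidity conditions.

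Since $\Y$ is Killing, $\Y(\eta)=0$, so $\g(\nabla^\g\eta,\Y)=0$. Writing $\nabla^\g\eta=-\alpha\n+\nabla\eta$ along $U$ (with $\nabla\eta$ the $g$-gradient of $\eta|_U$) and $\Y=f\n+X$, the orthogonality forces $\alpha=-(\nabla\eta\cdot X)/f$, whence
\[
|\nabla^\g\eta|^2_\g=|\nabla\eta|^2-\alpha^2=|\nabla\eta|^2-\bigl(\nabla\eta\cdot X/f\bigr)^2,
\]
so $M(t)=t^{-1}\int_{\Sigma_t} f\,|\nabla^\g\eta|^2_\g/|\nabla\eta|\,d\sigma$. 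In this form, the spacetime identities $\nabla^\g\eta=2\nabla_\Y\Y$ and $\tfrac12\Box^\g\eta=\Ric_\g(\Y,\Y)-|\nabla^\g\Y|^2_\g$ (both direct consequences of the Killing equation) feed naturally into the subsequent computation.

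By the standard first-variation formula $\frac{d}{dt}\int_{\Sigma_t}\phi\,d\sigma=\int_{\Sigma_t}\Div_g(\phi\hat\nu)/|\nabla\eta|\,d\sigma$ with $\hat\nu=\nabla\eta/|\nabla\eta|$, and using $\eta=t$ on $\Sigma_t$, the inequality $M'(t)\le 0$ is equivalent to the pointwise bound
\[
\eta\,\Div_g(V)\le fW \quad\text{on }\Sigma_t,
\]
for $W=|\nabla^\g\eta|^2_\g$ and $V=fW\,\nabla\eta/|\nabla\eta|^2$. I would prove this by expanding $\Div_g(V)$ using two structural inputs: (i) the Killing equations restricted to $U$ (the KID equations, relating $\nabla f$, the symmetric and antisymmetric parts of $\nabla X^\flat$, the second fundamental form $k$, and $f$); and (ii) the Gauss--Codazzi decomposition of $\Box^\g\eta$ on $U$. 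After simplification, I expect $fW-\eta\,\Div_g(V)$ to decompose, up to a positive scalar factor, into a sum of three nonnegative quantities: a multiple of $-\Ric_\g(\Y,\Y)\ge 0$, a multiple of $|dX^\flat|^2\ge 0$, and a multiple of $(\nabla\eta\cdot X)^2\ge 0$. These three squares match precisely the three equality conditions in the theorem.

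The main obstacle is executing this algebraic reorganization cleanly. Because $|\nabla^\g\Y|^2_\g$ has no definite sign in Lorentzian signature, the identity for $\Box^\g\eta$ cannot by itself produce a signed right-hand side; the indefinite piece must be balanced against the $|dX^\flat|^2$ contribution arising from the antisymmetric part of $\nabla X^\flat$. A useful bookkeeping device is the 2-form $\omega=d\Y^\flat$ on $\N$, which is closed with codifferential proportional to $\Ric_\g(\Y,\cdot)^\flat$; decomposing $\omega$ into its purely tangential and mixed normal-tangential components along $U$ should expose the three-term split. Once the split is in hand, the rigidity statement is immediate: $M$ constant on $(a,b)$ forces each of the three nonnegative terms to vanish pointwise on $\Omega_b\setminus\Omega_a$, yielding $\Ric_\g(\Y,\Y)=0$, $dX^\flat=0$, and $\nabla\eta\cdot X=0$, exactly as stated.
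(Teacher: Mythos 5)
Your overall architecture --- rewriting the integrand as $f\,\g(\bnabla\eta,\bnabla\eta)/|\nabla\eta|$, expressing $tM(t)$ as the flux of a vector field through $\Sigma_t$, and reducing monotonicity to a signed divergence identity whose nonnegative remainder encodes the rigidity --- is the same as the paper's. But there is a genuine gap at precisely the step that carries all the weight: the decomposition of $fW-\eta\,\Div_g V$ is only announced (``I expect \dots to decompose''), not proven, and the split you predict is not the one that actually occurs. The paper obtains it (Proposition~\ref{pr:elliptic2}, by feeding the Bochner identity \eqref{eq:bochner} through the slice formula of Lemma~\ref{le:elliptic1}) in the form
\[
\g(\bnabla\Y,\bnabla\Y)=-\tfrac1{2\eta}\big(|\nabla\eta|^2-(\nabla\eta\cdot\hat X)^2\big)+\tfrac14\textstyle\sum_{a,b=1}^{n-1}(X_{a;b}-X_{b;a})^2+\tfrac12(\,\cdot\,)^2,
\]
where the last term is a completed square mixing $\nabla\eta^\perp$ with the components $X_{n;a}-X_{a;n}$ in a frame with $e_n=X/|X|_g$ --- not separate multiples of $|dX^\flat|^2$ and $(\nabla\eta\cdot X)^2$ as you posit. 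Deducing the stated equality conclusions ($dX^\flat=0$ and $\nabla\eta\cdot X=0$ on the region where $\eta=t>0$, hence $\Y$ timelike) from the vanishing of the remainder requires this specific structure; asserting them from an unexecuted and differently shaped split leaves both the monotonicity inequality and the rigidity case unproved. Your own remark that the indefinite Lorentzian term must be ``balanced'' against $|dX^\flat|^2$ is exactly the difficulty, and it is resolved only by carrying out the computation.

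A second, fixable, flaw is the choice $V=fW\,\nabla\eta/|\nabla\eta|^2$. The pointwise bound $\eta\,\Div_g V\le fW$ to which you reduce need not hold for this $V$: the field for which an exact pointwise identity is available is $V=f\big(\nabla\eta-(\nabla\eta\cdot\hat X)\hat X\big)$, which is forced by the restriction formula $\Box_{\g}w=\tfrac1f\Div_g\big(f(\nabla w-(\nabla w\cdot\hat X)\hat X)\big)$ for $\Y$-invariant $w$, and which satisfies $\Div_g V=\tfrac1\eta V\cdot\nabla\eta-\phi_2$ with $\phi_2\ge0$ (eq.~\eqref{eq:div-form}). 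Your $V$ differs from this one by a field tangent to the level sets, so the fluxes and hence the integrated inequality agree, but the pointwise reduction does not transfer (and ``equivalent'' is in any case too strong: the pointwise bound is merely sufficient for $M'(t)\le 0$). Finally, passing from $M'(t)\le0$ at a.e.\ regular value to genuine monotonicity requires the $W^{1,1}_{\loc}$/absolute-continuity argument the paper supplies; this is minor compared with the missing decomposition, but it is not addressed in the sketch.
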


We refer to Theorem~\ref{th:monotonicity} for a more detailed statement. The monotonicity formula yields the following applications.

An asymptotically flat spacetime $(\N, \g)$ with a Killing vector field $\Y$ is called \emph{stationary} if $\Y$ is timelike in the exterior of an asymptotically flat initial data set $(M, g, k)$, and \emph{uniformly stationary} if there exists a constant $c>0$ such that $\g(\Y,\Y)<-c$ everywhere on the exterior of $M$. A uniformly timelike Killing vector provides a natural time coordinate in the asymptotically flat region, which is fundamental for results such as the uniqueness of stationary black holes and the existence of maximal hypersurfaces; see, e.g.,~\cite{Chrusciel-Costa-Heusler:2012, Chrusciel-Wald:1994}. R.~Beig and P.~Chru\'sciel~\cite{Beig-Chrusciel:1996} and Chru\'sciel and D.~Maerten~\cite{Chrusciel-Maerten:2006} proved that stationarity \emph{implies} uniform stationarity under certain global assumptions on initial data sets. Specifically, they showed that an asymptotically null Killing vector corresponds to null ADM energy-momentum and imposed global conditions to rule out non-Minkowski spacetime. However, the abundance of pp-wave examples with a global null Killing vector and null ADM energy–momentum shows that these assumptions are restrictive.

We find local conditions near infinity that imply uniform stationarity.

\begin{Theorem}[Uniform stationarity]\label{Th:uni}
Let $(\N, \g)$ be a spacetime admitting a Killing vector field $\Y$, and let $(M, g, k)$ be  an asymptotically flat initial data set in $\N$, where $M$ may have compact boundary. If, along $M$, the Einstein tensor $G = O^{0, \alpha}(|x|^{-n-\epsilon})$ for some $\epsilon>0$, $\Ric_{\g}(\Y, \Y) \le 0$, and $\Y$ is timelike, then $\Y$ is uniformly timelike on $M$.
\end{Theorem}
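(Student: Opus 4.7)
The plan is to combine an asymptotic analysis of the Killing vector $\Y$ on the end of $M$ with the monotonicity formula in Theorem~\ref{TH:monotonicity}. Set $\eta := -\g(\Y,\Y)$, so the hypothesis reads $\eta>0$ pointwise on $M$, and the goal is a uniform positive lower bound. Since $\partial M$ and any bounded region are compact and $\eta$ is continuous, it suffices to show $\eta$ is bounded below by a positive constant near infinity, that is, that $\eta_\infty := \liminf_{|x|\to\infty}\eta(x) > 0$.

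The first step is to extract the asymptotic form of $\Y$ along the end. The Killing equations, after contraction of the Ricci identity, yield a second-order elliptic system for $\Y$ along $M$ with right-hand side controlled by the Einstein tensor $G$ contracted with $\Y$. Combining this with the decay hypothesis $G = O^{0,\alpha}(|x|^{-n-\epsilon})$ and weighted elliptic estimates (in the style of Beig--Chru\'sciel~\cite{Beig-Chrusciel:1996} and Chru\'sciel--Maerten~\cite{Chrusciel-Maerten:2006}) produces an expansion $\Y = \Y_0 + \text{lower order}$, where $\Y_0$ is a Killing field of Minkowski. The pointwise timelikeness of $\Y$ on all of $M$ excludes most asymptotic candidates for $\Y_0$ outright (for instance, rotations and generic boosts fail to be timelike on parts of the end); the only borderline scenario that remains is $\Y_0$ asymptotically null, which is precisely the case $\eta_\infty = 0$.

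The second step rules out $\eta_\infty = 0$ by contradiction via the monotonicity formula, applied on a suitable interior subregion. Assume $\eta_\infty = 0$ and pick regular values $0 < a < b$ of $\eta$ such that $\Sigma_a$ is a large topological sphere near infinity and $\Sigma_b$ is a compact level set in the interior on which $\Y$ is strictly timelike. Theorem~\ref{TH:monotonicity} then gives $M(a) \ge M(b)$. Holding $b$ fixed and letting $a \to 0^+$, the expansion from the first step allows one to expand the integrand $\tfrac{f}{|\nabla\eta|}(|\nabla\eta|^2 - (\nabla\eta\cdot X/f)^2)$ to leading order and compute $\lim_{a\to 0^+} M(a)$ as an ADM-type flux built from $\Y_0$; the expected outcome is that this limit vanishes when $\Y_0$ is asymptotically null. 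Since $M(b)>0$ (the integrand is pointwise positive on $\Sigma_b$ by strict timelikeness $f^2 > |X|_g^2$), this contradicts $M(a)\ge M(b)$ once $a$ is small enough, forcing $\eta_\infty > 0$ and hence uniform timelikeness on $M$. The main obstacle will be this last asymptotic computation: the prefactor $a^{-1}$ in $M(a)$ is singular while the integrand itself degenerates at infinity, so matching rates requires exploiting the sharp decay of $\Y$ and of $\eta$ inherited from the hypothesis on the Einstein tensor.
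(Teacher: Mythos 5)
Your overall architecture coincides with the paper's: reduce to the asymptotically null case via the asymptotic classification of Killing initial data (Lemma~\ref{le:asy}), then run the monotonicity formula on level sets of $\eta$ near infinity to reach a contradiction. The problem is that the step you yourself flag as ``the main obstacle'' --- showing $\lim_{a\to 0^+}M(a)=0$ --- is precisely where all of the real work lies, and as stated your plan does not contain the idea that makes it true. A priori, Theorem~\ref{th:expan} only gives $\eta = A|x|^{2-n}+O^{2,\alpha}(|x|^{2-n-q_1})$ for some constant $A$. With $A\neq 0$ one computes that $F(a)=\int_{\Sigma_a} f(\nabla\eta-(\nabla\eta\cdot\hat X)\hat X)\cdot\nabla\eta/|\nabla\eta|\,d\sigma$ tends to a \emph{nonzero} constant proportional to $A$, so $M(a)=a^{-1}F(a)\to\infty$ rather than $0$; the ``ADM-type flux built from $\Y_0$'' does not vanish by softness of the null asymptotics. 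The paper's proof hinges on first proving $A=0$, i.e.\ the improved decay \eqref{eq:eta}, which comes from inserting the expansion into the divergence-form equation \eqref{eq:equ} and observing that its principal part degenerates to $\Delta_0-\partial_n^2$ in the asymptotic null direction, so a nontrivial $|x|^{2-n}$ leading term cannot balance the $O(|x|^{-n-\epsilon})$ sources. Even granting $A=0$, knowing $F(a)\to 0$ is still not enough to kill the $a^{-1}$ prefactor: one needs Item~\eqref{it:limit} of Theorem~\ref{th:monotonicity} (an L'H\^opital-type estimate bounding $\lim M(t)$ by a second flux) together with the two flux limits of Proposition~\ref{pr:asymp}; and the second of these, \eqref{eq:B'}, is genuinely delicate --- on round spheres the integrand involving $\hat X^k(X_{k;i}-X_{i;k})$ only decays like $r^{1-n}$, giving an $O(1)$ contribution, so the paper evaluates it on elongated coordinate cylinders adapted to the null direction $b=a\partial_n$ to exploit cancellation in the $\partial_n$-components. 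None of these ingredients appear in your outline, so the contradiction you aim for is not yet within reach.

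Two smaller points. First, once $\lim_{t\to 0^+}M(t)=0$ is in hand you do not need an interior level set $\Sigma_b$ at all: monotonicity forces $M(t)\equiv 0$ for small $t$, hence $\nabla\eta\equiv 0$ and $\eta\equiv 0$ near infinity, contradicting timelikeness --- this is how the paper concludes, and it sidesteps the issue that your $\Sigma_b$ must be compact \emph{without boundary} (recall $\partial M$ may be nonempty; you would have to take $b<\min_{\partial M}\eta$ for Theorem~\ref{th:monotonicity} to apply to the region between the levels). Second, your step one should also address the remaining alternatives in Lemma~\ref{le:asy} (linear growth and decay to zero), not only ``rotations and generic boosts''; the paper disposes of these before isolating the case $a=|b|\neq 0$.
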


We remark that $G = O^{0,\alpha}(|x|^{-n-\epsilon})$ in the above statement means that $G(\n,\n)$, $G(\n,\partial_i)$, and $G(\partial_i,\partial_j)$ are each $O^{0,\alpha}(|x|^{-n-\epsilon})$ on $M$, with respect to a unit normal $\n$ to $\N$ and the asymptotically flat coordinate vectors $\partial_i$.

A key step in the monotonicity formula is to restrict an inhomogeneous wave equation satisfied by $\eta$ to a spacelike hypersurface. This yields a divergence-type elliptic equation for~$\eta$, which is degenerate precisely on the null set of $\Y$. Analyzing this null set leads to the following regularity result.
\begin{Lemma}[Regularity of the null set]\label{Le:smooth}
Let $(\N, \g)$ be a spacetime admitting a causal Killing vector $\Y$ satisfying $\Ric_{\g} (\Y, \Y)\le0$. 
Define the null set $\mathbf Z = \{ x\in \N: \eta(x)=0\}$. Let $U\subset \N$ be a smooth spacelike hypersurface. Then $\partial \mathbf Z\cap \Int U$ is a hypersurface in $U$ as smooth as~$\g$. 
\end{Lemma}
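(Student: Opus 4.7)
The argument is local near $p \in \partial\Z \cap \Int U$. My plan has three pieces: derive a divergence-form elliptic equation for $\eta$ on $U$; use it to show $\text{Hess}_U\eta(p)$ has rank exactly one; and conclude via the implicit function theorem.

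To set up the equation, I would choose local spacetime coordinates $(t, x^i)$ in which $\Y = \partial_t$. Since $\Y$ is Killing, $\g_{\alpha\beta}$ and hence $\eta$ are $t$-independent, so the standard Bochner-type identity $\Box_{\g}\eta = 2\Ric_{\g}(\Y,\Y) - 2|\nabla^{\g}\Y|^2$ reduces to
\[
\frac{1}{\sqrt{|\g|}}\, \partial_i\!\bigl(\sqrt{|\g|}\,\g^{ij}\,\partial_j \eta\bigr) = 2\Ric_{\g}(\Y,\Y) - 2|\nabla^{\g}\Y|^2
\]
on $U$, where the lapse-shift decomposition $\Y = f\n + X$ gives $\g^{ij} = g^{ij} - X^iX^j/f^2$. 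The eigenvalues of $\g^{ij}$ are positive on $X^{\perp}$ and equal to $\eta|X|^2/f^2$ on $\Span(X)$, so the operator is uniformly elliptic on $\{\eta > 0\}$ and degenerates exactly in the single direction $X$ on $\Z$, matching the description in the excerpt.

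For the Hessian analysis, set $H := \text{Hess}_U\eta(p)$. Because $\eta \ge 0$ attains its minimum at $p$, we have $\nabla^U \eta(p) = 0$ and $H \succeq 0$. Evaluating the PDE at $p$ kills all first-order terms and gives
\[
\g^{ij}(p)\,H_{ij} = 2\Ric_{\g}(\Y,\Y)(p) - 2|\nabla^{\g}\Y|^2(p).
\]
The key claim is that $H$ has rank exactly one with a positive eigendirection $\nu$. For rank $\ge 1$: since $\{\eta > 0\}$ accumulates at $p$ by definition of $\partial\Z$, a Hopf-type barrier restricted to the uniformly elliptic subspace $X^\perp$ produces a direction $\nu$ with $H(\nu,\nu) > 0$. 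For rank $\le 1$: one combines the identity above with $\Ric_{\g}(\Y,\Y) \le 0$, the degenerate structure of $\g^{ij}(p)$, and the Killing invariance of $\eta$ to force the remaining eigenvalues to zero. Once rank-one is known, the smooth function $G := \partial_\nu \eta$ satisfies $G(p) = 0$ and $\partial_\nu G(p) = H(\nu,\nu) > 0$, so the implicit function theorem produces a smooth hypersurface $\Sigma := \{G = 0\}$ through $p$ of the same regularity as $\g$; the inclusions $\partial\Z \subset \Sigma$ (because $\nabla^U\eta = 0$ on $\partial\Z$) and strict quadratic growth of $\eta$ in the $\nu$-direction off $\Sigma$ together yield $\partial\Z = \Sigma$ in a neighborhood of $p$.

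The crux is the rank-one analysis of $H$. A standard strong-minimum-principle argument does not apply off the shelf because the elliptic operator degenerates precisely where $\eta$ attains its minimum; the workaround is to confine barrier arguments to the $(n-1)$-dimensional transverse subspace $X^\perp$ on which $\g^{ij}$ is uniformly positive. Simultaneously producing the positive eigendirection and excluding rank $\ge 2$ — that is, pinning down the normal to $\partial\Z$ — is where the hypothesis $\Ric_{\g}(\Y,\Y) \le 0$ enters in tandem with the Killing invariance of $\eta$, and this is the technical heart of the proof.
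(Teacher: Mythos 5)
Your proposal hinges on the claim that $H=\nabla^2\eta(p)$ has rank exactly one, with the positive direction produced by ``a Hopf-type barrier restricted to the uniformly elliptic subspace $X^\perp$''; this step is both unproved and aimed at the wrong direction. At $p\in\partial Z\cap \Int U$ one has $\eta\ge 0$, $\nabla\eta(p)=0$, $H\succeq 0$, and once the right-hand side of the equation is known to vanish at $p$ (this already requires the argument of Lemma~\ref{le:inv}, via Proposition~\ref{pr:elliptic2} and $\Ric_{\g}(\Y,\Y)\le 0$) the identity reads $\g^{ij}(p)H_{ij}=0$ with $\g^{ij}=g^{ij}-\hat X^i\hat X^j$ positive semidefinite and kernel $\Span(\hat X)$; hence $H$ is supported on $\Span(\hat X)$, i.e.\ the only candidate positive eigendirection is $\hat X$, which is precisely the characteristic direction of the operator $L$ at $p$. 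No Hopf barrier gives positivity there: the paper's Lemma~\ref{le:hopf} shows that at such a touching point the normal must be characteristic (proportional to $\hat X$), rather than yielding a nonzero derivative, and a barrier confined to $X^\perp$ concerns directions along which $\eta$ in fact vanishes identically (by the propagation result Lemma~\ref{le:pro}, the zero set contains the whole $X^\perp$-leaf through $p$), so $H(\nu,\nu)=0$ for every $\nu\in X^\perp$. Neither your argument nor the paper establishes $H(\hat X,\hat X)>0$, and there is no reason to expect it: transverse vanishing of $\eta$ of higher than quadratic order is not excluded (compare degenerate horizons), so an implicit-function-theorem scheme based on transverse nondegeneracy of $\eta$ is the wrong mechanism for this lemma.

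Even granting rank one, the final identification does not close. From $\nabla\eta=0$ on $Z\cap\Int U$ you only get the inclusion $\partial Z\subset\Sigma:=\{\partial_\nu\eta=0\}$, and quadratic growth in the $\nu$-direction (known only at the single point $p$) at best forces $Z\subset\Sigma$ locally; it does not show that $\partial Z$ is relatively open in $\Sigma$, so for all your argument shows $\partial Z$ could be a lower-dimensional subset of the hypersurface $\Sigma$ and hence not a hypersurface at all. What rules this out is exactly the ingredient your proposal omits and on which the paper's proof is built: the Hill/Bony-type propagation of the zero set of the degenerate equation along integral curves of $X^\perp$ (Lemma~\ref{le:pro}), combined with the involutivity of $X^\perp$ on $Z$ (Lemma~\ref{le:inv}, which is where $\Ric_{\g}(\Y,\Y)\le0$ genuinely enters) and the Frobenius theorem; this exhibits $\partial Z$ near $p$ as a leaf of the $X^\perp$-foliation, a hypersurface as smooth as $\g$, with no control on the transverse vanishing rate of $\eta$ needed. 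I recommend replacing the Hessian/IFT strategy by an argument of that type.
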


In special circumstances (e.g., stationary spacetimes), an event horizon is a Killing horizon (where the Killing vector becomes null). Consequently, the above result may be compared with the regularity of cross-sections of event horizons established in \cite{Chrusciel-Delay-Galloway-Howard:2001, Chrusciel-Costa:2008, Hounnonkpe-Minguzzi:2025} in connection with the Hawking rigidity theorem. Their results, however, are stronger, since event horizons are global geometric objects, whereas a Killing horizon, as well as our argument, is local.

Using the localized monotonicity formula (Proposition~\ref{pr:mono}), together with the preceding regularity result, we show that the length of a causal Killing vector satisfies a strong maximum principle.
\begin{Theorem}[Strong maximum principle]\label{Th:strong}
Let $(\N,\g)$ be a spacetime admitting a causal Killing vector field $\Y$ with $\Ric_{\g}(\Y,\Y)\le 0$, and let $(U, g, k)$ be an initial data set in $\N$ with smooth boundary $\partial U$. Denote $\eta = -\g(\Y, \Y)\ge 0$. Then:  
\begin{enumerate}
\item Suppose there exists a point $p\in \partial U$ such that $\eta(x) > \eta(p)$ for all $x\in \Int U$. If $\eta(p)>0$, then $\nabla \eta(p)\neq 0$. If $\eta=0$ on an open subset $\Sigma\subset \partial U$ containing $p$, then $\nabla \eta \neq 0$ on a dense subset of $\Sigma$.  
\item If $\eta$ attains a local minimum at an interior point of $U$, then $\eta$ is constant on $U$.  
\end{enumerate}
\end{Theorem}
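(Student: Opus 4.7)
I will base the argument on three ingredients: the divergence-form elliptic equation for $\eta$ on $U$ indicated immediately before Lemma~\ref{Le:smooth}, which is uniformly elliptic on $\{\eta>0\}$ and degenerates on the null set $\mathbf Z$; the regularity of $\partial\mathbf Z\cap\Int U$ from Lemma~\ref{Le:smooth}; and the localized monotonicity formula from Proposition~\ref{pr:mono}. On $\{\eta>0\}$ the classical linear-elliptic maximum principles will apply verbatim, whereas the null set has to be treated geometrically.

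Part~(1)(a) is immediate: since $\eta$ is continuous and $\eta(p)>0$, the equation for $\eta$ is uniformly elliptic in a one-sided collar of $p$ in $\overline U$, so the classical Hopf boundary-point lemma yields $\nabla\eta(p)\neq 0$. Part~(1)(b) is the genuinely new case. I will argue by contradiction, supposing that $\nabla\eta$ vanishes on some relatively open $\Sigma'\subset\Sigma$. Since $\eta\equiv 0$ on $\Sigma'$ already kills the tangential derivatives, the additional assumption $\nabla\eta\equiv 0$ on $\Sigma'$ says that the full first jet of $\eta$ vanishes on $\Sigma'$. Extending $U$ slightly across $\Sigma'$ in the ambient spacetime and applying Lemma~\ref{Le:smooth} to the extension identifies $\Sigma'$ locally with a smooth piece of $\partial\mathbf Z$, so that on the positive side $\Int U$ the equation is uniformly elliptic up to $\Sigma'$. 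A Carleman-type boundary unique continuation then forces $\eta\equiv 0$ in a one-sided neighborhood of $\Sigma'$ in $\Int U$, contradicting $\eta>0$ in $\Int U$.

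Part~(2) reduces to~(1) by the usual sliding-ball construction. Let $m=\eta(p_0)$ for an interior minimum point $p_0$ and assume toward contradiction that $\eta\not\equiv m$ on the connected component of $U$ containing $p_0$. Pick $q\in U$ with $\eta(q)>m$ and $\mathrm{dist}(q,\{\eta=m\})<\mathrm{dist}(q,\partial U)$, and inflate an open ball $B$ around $q$ on which $\eta>m$ until $\partial B$ first touches $\{\eta=m\}$ at a point $p^*\in\Int U$. If $m>0$, part~(1)(a) applied to $B$ gives $\nabla\eta(p^*)\neq 0$, which contradicts $\nabla\eta(p^*)=0$ at the interior minimum $p^*$ of $\eta$ on $U$. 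If $m=0$, then $p^*\in\partial\mathbf Z\cap\Int U$ is a smooth hypersurface point by Lemma~\ref{Le:smooth}, and a local-coordinate reduction around $p^*$ lets me invoke part~(1)(b) applied to $B$ to obtain the same contradiction.

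The principal obstacle is the degenerate Hopf statement in part~(1)(b). The standard exponential-barrier construction fails exactly where the leading coefficient of the elliptic equation vanishes, so combining Lemma~\ref{Le:smooth}, which gives a smooth two-sided picture across $\partial\mathbf Z$, with a boundary unique continuation (perhaps delivered by the localized monotonicity formula applied to shrinking level sets approaching $\Sigma'$) is likely to be the technically heaviest part of the proof.
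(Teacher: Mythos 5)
Your part (1)(a) and the overall strategy of reducing everything to the zero set via Lemma~\ref{Le:smooth} match the paper, but the heart of the theorem --- the degenerate case (1)(b) --- is not actually proved in your proposal, and the route you sketch would fail. Your key claim that, after identifying $\Sigma'$ with a smooth piece of $\partial\mathbf Z$, ``the equation is uniformly elliptic up to $\Sigma'$'' is false: the operator $L$ has principal coefficients $\g^{ij}=g^{ij}-\hat X^i\hat X^j$, and on $\Sigma'$ one has $\eta=f^2(1-|\hat X|^2)=0$, i.e.\ $|\hat X|=1$, with $\hat X$ normal to $\Sigma'$ (cf.\ Lemma~\ref{le:integral} and the orthogonality used in the paper's proof). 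Hence $\Sigma'$ is exactly a characteristic hypersurface for $L$, the ellipticity constant in the normal direction degenerates like $\eta$ itself, and a Carleman-type boundary unique continuation from vanishing Cauchy data is precisely the kind of statement that is unavailable at a characteristic boundary. You flag this as ``the technically heaviest part'' and gesture at the localized monotonicity formula, but you never carry out that step; the paper's actual argument for (1)(b) is entirely built on Proposition~\ref{pr:mono}: one flows the nearby level sets $\Sigma_t$ along the (normalized) field $\nabla\eta-(\nabla\eta\cdot\hat X)\hat X$ back to $\Sigma_0$ to build a region $K$ with $\partial_1K$ level sets and $\partial_2K$ tangent to that field, notes $F_K(t)\to0$ because $\nabla\eta=0$ on $\Sigma_0$, and uses the limit estimate in Proposition~\ref{pr:mono} together with the antisymmetry of $X_{k;i}-X_{i;k}$ and the orthogonality of $\hat X$ to $\Sigma_0$ to force $M_K\equiv0$, hence $\nabla\eta\equiv0$ near $\Sigma_0$, a contradiction. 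This monotonicity argument is not a technical variant of unique continuation; it is the mechanism that replaces the Hopf lemma where the equation is characteristic.

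There is also a gap in your part (2) in the case $m=0$: after the sliding-ball construction, the ball $B$ touches $Z$ at the single point $p^*$, so the hypothesis of (1)(b) --- that $\eta$ vanish on an \emph{open} subset of the boundary --- is not satisfied, and no pointwise Hopf statement can hold there, since $\eta\ge0$ attains an interior minimum at every point of $\partial Z\cap\Int U$ and therefore $\nabla\eta(p^*)=0$ automatically. The ``local-coordinate reduction'' you invoke is doing all the work and is not supplied. The paper avoids this by not using a sliding ball at all: if the interior minimum is $0$, then $\partial Z\cap\Int U$ is a smooth hypersurface (Lemma~\ref{Le:smooth}) on which $\nabla\eta\equiv0$, and applying (1)(b) to the region $U\setminus Z$, whose boundary contains this open piece of $\partial Z$, gives the contradiction directly. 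Your (2) can be repaired along these lines, but only after (1)(b) has been established by the monotonicity argument.
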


The above result implies that, in particular, if $\Y$ is null at an interior point of $\N$, then $\Y$ is null everywhere.  Therefore, a Killing vector field must change from timelike to spacelike when crossing a Killing horizon, as illustrated in explicit examples such as Kerr spacetimes.

\medskip

Our primary motivation for obtaining the above results is to settle the equality case of the positive mass theorem and the Bartnik stationary vacuum conjecture, as discussed below.

The development of the positive mass theorem for asymptotically flat initial data sets has a long history; see, for example, \cite{Schoen-Yau:1979-pmt1, Witten:1981, Schoen-Yau:1981-pmt2, Yip:1987, Eichmair:2013, Beig-Chrusciel:1996, Chrusciel-Maerten:2006, Huang-Lee:2020, HKK:2022, HirschZhang:2023, Huang-Lee:2025, Huang-Lee:2024, Hirsch-Zhang:2024} and references therein. Regarding the equality case of the positive mass theorem, it has been shown that a complete $n$-dimensional asymptotically flat initial data set satisfying the dominant energy condition with  ADM energy-momentum $E = |P|$ must embed in Minkowski spacetime for $n = 3, 4$, and  for $n>4$ provided additional asymptotically flat fall-off conditions are assumed~\cite{Beig-Chrusciel:1996, Chrusciel-Maerten:2006, Huang-Lee:2020, HirschZhang:2023, Huang-Lee:2025}.  Here, we say that an initial data set $(U, g, k)$  \emph{embeds} in a spacetime $(\mathbf{N}, \g)$ of one higher dimension if $(U, g)$ isometrically embeds into $(\mathbf{N}, \g)$ with $k$ as its second fundamental form.

Without these additional fall-off rates, the conjectured characterization in Minkowski spacetime is no longer valid. D. Lee and the second-named author~\cite{Huang-Lee:2024} constructed counterexamples in pp-waves for $n>8$. Y. Zhang and the first-named author~\cite{Hirsch-Zhang:2024} improved this to $n>4$, and further showed that if the initial data set is spin, it must embed in a pp-wave (see also \cite{Witten:1981, Beig-Chrusciel:1996, Chrusciel-Maerten:2006}).  We remove the spin assumption.  

%It is natural to conjecture that this characterization still holds without the spin assumption.

%D. Lee and the second-named author~\cite{Huang-Lee:2024} employed a variational argument to show that an ADM mass–minimizing initial data set must embed in a spacetime $(\N, \g)$ with a Killing vector field $\Y$ satisfying $\Ric_{\g} = \zeta  \Y \otimes \Y$, where $\zeta$ vanishes whenever $\Y$ is not null. In particular, such a spacetime satisfies the Killing vector energy condition $\Ric_{\g}(\Y, \Y)=0$, as required in previous results. 

%Building on the prior result, we refine the variational argument to show that $\Y$ is causal, and by applying the causal characterization above, we establish the equality case without assuming spin.
\begin{Theorem}[Equality case of the positive mass theorem]\label{Th:equality}
Let $(M, g, k)$ be an $n$-dimensional, complete, boundaryless, asymptotically flat initial data set with $(g, k)\in \C^5_{\mathrm{loc}}(M)\times \C^4_{\mathrm{loc}}(M)$. Suppose it satisfies the dominant energy condition, has ADM energy–momentum $(E, P)$ with $E = |P|$, and that the positive mass inequality holds in a neighborhood of $(g, k)$. Then $(M, g, k)$ embeds in a pp-wave\footnote{We say that $(\N, \g, \Y)$ is a pp-wave if the spacetime $(\N, \g)$  admits a parallel null vector field $\Y$ so that each integral hypersurface of $\Y^\perp$, which is a null hypersurface, has flat induced metric, meaning its curvature tensor vanishes identically.   (The term ``pp-wave'' is short for plane-fronted waves with parallel rays.)} satisfying the spacetime dominant energy condition.
\end{Theorem}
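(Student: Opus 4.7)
The plan is to use the local mass-minimization hypothesis to produce a causal Killing initial data (KID) $(f, X)$ on $M$, then deploy Theorem~\ref{Th:strong} and Theorem~\ref{Th:uni} to force the KID to be null everywhere, and finally propagate this null rigidity to a pp-wave structure on a spacetime development. First, the positive mass inequality holding in a neighborhood of $(g, k)$ means $(g, k)$ is a local ADM-mass minimizer. A first-variation argument along the lines of the second-named author's works~\cite{Huang-Lee:2024, Huang-Lee:2025} then produces a nontrivial future-directed causal lapse-shift pair $(f, X)$ solving the KID equations, with $f \ge |X|_g \ge 0$, together with the asymptotic decay of $(f, X)$, $G$, and their derivatives required to invoke Theorem~\ref{Th:uni}. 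The equality case $E = |P|$ forces the KID to asymptote to a null translation, so after normalization the function $\eta := f^2 - |X|_g^2$ tends to $0$ at infinity, and the DEC together with the KID equations along $M$ yields $\Ric_\g(\Y, \Y) \le 0$, where $\Y = f\n + X$ is the spacetime Killing vector extension.

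With these preparations, a dichotomy forces $\eta \equiv 0$. Either $\eta > 0$ everywhere on $M$, in which case $\Y$ is timelike on $M$ and Theorem~\ref{Th:uni} makes $\Y$ uniformly timelike, contradicting $\eta \to 0$ at infinity; or $\eta$ attains its infimum $0$ at some interior point, and Theorem~\ref{Th:strong}(2) yields $\eta \equiv 0$. In both cases, $\Y$ is null throughout $M$.

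Finally, I would promote $\Y$ to a null Killing vector on the maximal Cauchy development $(\N, \g)$. The Bochner-type identity $\tfrac{1}{2}\Box\eta = |\nabla \Y|^2 - \Ric_\g(\Y, \Y)$ applied to $\eta \equiv 0$, together with the rigidity conclusion of the monotonicity formula (Theorem~\ref{TH:monotonicity}), giving $\Ric_\g(\Y, \Y) = 0$ and closedness of the $g$-dual of $X$ along $M$, yields $|\nabla \Y|^2 = 0$ and further initial-data rigidity sufficient to force $\nabla \Y = 0$ on $M$. Propagation via Einstein's equations then gives a parallel null vector field on $\N$, and the flatness of the integral null hypersurfaces of $\Y^\perp$ together with the null-dust form of the stress-energy (implying the spacetime DEC) follow as in the spin-case treatment of~\cite{Hirsch-Zhang:2024}. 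The main obstacle is the first step: producing the causal KID and establishing the asymptotic null behavior, Einstein-tensor decay, and sign condition on $\Ric_\g(\Y, \Y)$ needed to apply Theorem~\ref{Th:strong} and Theorem~\ref{Th:uni}, using only the local mass-minimization hypothesis and DEC rather than spinor identities.
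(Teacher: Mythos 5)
Your first two steps track the paper's actual argument: Theorem~\ref{th:impro} (Bartnik's Lagrange multipliers with the modified constraint operator, plus Theorem~\ref{th:causality} to rule out spacelike regions of the lapse--shift pair) produces the causal KID asymptotic to $(E,-P)$ with $\Ric_{\g}(\Y,\Y)=0$ and $G=O^{0,\alpha}(|x|^{-n-\epsilon})$, and then the dichotomy via Theorem~\ref{Th:strong}(2) and Theorem~\ref{Th:uni} forces $\Y$ to be null everywhere, exactly as you describe. Note, however, that the causality of the KID is not simply quoted from \cite{Huang-Lee:2024, Huang-Lee:2025}; it is one of the new ingredients of this paper (Section~\ref{se:spacelike}), so the step you flag as ``the main obstacle'' is indeed unproved in your write-up, though your instinct about where the difficulty lies is right.

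The genuine gap is in your final step. You argue that $\eta\equiv 0$ plus the Bochner identity gives $\g(\bnabla\Y,\bnabla\Y)=0$ and hence $\bnabla\Y=0$. In Lorentzian signature this inference fails: $\g(\bnabla\Y,\bnabla\Y)$ is an indefinite quantity, and its vanishing only kills the components $\g(\bnabla_{E_a}\Y,E_b)$ for $a,b=2,\dots,n$ in a double null frame (this is precisely Lemma~\ref{le:Y}/Lemma~\ref{le:zero}); the components $\omega(\V)=\g(\bnabla_{\V}\Y,\n)=-df-k(X,\V)$ can survive, so the null Killing field need not be parallel. (Your appeal to the constancy case of Theorem~\ref{TH:monotonicity} also does not apply here, since with $\eta\equiv 0$ there are no positive level sets to run the monotonicity on; $\Ric_{\g}(\Y,\Y)=0$ instead comes from the null-dust form of $G$ in Theorem~\ref{th:impro}.) Killing $\omega$ is the content of Theorem~\ref{Th:pp}(1), proved in Section~\ref{S:vector}: one shows $\omega$ is a closed one-form on each leaf $\Sigma$ of the MOTS foliation orthogonal to $X$, writes $\omega=dz$ with $z\to 0$ at infinity, uses $\Ric_{\g}(\Y,\n)=0$ (from $G=\tfrac{|J|_g}{f^2}\Y\otimes\Y$) to get the elliptic equation $\Delta_\Sigma z - g(\nabla_{\hat X}\hat X,\nabla^\Sigma z)=0$, and invokes Reeb stability to know the leaves are $\mathbb R^{n-1}$ so the maximum principle applies. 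Deferring this to the spin-case treatment of \cite{Hirsch-Zhang:2024} does not close the gap, since there the parallel spinor does exactly the work that your argument is missing, and removing the spin hypothesis is the point of the theorem. Finally, the passage from a parallel null $\Y$ to flat induced metrics on the leaves of $\Y^\perp$ (Theorem~\ref{Th:pp}(2), via the Gauss equation and Ricci-flatness plus asymptotic flatness of the leaves) also needs an argument rather than a citation.
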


We say that the \emph{positive mass inequality holds near $(g, k)$} if there exists an open ball around $(g, k)$ in $\C^{2,\alpha}_{-q}(M)\times \C^{1,\alpha}_{-1-q}(M)$ such that every asymptotically flat initial data set $(\bar g, \bar k)$ in it, satisfying the dominant energy condition, has ADM energy–momentum $\overline E \ge |\overline P|$.

To reach the final characterization by pp-waves in Theorem \ref{Th:equality}, we derive the following result, which may be of independent interest.
\begin{Theorem}\label{Th:pp}
Let $(\N, \g)$  be a spacetime admitting a null Killing vector field~$\Y$  and an asymptotically flat initial data set $(M, g, k)$ where $M$ is complete without boundary. Assume  $\Ric_{\g}(\Y, \Y) \le 0$ on $\N$ and the  lapse-shift pair $(f, X)$ of $\Y$ is asymptotically translational on~$M$. Then $\Ric_{\g}(\Y, \Y)\equiv 0$, and the following holds:
\begin{enumerate}
\item Assume $\Ric_{\g}(\Y, \n) = 0$ along $M$, where $\n$ is the unit normal to $M$, then $\Y$ is parallel on the domain of dependence of $M$ in $(\N, \g)$. 
\item Assume $\Y$ is parallel and $\Ric_{\g}(\V, \W) = 0$ for all spacelike vector fields $\V, \W \in \Y^\perp$ where $\Y^\perp$ denotes the orthogonal distribution to $\Y$, then $(\N, \g, \Y)$ is a pp-wave.%Ricci-flat. Moreover, if either $n=3,4$ or $n\ge 5$ and $S$ contains a complete, boundaryless, asymptotically flat  $(n-1)$-dimensional spacelike submanifold, then  the induced metric on $S$ is flat. 
\end{enumerate}
\end{Theorem}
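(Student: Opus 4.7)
\textbf{Vanishing of $\Ric_{\g}(\Y,\Y)$.} A direct Killing calculation gives the pointwise identity $\Box\eta = 2\Ric_{\g}(\Y,\Y) - 2|\bnabla \Y|^2$ on $\N$. Since $\Y$ is globally null, $\eta \equiv 0$, and so $|\bnabla \Y|^2 = \Ric_{\g}(\Y,\Y) \le 0$ pointwise. To upgrade this inequality to equality I would apply, on $M$, the divergence-type elliptic identity for $\eta$ underpinning Proposition~\ref{pr:mono}, whose equality case matches the rigidity statement of Theorem~\ref{TH:monotonicity}. Because $\eta \equiv 0$ on $M$ (hence $\nabla \eta \equiv 0$), the divergence side vanishes pointwise, while the right-hand side is a sum of non-negative terms that includes a positive multiple of $-\Ric_{\g}(\Y,\Y)$ and of $|dX^\flat|^2$. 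Each must therefore vanish pointwise on $M$, yielding simultaneously $\Ric_{\g}(\Y,\Y)\equiv 0$ and $dX^\flat \equiv 0$ on $M$. Finally, $\mathcal{L}_\Y \Ric_{\g} = 0$ combined with the transversality of $\Y$ to the spacelike hypersurface $M$ spreads $\Ric_{\g}(\Y,\Y)\equiv 0$ to an open neighborhood of $M$ in $\N$.

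\textbf{Part (1): parallelism of $\Y$.} The rigidity on $M$ supplies $dX^\flat \equiv 0$, so $X = \nabla \phi$ locally; the null condition $|X| = f$ together with the asymptotically translational behavior $(f, X) \to (a_0, X_0)$ with $|X_0| = a_0$ forces $\phi$ to be asymptotically affine on $M$. Combined with the Killing initial-data constraint $\nabla_{(i} X_{j)} + f k_{ij} = 0$ and the hypothesis $\Ric_{\g}(\Y, \n) = 0$ along $M$, this closes the system of equations on $M$ and pins down every component of $\bnabla \Y$ there, yielding $\bnabla \Y \equiv 0$ on $M$. To propagate off $M$, I would observe that $\bnabla \Y$ satisfies a linear hyperbolic system derived from Kostant's identity $\bnabla_\mu \bnabla_\nu Y_\lambda = R_{\nu\lambda\mu\sigma} Y^\sigma$ coupled with the Killing wave equation $\Box \Y = -\Ric_{\g}(\Y,\cdot)$. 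With trivial Cauchy data $\bnabla \Y|_M = 0$, Cauchy uniqueness gives $\bnabla \Y \equiv 0$ throughout the domain of dependence of $M$.

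\textbf{Part (2): pp-wave.} Parallelism of $\Y$ implies $R_{\g}(\cdot,\cdot,\cdot,\cdot) = 0$ whenever $\Y$ occupies any slot, so $\Y^\perp$ is an integrable distribution whose leaves $N$ are null, totally geodesic hypersurfaces on which $\Y$ is still parallel. Fixing a null frame $\{L_-, \Y, e_1, \ldots, e_{n-1}\}$ at a point of $N$ with $\g(L_-, \Y) = -1$ and $\{e_a\}$ a spacelike basis of the transverse part of $\Y^\perp$, the Ricci contractions collapse to $\Ric_{\g}(e_a, e_b) = \sum_c R_{\g}(e_c, e_a, e_b, e_c)$, which is exactly the Ricci tensor of the induced Riemannian transverse metric on the quotient $N/\langle\Y\rangle$. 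By hypothesis this vanishes, and combined with the Bianchi identities, the $\Y$-slot vanishing, and the asymptotic flatness inherited from $(M, g, k)$ (which rigidifies the transverse metric to be asymptotically Euclidean), this forces the full transverse Riemann tensor to vanish on each leaf $N$. Hence each leaf has flat induced metric and $(\N, \g, \Y)$ is a pp-wave.

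\textbf{Main obstacle.} The central difficulty lies in Part (1): deducing $\bnabla \Y \equiv 0$ on $M$ from the equality rigidity ($\Ric_{\g}(\Y,\Y) = 0$ and $dX^\flat = 0$), the null and asymptotically translational conditions on $(f, X)$, the Killing initial-data constraints, and the lone ambient hypothesis $\Ric_{\g}(\Y, \n) = 0$. Orchestrating these ingredients so as to pin down every component of $\bnabla \Y$ along $M$ is the delicate algebraic-analytic step; the subsequent hyperbolic propagation is then essentially routine, and the pp-wave characterization of Part (2) proceeds by null-frame algebra together with the inherited asymptotic rigidity.
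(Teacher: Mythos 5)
Your Part (1) contains the genuine gap, and it sits exactly where the paper has to do real work. First, your starting point is not available: in the null case the equality/rigidity statement does \emph{not} give $dX^\flat\equiv 0$. With $\eta\equiv 0$ one has $|\hat X|=1$, and the vanishing of the quadratic term $\g^{ik}(X_{k;j}-X_{j;k})\g^{j\ell}(X_{i;\ell}-X_{\ell;i})$ only forces $X_{a;b}=X_{b;a}$ in the directions orthogonal to $X$ (Proposition~\ref{pr:elliptic2}(1)), i.e.\ involutivity of $X^\perp$; closedness of the full one-form is the conclusion only in the timelike regime. So ``$X=\nabla\phi$ locally'' is unjustified, and in general false. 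Second, and more importantly, your key claim that the constraints ``pin down every component of $\bnabla\Y$'' along $M$ is not an algebraic fact, and no pointwise bookkeeping will deliver it: after the null-frame reductions the only possibly nonvanishing part of $\bnabla\Y$ is the one-form $\omega(\V)=\g(\bnabla_\V\Y,\n)=-df-k(X,\cdot)$, and locally there is nothing forcing it to vanish. The paper kills $\omega$ by a genuinely global argument: $\omega$ is closed on each leaf $\Sigma$ of the foliation orthogonal to $X$, so $\omega=d z$ with $z\to 0$ at infinity (asymptotically translational data); Reeb stability makes every leaf $\mathbb R^{n-1}$; the hypothesis $\Ric_{\g}(\Y,\n)=0$ turns \eqref{eq:div-sigma} into $\Delta_\Sigma z-g(\nabla_{\hat X}\hat X,\nabla^\Sigma z)=0$; and the maximum principle gives $z\equiv 0$. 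You flag this as the ``main obstacle'' but never supply it. Your propagation step is also not set up correctly: by \eqref{eq:DDY} the transport/wave system for $F=\bnabla\Y$ has a source $\mathbf R(\cdot,\cdot,\cdot,\cdot)\Y$ which is not homogeneous in $F$, and $\bnabla_{\n}F|_M$ is determined by curvature rather than being free zero Cauchy data, so ``Cauchy uniqueness with trivial data'' does not apply as stated; the paper instead uses invariance under the flow of $\Y$.

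Two smaller points. For the Ricci-vanishing, your detour through the elliptic identity on $M$ is unnecessary (and only gives the conclusion near $M$): since $\eta\equiv 0$ on all of $\N$, the Bochner identity plus the double-null-frame computation of Lemma~\ref{le:Y} shows $\g(\bnabla\Y,\bnabla\Y)$ is a sum of squares equal to $\Ric_{\g}(\Y,\Y)\le 0$, so both vanish pointwise everywhere (Lemma~\ref{le:zero}). For Part (2), your frame computation matches Lemma~\ref{le:sigma}, but ``Bianchi identities\,\dots\ force the full transverse Riemann tensor to vanish'' is not a proof: Ricci-flat does not imply flat for leaf dimension $\ge 4$. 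The paper's Proposition~\ref{pr:Brinkmann} uses low dimension when $n=3,4$, and for $n\ge 5$ the rigidity of complete, boundaryless, asymptotically flat Ricci-flat manifolds (Bishop--Gromov) applied to the spacelike slice of each leaf, followed by constancy of curvature along the integral curves of $\Y$. That is the mechanism you need to cite, not the Bianchi identities.
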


Our proof of Theorem~\ref{Th:equality} extends to initial data sets with compact boundary, particularly in the context of Bartnik's quasi-local mass program.  For an initial data set $(U, g, k)$, we define the \emph{Bartnik boundary data} on $\partial U$ as
\[
	 B(g, k) = \big(g^\intercal, H_g, k(\nu_g)^\intercal, \tr_{g^\intercal} k \big),
\]
where $g^\intercal$ denotes the induced metric, $H_g$ denotes the mean curvature with respect a specified unit normal $\nu_g$, $k(\nu_g)^\intercal$ denotes the tangential components of $k(\nu_g, \cdot)$, and $\tr_{g^\intercal} k$ is the tangential trace.

Let $(\Omega, g_0, k_0)$ be a compact initial data set with nonempty smooth boundary. The \emph{Bartnik mass} is defined as
\[
m_B(\Omega, g_0, k_0) =\inf\big \{ m_{\mathrm {ADM}} (M, g, k): \mbox{admissible extension } (M, g, k)\big\}.
\]
Here, an ``extension'' refers to an asymptotically flat manifold $M$ with boundary $\partial M$ diffeomorphic to $\partial \Omega$. Among other conditions, an admissible extension must match the Bartnik boundary data $B(g, k) = B(g_0, k_0)$. See Definition~\ref{definition:admissible} for the full definition.

An admissible extension $(M, g, k)$ whose ADM mass realizes the Bartnik mass is called a \emph{Bartnik mass minimizer} for $(\Omega, g_0, k_0)$.  Bartnik conjectured that minimizers should be vacuum and stationary (see \cite[p. 2348]{Bartnik:1989}, \cite[Conjecture~2]{Bartnik:1997}, and \cite[p. 236]{Bartnik:2002}). (We remark that the original conjecture was for three dimensions, but it can naturally be posed in general dimensions.)

\begin{conjecture}[Bartnik's Stationary Vacuum Conjecture, 1989]
Let $(\Omega, g_0, k_0)$ be a compact initial data set with nonempty smooth boundary, satisfying the dominant energy condition, and suppose that $(M, g, k)$ is a Bartnik mass minimizer for $(\Omega, g_0, k_0)$. Then $(\Int M, g, k)$ embeds in a vacuum spacetime admitting a  timelike Killing vector field. Or in other words, $(M, g, k)$ embeds in a vacuum and strongly stationary spacetime. 
\end{conjecture}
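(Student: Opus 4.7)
The plan is to combine a variational / Lagrange-multiplier argument producing a Killing Initial Data (KID) on $(M, g, k)$ with the strong maximum principle (Theorem~\ref{Th:strong}) and the uniform stationarity result (Theorem~\ref{Th:uni}) of this paper, restricting attention to the positive Bartnik mass case (per the abstract's phrasing).

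\textbf{Step 1: vacuum and KID existence.} Because $(M, g, k)$ is a Bartnik mass minimizer among admissible extensions with fixed Bartnik boundary data satisfying the dominant energy condition, the ADM mass is critical under all compactly supported deformations $(g_s, k_s)$ preserving these constraints. A standard analysis of the linearized constraint operator $D\Phi_{(g,k)}$, combined with the Regge--Teitelboim-type boundary contribution to the first variation of ADM mass, should force: (i) $\mu = |J| = 0$, so that $(M, g, k)$ is vacuum --- otherwise a Corvino-type localized perturbation inside a strict-DEC region would strictly reduce ADM mass while preserving boundary data; and (ii) the existence of a nontrivial lapse-shift pair $(f, X)$ on $M$ in the kernel of the formal adjoint $D\Phi_{(g,k)}^{*}$, i.e., a KID on $(M, g, k)$, whose asymptotic translation part is aligned with the ADM energy-momentum $(E, P)$. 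This adapts the Bartnik--Corvino--Chru\'sciel--Maerten--Huang--Lee variational framework to the bounded setting.

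\textbf{Step 2: Killing development and global timelikeness.} The vacuum KID produces, via the Killing development (equivalently, the maximal vacuum Cauchy development of $(M, g, k)$ with its induced time symmetry), a vacuum spacetime $(\N, \g) \supset (\Int M, g, k)$ admitting a Killing vector $\Y$ whose lapse-shift along $M$ is $(f, X)$. Positivity of the Bartnik mass means $(E, P)$ is strictly timelike, so $\eta := -\g(\Y, \Y)$ is bounded below by a positive constant in the asymptotic region. Since vacuum gives $\Ric_{\g}(\Y, \Y) = 0$, Theorem~\ref{Th:strong} applies: if $\eta$ vanished at any interior point of $M$, it would attain a minimum there and be forced to vanish identically on $M$ by the strong maximum principle, contradicting asymptotic positivity. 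Hence $\Y$ is timelike on $\Int M$. Theorem~\ref{Th:uni} then promotes this to uniform timelikeness on the exterior of $M$, yielding strong stationarity as desired.

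\textbf{The main obstacle.} The crux is Step 1: producing a nontrivial KID with the correct asymptotic translation on a minimizer \emph{with boundary}, under the inequality constraint $\mu \ge |J|$ and the fixed Bartnik boundary data. Unlike the boundaryless equality case --- where Theorem~\ref{Th:equality} supplies the Killing symmetry via positive mass rigidity --- here one must extract the KID directly from the minimization. This requires a careful Lagrange-multiplier formulation with complementary slackness on the DEC constraint, Fredholm and surjectivity properties of the linearized constraint operator under the Bartnik boundary conditions, and identification of the Lagrange multiplier with the asymptotic translation $(E, P)$. Once Step 1 is secured, the geometric consequences in Step 2 follow cleanly from the monotonicity and maximum-principle machinery developed earlier in the paper.
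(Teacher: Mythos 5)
Your overall strategy (a Lagrange-multiplier KID, then the strong maximum principle, then Theorem~\ref{Th:uni}) parallels the paper's proof of Theorem~\ref{Th:Bartnik}, but as written it has two genuine gaps, and the first one reverses the paper's logical order. Step 1 is not only deferred (you yourself call it ``the main obstacle''), its claim (i) is unjustified: the paper does not, and at this stage cannot, show that the minimizer is vacuum. The variational argument (Theorem~\ref{th:impro}) is run for the \emph{modified} constraint operator $\overline{\Phi}_{(g,k)}$, precisely so that preserving the modified constraint preserves the dominant energy condition (Lemma~\ref{lemma:sigma_preserve}) without any complementary-slackness analysis; combined with the boundary first-variation formula (Lemma~\ref{le:first}) and local surjectivity with Bartnik boundary data (Proposition~\ref{pr:sur}), it yields a lapse--shift pair $(f,X)$ asymptotic to $(E,-P)$ with $(D\overline{\Phi}_{(g,k)})^*(f,X)=0$ and $fJ+|J|_g X=0$, and the resulting Killing development has Einstein tensor $G=\tfrac{|J|_g}{f^2}\,\Y\otimes\Y$ (null dust), not $G=0$. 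Vacuum is a \emph{consequence}, obtained only after $\Y$ is shown timelike (so $J\equiv 0$); the zero-mass pp-wave minimizers show that DEC saturation generically produces null dust rather than vacuum, so your proposed shortcut ``strict DEC somewhere allows a Corvino-type mass reduction, hence $\mu=|J|=0$ first'' is not a known step and is not how the inequality constraint is handled here.

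Second, your Step 2 applies Theorem~\ref{Th:strong} before causality of $\Y$ is known. The strong maximum principle requires $\eta=-\g(\Y,\Y)\ge 0$ on the initial data set; your inference that an interior zero of $\eta$ would be a local minimum presupposes exactly this sign. A priori the multiplier $(f,X)$ could be spacelike on an open set, and excluding this is a separate variational step: Theorem~\ref{th:causality} (via Lemma~\ref{le:spacelike} and again Proposition~\ref{pr:sur}) shows that if $f<|X|_g$ on an open set where $J=0$, one can deform the data, preserving the dominant energy condition and the Bartnik boundary data, so that $E^2-|P|^2$ strictly decreases, contradicting minimality. Only with causality in hand does $E>|P|$ (so $\eta>0$ near infinity) plus Theorem~\ref{Th:strong} give timelikeness on $\Int M$, whence $J\equiv 0$ and $G\equiv 0$, and Theorem~\ref{Th:uni} upgrades this to uniform (strong) stationarity. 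So the geometric endgame you describe is essentially the paper's, but the two load-bearing inputs --- the existence of the KID with the correct asymptotics under Bartnik boundary conditions, and the causality of the resulting Killing field --- are missing from your argument.
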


There has been important partial progress toward the above conjecture. The so-called Riemannian case ($k \equiv 0$) was studied by J.~Corvino~\cite{Corvino:2000} and M. Anderson and J.~Jauregui~\cite{Anderson-Jauregui:2019}. Z.~An~\cite{An:2020} confirmed the conjecture for vacuum initial data in $n=3$. D.~Lee and the second-named author~\cite{Huang-Lee:2024} showed that the minimizer $(M, g, k)$ embeds in a certain spacetime $(\mathbf{N}, \g)$ with a Killing vector field $\Y$. On the other hand,  as discussed in \cite[Example 7]{Huang-Lee:2024}, the pp-wave counterexamples mentioned above also disprove the conjecture for \emph{zero} Bartnik mass in $n>4$.

We confirm the conjecture for \emph{positive} Bartnik mass, as a consequence of the following more general result. This result is more general than Bartnik’s original requirement, as $(M, g, k)$ need not admit a ``fill-in'' $(\Omega, g_0, k_0)$ satisfying the dominant energy condition.

\begin{Theorem}\label{Th:Bartnik}
Let $(M, g, k)$ be an asymptotically flat initial data set with compact boundary and satisfies the dominant energy condition. Assume the ADM energy-momentum $E> |P|$ and $(g, k)\in\C^5_{\mathrm{loc}}(\Int M)\times\C^4_{\mathrm{loc}}(\Int M)$. If  $(g, k)$ minimizes the ADM mass among nearby initial data sets with the same Bartnik boundary data, then $(\Int M, g, k)$ embeds in a vacuum and strongly stationary spacetime. 
\end{Theorem}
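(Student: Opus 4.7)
The plan is to combine the Killing extension produced in \cite{Huang-Lee:2024} with the strict inequality $E > |P|$ and the local tools developed in this paper, namely the strong maximum principle (Theorem~\ref{Th:strong}) and the uniform stationarity result (Theorem~\ref{Th:uni}). First I would apply Huang--Lee to embed $(\Int M, g, k)$ into a spacetime $(\N, \g)$ admitting a Killing vector field $\Y$ whose lapse--shift pair $(f, X)$ on $M$ is asymptotically translational, with translation data equal to the ADM energy--momentum $(E, P)$. The Euler--Lagrange analysis underlying their construction, together with the dominant energy condition and the minimization property, should additionally force the full vacuum constraint equations $\mu = 0$ and $J = 0$ pointwise on $M$; Killing propagation then upgrades this to $G \equiv 0$ on the domain of dependence of $M$, so in particular $\Ric_{\g}(\Y, \Y) = 0$ throughout $(\N, \g)$.

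Next I would show that $\Y$ is uniformly timelike on $M$. Since $E > |P|$ is strict, the asymptotic translation is strictly timelike and $\eta := -\g(\Y, \Y) > 0$ in a neighborhood of infinity in $M$. If $\Y$ were null at some interior point of $M$, then $\eta$ would attain its minimum value $0$ at that point, and Theorem~\ref{Th:strong}(2) would force $\eta \equiv 0$ on $M$, contradicting the asymptotic positivity. A continuity argument on the level set $\{\eta \le 0\}$, using that $\Y$ is timelike at infinity and that any connected component of $\{\eta < 0\}$ would have $\eta = 0$ on its boundary (reducing again to the strong maximum principle and Lemma~\ref{Le:smooth}), rules out $\Y$ being spacelike anywhere; hence $\Y$ is timelike throughout $\Int M$. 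With $\Y$ timelike on $M$, $\Ric_{\g}(\Y, \Y) \le 0$, and $G \equiv 0$ (so the H\"older decay hypothesis is trivial), Theorem~\ref{Th:uni} then yields the uniform constant $c > 0$ with $\g(\Y, \Y) \le -c$ on $M$, which is precisely strong stationarity.

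The main obstacle is the vacuum step: extracting the full constraints $\mu = 0$, $J = 0$, rather than the weaker saturation $\mu = |J|$ that a naive Lagrange--multiplier computation would provide. The strict inequality $E > |P|$ is essential here: it forces the KID $(f, X)$ produced by the Euler--Lagrange equations to have strictly timelike asymptotics, which combined with the Killing equation and the rigidity part of the monotonicity identity of Theorem~\ref{TH:monotonicity} should rule out any null--dust--type saturation of the DEC. This is the timelike counterpart of the pp--wave rigidity appearing in Theorem~\ref{Th:equality} for the case $E = |P|$; once vacuum is in hand, the remainder of the argument is a direct application of the machinery developed in this paper.
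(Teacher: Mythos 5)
Your skeleton (Killing development from \cite{Huang-Lee:2024}, then $E>|P|$ plus the strong maximum principle to get a timelike Killing field) is the paper's, but two of your key steps have genuine gaps. The first is the vacuum step. You propose to establish $\mu=0$, $J=0$ \emph{before} knowing the causal character of $\Y$, by appealing to ``the rigidity part of the monotonicity identity'' to rule out null-dust saturation of the DEC; but Theorem~\ref{TH:monotonicity} and its rigidity statement are only available for a \emph{causal} $\Y$, and even then they yield $\Ric_{\g}(\Y,\Y)=0$, closedness of the dual of $X$, and $\nabla\eta\cdot X=0$ --- nothing forces $J=0$. The paper's mechanism runs in the opposite order: the Lagrange-multiplier analysis with the $\varphi$-modified constraint operators (Theorem~\ref{th:impro}, invoking Theorem~4.1 of \cite{Huang-Lee:2024}) produces, besides $(D\overline{\Phi}_{(g,k)})^*(f,X)=0$, the pointwise relation $fJ+|J|_g X=0$, so that $G=\tfrac{|J|_g}{f^2}\,\Y\otimes\Y$ and $\Y$ is null wherever $J\neq0$. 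Vacuum is then an immediate corollary of timelikeness: once $\Y$ is timelike on $\Int M$, necessarily $J\equiv0$ and $G\equiv0$. Trying to prove vacuum first, as you do, is not supported by any argument in the proposal.

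The second gap is your exclusion of spacelike regions. You run a continuity/maximum-principle argument on components of $\{\eta<0\}$ using Theorem~\ref{Th:strong} and Lemma~\ref{Le:smooth}, but both are proved only for causal $\Y$: the operator with principal symbol $g^{ij}-\hat X^i\hat X^j$ is degenerate elliptic only when $|\hat X|\le1$ and loses ellipticity entirely where $\Y$ is spacelike, so there is no maximum principle to apply there. The paper excludes spacelike behavior by a different, variational mechanism: Theorem~\ref{th:causality} (together with local surjectivity of the modified constraint map with Bartnik boundary data, Proposition~\ref{pr:sur}) shows that if $f<|X|_g$ on an open set --- where automatically $J=0$ by $fJ+|J|_gX=0$ --- one can deform $(g,k)$ preserving the DEC and the Bartnik boundary data while strictly decreasing $E^2-|P|^2$, contradicting minimality. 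This ``mass minimizers cannot have spacelike lapse-shift'' step is essential and absent from your argument. Once causality is in hand, your use of Theorem~\ref{Th:strong}(2) together with $\eta\to E^2-|P|^2>0$ at infinity to conclude timelikeness on $\Int M$ is correct, and the appeal to Theorem~\ref{Th:uni} is then superfluous, since strong stationarity here means precisely that the Killing field is timelike on $\Int M$.
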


We say that $(g, k)$ \emph{minimizes the ADM mass among nearby initial data sets with the same Bartnik boundary data} if there is an open ball around $(g, k)$ in $\C^{2,\alpha}_{-q}(M)\times \C^{1,\alpha}_{-1-q}(M)$ such that every asymptotically flat initial data set $(\bar g, \bar k)$ in it, satisfying the dominant energy condition and $B(\bar g, \bar k) = B(g, k)$, obeys $\bar E \ge |\bar P|$ and ${\overline E}^2 - |\overline P|^2 \ge E^2 - |P|^2$.

An immediate application of Theorem~\ref{Th:Bartnik} is that, in Kerr spacetimes of nonzero angular momentum, the presence of the ergoregion, where the asymptotically translational Killing vector field becomes spacelike, implies that an asymptotically flat initial data set in a Kerr spacetime whose boundary lies on the event horizon cannot minimize the ADM mass.

\bigskip 

\noindent \textbf{Structure of the paper:} 
We give an overview of how Theorems~\ref{Th:uni} and~\ref{Th:strong} are used to establish Theorems~\ref{Th:equality} and~\ref{Th:Bartnik}. Let $(M, g, k)$ be the ADM mass–minimizing initial data set from Theorems~\ref{Th:equality} and~\ref{Th:Bartnik}. Using prior results and arguments in \cite{Huang-Lee:2024}, one can show that  $(M, g, k)$ embeds into the Killing development $(\N, \g, \Y)$ satisfying $\Ric_{\g} (\Y, \Y)=0$. (This is the only place where the $\C^5_{\mathrm{loc}}\times \C^4_{\mathrm{loc}}$ regularity assumption is needed; see Remark~\ref{re:reg}.) We further show, via a variational argument, that $\Y$ must be causal;  otherwise one can construct nearby initial data sets with smaller ADM mass. Hence, Theorems~\ref{Th:uni} and~\ref{Th:strong} imply that $\Y$ is null everywhere if the ADM mass is zero, and strongly, uniformly timelike everywhere if the ADM mass is positive. In the former case, we then apply Theorem~\ref{Th:pp} to further characterize such spacetimes as pp-waves.

The paper is organized as follows. Section~\ref{se:mono} proves the monotonicity formula. The uniform stationarity result for asymptotically flat spacetimes (Theorem~\ref{Th:uni}) is established in Section~\ref{se:st}. Section~\ref{se:max} proves the regularity of the null set (Lemma~\ref{Le:smooth}) and the strong maximum principle (Theorem~\ref{Th:strong}). Section~\ref{se:spacelike} shows that initial data sets with a lapse–shift pair that is spacelike somewhere do not minimize the ADM mass. Theorems~\ref{Th:equality} and~\ref{Th:Bartnik} are proved in Section~\ref{se:min}, and Theorem~\ref{Th:pp} is proved in Section~\ref{S:vector}. Appendix~\ref{sec:af} collects basic definitions and facts for asymptotically flat initial data sets, Appendix~\ref{se:de} reviews classical results for degenerate elliptic PDEs, and Appendix~\ref{se:sur} presents a local surjectivity result for the modified constraint operator with Bartnik boundary data, generalizing some work of Anderson–Jauregui~\cite{Anderson-Jauregui:2019} and An~\cite{An:2020} by a different argument. 
\medskip

\noindent \textbf{Acknowledgements:}  SH was supported in part by the National Science Foundation under Grant No. DMS-1926686, and by the IAS School of Mathematics. Moreover, he is grateful for the hospitality of the Department of Mathematics of the University of Connecticut where this project was initiated. LH was partially supported by the National Science Foundation under Grant No. DMS-2304966 and a Simons Fellowship.

\section{Monotonicity Formulas}\label{se:mono}

We let $n\ge 3$ and $(\N,\g)$ be an $(n+1)$-dimensional connected, time-oriented Lorentzian manifold, referred to as a \emph{spacetime}. Let $U$ be a spacelike  hypersurface in $\N$ with the future-directed unit normal $\n$. Let $g$ be the induced metric and $k = (\bnabla \n)^\intercal$  the second fundamental form.  Along $U$, the vector field $\Y$  decomposes as $\Y = f\n + X$, where $f$ is a scalar function on $U$ and $X$ is a vector field tangent to $U$. The pair $(f, X)$ is referred to as the \emph{lapse-shift pair} of $\Y$. We say a spacetime is asymptotically flat if it admits an asymptotically flat initial data set. 

 We use boldface to denote geometric operators associated with the spacetime metric $\g$. For example, $\bnabla$ and $\bm{\Gamma}$ denote the covariant derivative and Christoffel symbols with respect to $\g$, while $\nabla$ and $\Gamma$ refer to those associated with the induced spatial metric $g$. 
 
 Let  $e_0, e_1, \dots, e_n$ be a local frame of $\N$ with $e_0=\n$  the future-directed unit normal of $U$, and $e_1, \dots, e_n$  tangent to $U$.  Unless otherwise indicated, Greek indices $\alpha, \beta, \dots$ run over all space and time directions ${0, 1, \dots, n}$, while Roman indices $i, j, k, \ell$  run over ${1, 2, \dots, n}$. We denote covariant derivatives by $_{|\alpha}$ for $\g$ and by $_{;i}$ for  $g$, and ordinary derivatives by $_{,\alpha}$. We use the Einstein summation convention, summing over repeated indices, unless otherwise indicated.

We say that a vector field $\Y$ on $\N$ is \emph{Killing}  if $\Y$ satisfies $\mathscr L_\Y \g=0$, or equivalently $\Y_{\alpha|\beta} + \Y_{\beta|\alpha}=0$, and it is not identically zero.  It is straightforward to verify that a Killing vector field satisfies
\begin{align}\label{eq:DDY}
	\Y_{\alpha| \beta \gamma} = \mathbf R_{\beta\alpha \gamma \delta} \Y^\delta
\end{align}
where  the Riemann curvature tensor of $\g$ is defined by 
\[
\mathbf R_{\beta\alpha \gamma \delta}=\mathrm{Rm}_{\g} ( e_\beta, e_\alpha, e_\gamma, e_\delta): = \g\big( (\bnabla_{e_\beta} \bnabla_{e_\alpha} -\bnabla_{e_\alpha} \bnabla_{e_\beta} - \bnabla_{[e_\beta, e_\alpha]}) e_\gamma, e_\delta\big)
\]
and the Ricci curvature by $\Ric_{\g}(e_{\alpha}, e_{\gamma} )=  \g^{\beta \delta} \mathbf R_{\beta\alpha \gamma \delta}$.

For a Killing vector field $\Y$, the Bochner identity implies that
\begin{align}\label{eq:bochner}
	\Box_{\g} \g(\Y, \Y)  = 2 \g(\bnabla \Y, \bnabla \Y) - 2\Ric_{\g}(\Y, \Y),
\end{align}
where $\Box_{\g} := \Div_{\g} \bnabla$ is the wave operator and 
\begin{align}\label{eq:grad}
\g (\bm \nabla \Y, \bm \nabla \Y) :=   \g^{\alpha \gamma} \g^{\beta \delta} \Y_{\alpha|\beta} \Y_{\gamma|\delta} = -  \g^{\alpha \gamma} \g^{\beta \delta} \Y_{\alpha|\beta} \Y_{\delta|\gamma}=- \Y^\gamma_{\;\;|\beta}  \Y_{\;\;| \gamma}^\beta.
\end{align}

We would like to express \eqref{eq:bochner} in terms of initial data. To this end, it is convenient to choose coordinates adapted to the Killing vector field $\Y$.  Let $(f, X)$ be the lapse-shift of $\Y$, and suppose $f \neq 0$. Then there exists a tubular neighborhood $\U$ of $U$ in $\N$ such that
\begin{align}\label{eq:g}
\g = -f^2  du^2 + g_{ij} (dx^i + X^i du)(dx^j + X^j du),
\end{align}
where $u$ is the coordinate along the flow lines of $\Y$, so that $\Y = \frac{\partial}{\partial u}$, $(x^1, \dots, x^n)$ is a local coordinate chart on $U$, and $f, X, g_{ij}$ are independent of the coordinate~$u$. 

For later use in Section~\ref{S:vector}, we record the following identity, as shown in \cite[p.~794]{Huang-Lee:2024}:
\begin{align}\label{eq:n}
\bnabla_{\n} \n = f^{-1} \nabla f.
\end{align} We include the derivation for completeness. Using $\n = -f \bnabla u$, for any vector $e_i$ tangent to~$U$, 
\begin{align*}
	\g( \bnabla_{\n} \n, e_i) &= - \g(\bnabla_{\n} (f \bnabla u), e_i) = - \g(\bnabla_{f\n} \bnabla u, e_i)= - \bnabla^2 u(f\n, e_i) \\
	&= - \bnabla_{e_i} \bnabla_{f\n} u + \bnabla_{\bnabla_{e_i} (f\n)} u = e_i(f) \bnabla_{\n} u + f\bnabla_{\bnabla_{e_i} \n} u \\
	&= f^{-1} e_i(f)
\end{align*}
where we  used $\bnabla_{f\n} u=1$ and  that $\bnabla_{e_i} \n$ is tangent to $U$.

We set 
\[
\eta = -\g(\Y, \Y) = f^2 - |X|_g^2 \quad \mbox{ and } \quad \hat X = \tfrac{X}{f}.
\] 
In these coordinates,  the components of $\g$,  $\g^{-1}$ are given by 
\begin{align*}
	\g_{uu} &= - \eta , \quad \g_{ui} = g_{ij} X^j =: X_i, \quad  \g_{ij} = g_{ij},\\ \g^{uu} &= -\tfrac{1}{f^2}, \quad \g^{ui} =  \tfrac{X^i}{f^2}, \quad \g^{ij} =  g^{ij} - \hat X^i \hat X^j.
\end{align*}
The Christoffel symbols are given by 
\begin{align*}
	{\mathbf \Gamma}_{uu}^u &=\tfrac{1}{2f^2}\nabla \eta \cdot X\\
	{\mathbf \Gamma}_{uu}^k &=  \tfrac{1}{2} \g^{kj} \eta_{,j}\\
	\mathbf \Gamma_{ju}^u &= \tfrac{1}{2f^2} \eta_{,j} + \tfrac{1}{2f^2} X^k(X_{k;j} - X_{j;k})\\
		{\mathbf \Gamma}_{ui}^j &= \tfrac{1}{2} \g^{jk} (X_{k;i} - X_{i;k} ) - \tfrac{1}{2f^2} \eta_{,i} X^j \\
		{\mathbf \Gamma}_{ij}^k &=   \tfrac{1}{2f^2} X^k(X_{i;j} + X_{j;i})  + \Gamma_{ij}^k.
\end{align*}

\begin{lemma}\label{le:div}
Let  $(\N, \g)$ be a spacetime admitting a   Killing vector field $\Y$ and an initial data set $(U, g, k)$. Let $(f, X)$ be the lapse-shift pair of $\Y$ and suppose $f\neq 0$.   If $\W = W^0 \frac{\partial}{\partial u} + W$ along $U$ where $W^0$ is independent of $u$ and $W = W^j \frac{\partial}{\partial x^j}$ is tangent to $U$. Then 
\begin{align}\label{eq:divw}
	\Div_{\g} \W = \tfrac{1}{f} \Div_g (  f W).
\end{align}
\end{lemma}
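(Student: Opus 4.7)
The plan is to work directly in the adapted coordinates $(u, x^1, \dots, x^n)$ of \eqref{eq:g} where $\Y = \partial/\partial u$, and evaluate both sides of \eqref{eq:divw} using the coordinate formula for divergence,
\[
	\Div_{\g} \W = \tfrac{1}{\sqrt{|\det \g|}} \partial_\alpha \big(\sqrt{|\det \g|}\, W^\alpha\big).
\]
Implicit in the statement is that $\W$ extends to the tubular neighborhood $\U$ with $W^0$ and $W^j$ independent of $u$ (so that $\W$ is invariant under the flow of $\Y$); this is the natural extension consistent with the symmetry of $\g$.

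First, I would compute the volume density $\sqrt{|\det \g|}$. Writing $\g$ in block form with the $uu$-block equal to $-f^2 + |X|_g^2$, the $uj$-block equal to $X_j = g_{ij}X^i$, and the spatial block equal to $g_{ij}$, the standard block determinant formula (using $X_i g^{ij} X_j = |X|_g^2$) gives $\det \g = -f^2 \det g$, hence $\sqrt{|\det \g|} = f \sqrt{\det g}$. This is the one computational step with any content; everything else is bookkeeping.

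Next, I would substitute into the divergence formula to obtain
\[
	\Div_{\g} \W = \tfrac{1}{f \sqrt{\det g}}\Big[ \partial_u\big( f \sqrt{\det g}\, W^0\big) + \partial_j\big( f \sqrt{\det g}\, W^j\big)\Big].
\]
Since $f$, $g_{ij}$ (and hence $\sqrt{\det g}$), and $W^0$ are independent of $u$ in these Killing-adapted coordinates, the $\partial_u$ term vanishes. The remaining expression is
\[
	\Div_{\g}\W = \tfrac{1}{f} \cdot \tfrac{1}{\sqrt{\det g}}\partial_j\big(\sqrt{\det g}\, f W^j\big) = \tfrac{1}{f} \Div_g(f W),
\]
which is exactly \eqref{eq:divw}. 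No genuine obstacle arises; the proof is essentially a one-line calculation once the block determinant identity $\det \g = -f^2 \det g$ is in hand and the $u$-invariance of the coefficients is recorded.
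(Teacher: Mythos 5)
Your proof is correct. It takes a mildly different route from the paper: the paper computes $\Div_{\g}\W = W^\alpha_{,\alpha} + \bm\Gamma^\alpha_{\alpha\gamma}W^\gamma$ directly from the explicit Christoffel symbols of $\g$ in the adapted coordinates (using, in effect, that $\bm\Gamma^u_{uu}+\bm\Gamma^j_{ju}=0$ and $\bm\Gamma^u_{uj}+\bm\Gamma^k_{kj}=\tfrac{1}{2f^2}\eta_{,j}+\tfrac{1}{f^2}X^kX_{k;j}+\Gamma^k_{kj}=\tfrac{f_{,j}}{f}+\Gamma^k_{kj}$), whereas you use the density form of the divergence together with the block-determinant identity $\det\g=-f^2\det g$. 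The two are equivalent, since the contracted Christoffel symbol is exactly $\partial_\beta\log\sqrt{|\det\g|}$, but your version isolates the single identity $\sqrt{|\det\g|}=f\sqrt{\det g}$ as the whole content of the lemma and avoids listing Christoffel symbols; the paper's version has the advantage that the same Christoffel table is reused in the subsequent computation of $\g(\bnabla\Y,\bnabla\Y)$ in Proposition~\ref{pr:elliptic2}. Your remark that $\W$ should be extended to the tubular neighborhood with $u$-independent components is the right reading of the hypothesis and matches how the lemma is applied in the paper (e.g.\ to $\bnabla w$ for $u$-independent $w$), and your use of the $u$-independence of $f$, $g_{ij}$, $W^0$ to kill the $\partial_u$ term is exactly the step the paper performs by dropping $W^0_{,u}$ and $\bm\Gamma^\alpha_{\alpha u}W^0$.
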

\begin{proof}
We compute 
\begin{align*}
\Div_{\g} \mathbf W &=W^\alpha_{,\alpha} + \bm \Gamma^\alpha_{\alpha \gamma}  W^\gamma\\
&=
 W^j_{,j} + \bm \Gamma_{\alpha u}^\alpha W^0+  \bm \Gamma_{\alpha j}^\alpha W^j\\
& =W^j_{,j}  + \Big( \tfrac{1}{2f^2} \eta_j  +  \tfrac{1}{f^2} X^kX_{k;j} + \Gamma_{kj}^k\Big) W^j\\
	&=  \Div_g W + \Big(  \tfrac{1}{2f^2} \eta_j + \tfrac{1}{f^2} X^kX_{k;j} \Big)  W^j\\
	&= \tfrac{1}{f} \Div_g ( fW).
\end{align*}
\end{proof}

\begin{lemma}\label{le:elliptic1}
Let  $(\N, \g)$ be a spacetime admitting a   Killing vector field $\Y$ and an initial data set $(U, g, k)$. Let $(f, X)$ be the lapse-shift pair of $\Y$ and suppose $f\neq 0$. %\begin{enumerate}
%\item If $\W = W^0 \frac{\partial}{\partial u} + W$ along $U$ where $W^0$ is independent of $u$ and $W = W^j \frac{\partial}{\partial x^j}$ is tangent to $U$. Then 
%\begin{align}\label{eq:divw}
%	\Div_{\g} \W = \tfrac{1}{f} \Div_g (  f W).
%\end{align}
Suppose $w$ is a scalar function on $\N$  that is independent of~$u$. Then  the following identity holds on $U$:
\begin{align*}%\label{eq:elliptic}
\Box_{\g} w&= \tfrac{1}{f} \Div_g \Big(  f \big(\nabla w -( \nabla w \cdot \hat X)\hat X\big)\Big),
\end{align*}
where the dot denotes the $g$-inner product. 
\end{lemma}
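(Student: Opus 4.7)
The plan is to compute the spacetime gradient $\bnabla w$ explicitly in the adapted coordinates from \eqref{eq:g}, verify that it fits the hypotheses of Lemma~\ref{le:div}, and then invoke that lemma to rewrite $\Box_\g w = \Div_\g \bnabla w$ as a spatial divergence.

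First I would use the coordinate formula $\g^{uu} = -f^{-2}$, $\g^{ui} = X^i/f^2$, $\g^{ij} = g^{ij} - \hat X^i \hat X^j$, together with the assumption $w_{,u}=0$, to compute the components
\[
(\bnabla w)^u = \g^{u\beta} w_{,\beta} = \g^{uj} w_{,j} = \tfrac{1}{f}\, \hat X \cdot \nabla w, \qquad (\bnabla w)^j = \g^{j\beta} w_{,\beta} = (\nabla w)^j - (\nabla w \cdot \hat X)\hat X^j.
\]
Hence $\bnabla w = W^0 \tfrac{\partial}{\partial u} + W$, where $W^0 = f^{-1}(\hat X \cdot \nabla w)$ and $W = \nabla w - (\nabla w \cdot \hat X)\hat X$ is tangent to~$U$. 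Because $f$, $X$, $g$, and $w$ are all independent of $u$ in these coordinates, so is $W^0$, which puts $\W := \bnabla w$ precisely in the form required by Lemma~\ref{le:div}.

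Applying Lemma~\ref{le:div} then gives
\[
\Box_\g w = \Div_\g \bnabla w = \tfrac{1}{f}\Div_g(fW) = \tfrac{1}{f}\Div_g\Big(f\big(\nabla w - (\nabla w \cdot \hat X)\hat X\big)\Big),
\]
which is the stated identity.

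There is no real obstacle: the step that requires care is only bookkeeping, namely using the adapted coordinates correctly so that the $u$-component of $\bnabla w$ does not depend on $u$ and the spatial part reproduces the projection of $\nabla w$ orthogonal to $\hat X$ with respect to $g$. Once these are checked, Lemma~\ref{le:div} does all the work.
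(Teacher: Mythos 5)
Your proposal is correct and follows essentially the same route as the paper: compute $\bnabla w$ in the adapted coordinates from \eqref{eq:g}, note that its $u$-component is $u$-independent while its spatial part is $\nabla w - (\nabla w\cdot\hat X)\hat X$, and then apply Lemma~\ref{le:div}. The component computations check out, so there is nothing to add.
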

\begin{proof}

Computing $\bm \nabla w$ in local coordinates gives
\begin{align*}
	 \bm \nabla w = \g^{\alpha \beta} \tfrac{\partial w}{\partial x^\alpha} \tfrac{\partial}{\partial x^\beta}=\tfrac{X^i}{f^2} \tfrac{\partial w}{\partial x^i} \tfrac{\partial}{\partial u} + \g^{ij} \tfrac{\partial w}{\partial x^i} \tfrac{\partial}{\partial x^j} = W^0\tfrac{\partial}{\partial u}+ W
\end{align*}
where  $W^0$ is independent of $u$, and 
$W =  \g^{ij} \tfrac{\partial w}{\partial x^i} \tfrac{\partial}{\partial x^j}  = \nabla w - (\nabla w \cdot \hat X)\hat X$ is tangent to $U$. The desired identity follows Lemma~\ref{le:div}. 
\end{proof}

\begin{proposition}\label{pr:elliptic2}
Let  $(\N, \g)$ be a spacetime admitting a  Killing vector field $\Y$ and an initial data set $(U, g, k)$. Let $(f, X)$ be the lapse-shift pair of $\Y$ and suppose $f\neq 0$.  Then 
\begin{enumerate}[leftmargin=*]
\item  This identity holds on $U$:
\begin{align} \label{eq:nablaY}
\begin{split}
\g (\bm \nabla \Y , \bm \nabla \Y)&=- \tfrac{1}{2f^2} |\nabla \eta|^2 -   \tfrac{1}{f} (\nabla \eta)^i\hat X^k (X_{k;i} - X_{i;k}) \\
&\quad +  \tfrac{1}{4} \g^{ik} (X_{k;j} - X_{j;k}) \g^{j\ell} (X_{ i; \ell} - X_{\ell;i}).
\end{split}
\end{align}
Moreover, if $\Y$ is null, then the last term $\tfrac{1}{4} \g^{ik} (X_{k;j} - X_{j;k}) \g^{j\ell} (X_{ i; \ell} - X_{\ell;i})$  vanishes if and only if  the orthogonal distribution $X^\perp$ is involutive. If $\Y$ is timelike, then the last term vanishes if and only if the $g$-dual one-form of $X$ is closed.  
\item If $\eta\neq 0$,  then
\begin{align} \label{eq:nablaY2}
\begin{split}
	&\g (\bm \nabla \Y , \bm \nabla \Y)
	=  -\tfrac{1}{2\eta} \big(|\nabla\eta|^2 - (\nabla \eta\cdot \hat X)^2\big)\\
	&\qquad + \tfrac{1}{2}\Big(- \tfrac{| X|}{f^2 (1-|\hat X|^2)^{\frac{1}{2}} } \nabla \eta^\perp +(1-|\hat X|^2)^{\frac{1}{2}}\sum_{a=1}^{n-1} (X_{n;a} - X_{a;n})e_a \Big)^2\\
	&\qquad  +  \tfrac{1}{4}\sum_{a,b=1}^{n-1}(X_{a;b}- X_{b;a})^2, 
\end{split}
\end{align}
where  the terms in the last two lines are expressed with respect to a local orthonormal frame $\{ e_1, \dots, e_{n-1}, e_n\}$ on $(U, g)$ with $e_n = \frac{X}{|X|_g}$ if $X\neq 0$, and the notation $\nabla \eta^\perp$ denotes the component of $\nabla \eta$ orthogonal to $X$. 

Moreover, if $\Y$ is  timelike, the last two terms are zero if and only if $g$-dual one-form of $X$ is closed and $\nabla \eta \cdot X=0$. 
\end{enumerate}
\end{proposition}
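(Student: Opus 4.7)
The plan is a direct computation in the adapted coordinates \eqref{eq:g}, in which $\Y = \partial/\partial u$ and $\eta, f, X_i, g_{ij}$ are $u$-independent. The key opening move is that, for a Killing field, $\bnabla \Y$ is antisymmetric as a $(0,2)$-tensor, so \eqref{eq:grad} reduces $\g(\bnabla \Y, \bnabla \Y)$ to $-\Y^\gamma_{\,|\beta}\Y^\beta_{\,|\gamma}$. Since $\Y^\alpha = \delta^\alpha_u$ in these coordinates, this equals $-\bm{\Gamma}^\gamma_{\beta u}\bm{\Gamma}^\beta_{\gamma u}$.

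For Part (1), I would substitute the Christoffel symbols listed after \eqref{eq:n} and split the sum into four sectors according to whether each of $\beta, \gamma$ is $u$ or spatial. The $\eta$-squared contributions combine, using $\g^{ij} = g^{ij} - \hat X^i \hat X^j$, to $-\tfrac{1}{2f^2}|\nabla \eta|^2$. The mixed sector produces the cross term $-\tfrac{1}{f}(\nabla\eta)^i \hat X^k (X_{k;i} - X_{i;k})$, where the identity $\hat X^i \hat X^k (X_{k;i} - X_{i;k}) = 0$ (by symmetry of $\hat X^i \hat X^k$ against antisymmetry of $X_{k;i} - X_{i;k}$) removes a spurious $\hat X^i \hat X^j$-contribution coming from $\g^{ij}$. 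The purely spatial sector yields $\tfrac{1}{4}\g^{ik}\g^{j\ell}(X_{k;j} - X_{j;k})(X_{i;\ell} - X_{\ell;i})$. To analyze when this last term vanishes, I would work in a $g$-orthonormal frame with $e_n = X/|X|$, in which $\g^{ij}$ is diagonal with entries $1$ (for $a < n$) and $\eta/f^2$. In the null case ($\eta = 0$), $\g^{ij}$ is rank-deficient with kernel spanned by $X$, so the quadratic vanishes iff $X_{a;b} - X_{b;a} = 0$ for $a, b < n$, which by Frobenius is equivalent to involutivity of $X^\perp$. In the timelike case ($\eta > 0$), $\g^{ij}$ is positive definite, so the quadratic vanishes iff all $X_{i;j} - X_{j;i} = 0$, i.e., the $g$-dual one-form of $X$ is closed.

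For Part (2), I would again use the orthonormal frame (the case $X = 0$ being immediate). Writing $\theta_{ij} := X_{i;j} - X_{j;i}$, direct computation gives $|\nabla \eta|^2 - (\nabla \eta \cdot \hat X)^2 = |\nabla \eta^\perp|^2 + \tfrac{\eta}{f^2}\eta_{,n}^2$; the cross term from \eqref{eq:nablaY} reduces to $-\tfrac{|X|}{f^2}\sum_a \eta_{,a} \theta_{na}$ (only $k = n$ contributes, and the $i = n$ terms vanish by antisymmetry); and the spatial quadratic decomposes as $\tfrac{1}{4}\sum_{a, b < n} \theta_{ab}^2 + \tfrac{\eta}{2f^2}\sum_a \theta_{na}^2$. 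Completing the square in $\theta_{na}$ yields $\tfrac{\eta}{2f^2}\sum_a \theta_{na}^2 - \tfrac{|X|}{f^2}\sum_a \eta_{,a} \theta_{na} = \tfrac{1}{2}\bigl|\tfrac{\sqrt\eta}{f}\sum_a \theta_{na} e_a - \tfrac{|X|}{f\sqrt\eta}\nabla \eta^\perp\bigr|^2 - \tfrac{|X|^2}{2f^2\eta}|\nabla \eta^\perp|^2$. Combining the residual $-\tfrac{|X|^2}{2f^2\eta}|\nabla \eta^\perp|^2$ with $-\tfrac{1}{2f^2}|\nabla \eta|^2$ and using the identity $\tfrac{1}{2f^2} + \tfrac{|X|^2}{2f^2\eta} = \tfrac{1}{2\eta}$ recovers the first term of \eqref{eq:nablaY2}, while the perfect square matches (after substituting $(1 - |\hat X|^2)^{1/2} = \sqrt\eta/f$) the middle term. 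For the vanishing analysis in the timelike case, the last sum forces $X_{a;b} = X_{b;a}$ for $a, b < n$, and the middle square forces $|X|\eta_{,a} = \eta\,\theta_{na}$ for $a < n$; analyzing these relations in conjunction with the Killing structure yields the stated equivalence with $dX^\flat = 0$ together with $\nabla \eta \cdot X = 0$.

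The main obstacle is the extensive index bookkeeping in Part (1), which requires patient accounting of all contributions from the four Christoffel sectors and careful use of $\g^{ij} = g^{ij} - \hat X^i \hat X^j$. In Part (2), the conceptual step is identifying the right completion of the square; once one adopts the frame adapted to $X$, the asymmetry between the $X$-parallel and $X$-perpendicular directions forces the specific cross-square decomposition.
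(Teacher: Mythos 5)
Your computation follows the paper's proof essentially verbatim: the same adapted coordinates with $\Y=\partial/\partial u$ so that $\Y^\alpha_{\;|\beta}=\bm\Gamma^\alpha_{\beta u}$, the same sector-by-sector Christoffel bookkeeping (including the cancellation of the $\hat X^i\hat X^k(X_{k;i}-X_{i;k})$ term by antisymmetry) leading to \eqref{eq:nablaY}, the same $X$-adapted orthonormal frame with $\g^{ij}$ diagonal (entries $1$ and $\eta/f^2$) for the vanishing analysis in part (1), and the same completion of the square, using $1-|\hat X|^2=\eta/f^2$ and the coefficient identity $\tfrac{1}{2f^2}+\tfrac{|X|^2}{2f^2\eta}=\tfrac{1}{2\eta}$, to reach \eqref{eq:nablaY2}. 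The one step you leave schematic --- passing from the pointwise vanishing conditions $X_{a;b}=X_{b;a}$ and $|X|\,\eta_{,a}=\eta\,(X_{n;a}-X_{a;n})$ to the stated characterization ``$X^\flat$ closed and $\nabla\eta\cdot X=0$'' in part (2) --- is likewise not spelled out in the paper's own proof, which stops after deriving the identity.
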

\begin{proof}
Using the  coordinates from \eqref{eq:g} and $\Y = \tfrac{\partial}{\partial u}$, we have  
$\Y^\alpha_{|\beta}=  \Y^\alpha_{,\beta} + \bm \Gamma_{\beta \gamma}^\alpha \Y^\gamma = \bm\Gamma_{\beta u}^\alpha$  and 
\begin{align*}
	\Y^u_{|u} &= \bm \Gamma_{uu}^u =  \tfrac{1}{2f} \nabla \eta \cdot \hat X\\
	\Y^u_{|i} &=\bm  \Gamma_{iu}^u = \tfrac{1}{2f^2} \eta_{,i} + \tfrac{1}{2f } \hat X^k(X_{k;i} - X_{i;k})\\
	\Y^i_{|u} &=\bm  \Gamma_{uu}^i = \tfrac{1}{2} \g^{ij} \eta_{,j} \\
	\Y^i_{|j} &=\bm \Gamma_{ju}^i = \tfrac{1}{2}\g^{ik} (X_{k;j} - X_{j;k}) - \tfrac{1}{2f} \hat X^i \eta_j.
\end{align*}
From \eqref{eq:grad}, we prove \eqref{eq:nablaY}:
\begin{align}\label{eq:gradY}
\begin{split}
	&\g (\bm \nabla \Y, \bm \nabla \Y)  = -(\Y^u_{;u})^2 - 2 \Y^u_{;i} \Y^i_{;u} - \Y^i_{;j} \Y^j_{;i}\\
	&\; = - \tfrac{1}{4f^2}(\nabla \eta\cdot \hat X)^2 - 2\Big(\tfrac{1}{2f^2} \eta_{,i} + \tfrac{1}{2f } \hat X^k (X_{k;i} - X_{i;k}) \Big)\Big( \tfrac{1}{2}\g^{ij} \eta_j\Big)\\
	& \quad \quad - \Big(  \tfrac{1}{2}\g^{ik} (X_{k;j} - X_{j;k}) - \tfrac{1}{2f} \hat X^i \eta_j\Big) \Big(\tfrac{1}{2} \g^{jk} (X_{k;i} - X_{i;k}) - \tfrac{1}{2f}\hat X^j \eta_i\Big)\\
&\; = - \tfrac{1}{2f^2} |\nabla \eta|^2 -  \tfrac{1}{f} (\nabla \eta)^i\hat X^k (X_{k;i} - X_{i;k})  +  \tfrac{1}{4} \g^{ik} (X_{k;j} - X_{j;k}) \g^{j\ell} (X_{ i; \ell} - X_{\ell;i}).
\end{split}
\end{align}

We compute $\g^{ik} (X_{k;j} - X_{j;k}) \g^{j\ell} (X_{ i; \ell} - X_{\ell;i})$ in a local orthonormal frame ${e_1, \dots, e_n}$ on $U$ with $e_n = \frac{X}{|X|_g}$ whenever $X\neq 0$:
\begin{align*}
    &\g^{ik} (X_{k;j} - X_{j;k}) \g^{j\ell} (X_{ i; \ell} - X_{\ell;i}) \notag \\
    & =( \delta^{ik} -|\hat X|^2\delta^{in} \delta^{kn}) (X_{k;j} - X_{j;k}) (\delta^{j\ell} - |\hat X|^2 \delta^{jn} \delta^{\ell n} )(X_{ i; \ell} - X_{\ell;i}) \notag\\
    &=\sum_{i,j=1}^n(X_{i;j}- X_{j;i})^2 - 2 |\hat X|^2\sum_{j=1}^n (X_{n;j} - X_{j;n})^2\notag\\
    &= \sum_{a,b=1}^{n-1}(X_{a;b}- X_{b;a})^2 +  2  (1-|\hat X|^2) \sum_{a=1}^{n-1}(X_{n;a} - X_{a;n})^2.
\end{align*}
Therefore, if $\Y$ is null, the vanishing of the above term implies $X_{a;b}=X_{b;a}$ for all $a, b=1, \dots, n-1$, and hence $X^\perp$ is involutive. If $\Y$ is timelike; that is, $|\hat X| <1$, then the vanishing of the above term implies $X_{i;j}=X_{j;i}$ for all $i,j=1,\dots, n$. 

Next, assuming  $\eta \neq 0$, the last two terms in the right hand side of \eqref{eq:gradY} become:
\begin{align*}
&- \tfrac{1}{f} (\nabla \eta)^i\hat X^k (X_{k;i} - X_{i;k}) +  \tfrac{1}{4} \g^{ik} (X_{k;j} - X_{j;k}) \g^{j\ell} (X_{ i; \ell} - X_{\ell;i})\\
&= -\sum_{a=1}^{n-1}\tfrac{1}{f^2} (\nabla \eta^\perp )^a |X|_g (X_{n;a} - X_{a;n}) \\
&\quad +  \tfrac{1}{4}\sum_{a,b=1}^{n-1}(X_{a;b}- X_{b;a})^2 +  \tfrac{1}{2}  (1-|\hat X|^2) \sum_{a=1}^{n-1}(X_{n;a} - X_{a;n})^2\\
& = - \tfrac{|\hat X|^2 }{2f^2 (1-|\hat X|^2)}  |\nabla \eta^\perp|^2 \\
&\quad +\tfrac{1}{2} \Big( - \tfrac{|X|}{f^2 (1-|\hat X|^2)^{\frac{1}{2}} } \nabla \eta^\perp + (1-|\hat X|^2)^{\frac{1}{2}} \sum_{a=1}^{n-1} (X_{n;a} - X_{a;n})e_a \Big)^2\\
&\quad +  \tfrac{1}{4}\sum_{a,b=1}^{n-1}(X_{a;b}- X_{b;a})^2.
\end{align*}
By substituting $|\nabla \eta^\perp|^2 = |\nabla \eta|^2 - |\hat X|^{-2} (\nabla \eta \cdot \hat X)^2$, we can rewrite the first term on the right hand side of the previous identity as  
\[
- \tfrac{|\hat X|^2 }{2f^2 (1-|\hat X|^2)}  |\nabla \eta^\perp|^2 = - \tfrac{1}{2\eta} (|\nabla \eta|^2 |\hat X|^2 - (\nabla \eta \cdot \hat X)^2).
\]
Combining this with the first term on the right-hand side of \eqref{eq:gradY}  yields
\begin{align*}
	- \tfrac{1}{2f^2} |\nabla \eta|^2 - \tfrac{1}{2\eta} (|\nabla \eta|^2 |\hat X|^2 - (\nabla \eta \cdot \hat X)^2) = -\tfrac{1}{2\eta} (|\nabla\eta|^2 - (\nabla \eta\cdot \hat X)^2).
\end{align*}
Then the desired identity follows. 
\end{proof}

\begin{remark}
We have expressed the first two terms in \eqref{eq:bochner} using initial data. The  term $\Ric_{\g}(\Y, \Y)$  there can also be expressed using initial data as
\begin{align*}
	\Ric_{\g}(\Y, \Y) &= f(\Delta_g f + |k|_g^2 f + \tr_g (\mathscr L_X k) +2 g(X, J)) \\
	&\quad  + X^i X^j\left(R_{ij} + (\tr_g k) k_{ij} - 2k_{i\ell} k^{\ell}_j - f^{-1} (f_{;ij} +(\mathscr L_X k)_{ij} )\right) 
\end{align*}
by the computations in \cite[Lemma B.1]{Huang-Lee:2024}.
\end{remark}

Recall that \eqref{eq:bochner} can be written in terms of $\eta = -\g(\Y,\Y)$ as
\begin{align} \label{eq:box-eta}
	\Box_{\g} \eta = -2\g(\bnabla \Y, \bnabla \Y) + 2 \Ric_{\g} (\Y, \Y).
\end{align}
As direct consequences of Lemma~\ref{le:elliptic1} and Proposition~\ref{pr:elliptic2}, we obtain identities for $\eta$ using the initial data, which play a fundamental role in this paper.

\begin{corollary}
Let  $(\N, \g)$ be a spacetime admitting a  Killing vector field $\Y$ and an initial data set $(U, g, k)$. Let $(f, X)$ be the lapse-shift pair of $\Y$ and suppose $f\neq 0$. Then 
\begin{align}\label{eq:equ}
\begin{split}
\Div_g \big(  f (\nabla \eta -( \nabla \eta \cdot \hat X)\hat X)\big)&=  \tfrac{1}{f} |\nabla \eta|^2+2  (\nabla \eta)^i\hat X^k (X_{k;i} - X_{i;k}) - \phi_1 
\end{split}
\end{align}
where 
\[
 \phi_1 =   \tfrac{1}{2} f\g^{ik} (X_{k;j} - X_{j;k}) \g^{j\ell} (X_{ i; \ell} - X_{\ell;i}) - 2f \Ric_{\g} (\Y, \Y).
 \] 
 If, furthermore,  $\eta\neq 0$ and we set the vector field $V =  f (\nabla \eta -( \nabla \eta \cdot \hat X)\hat X) $, then 
 \begin{align} \label{eq:div-form}
 	\Div_g V=  \tfrac{ 1}{\eta}V\cdot \nabla \eta  - \phi_2
 \end{align}
where 
\begin{align*}
	\phi_2 &=  f\Big(- \tfrac{|X|}{f^2 (1-|\hat X|^2)^{\frac{1}{2}} } \nabla \eta^\perp +(1-|\hat X|^2)^{\frac{1}{2}}\sum_{a=1}^{n-1} (X_{n;a} - X_{a;n})e_a \Big)^2\\
	&\quad  +  \tfrac{1}{2}f\sum_{a,b=1}^{n-1}(X_{a;b}- X_{b;a})^2 - 2f \Ric_{\g} (\Y, \Y).
\end{align*}
\end{corollary}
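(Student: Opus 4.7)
The plan is to combine the Bochner-type identity \eqref{eq:box-eta} with Lemma~\ref{le:elliptic1} applied to $w = \eta$, and then substitute the two expressions for $\g(\bnabla \Y, \bnabla \Y)$ provided by Proposition~\ref{pr:elliptic2}. The only preliminary observation needed is that $\eta = -\g(\Y, \Y)$ is independent of the flow parameter $u$; this is automatic from $\Y = \partial_u$ being Killing, so both lapse-shift components $f, X$ and the spatial metric $g$ in the coordinates \eqref{eq:g} are $u$-independent, and therefore so is $\eta = f^2 - |X|_g^2$. This makes Lemma~\ref{le:elliptic1} applicable.

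For the first identity \eqref{eq:equ}, I would apply Lemma~\ref{le:elliptic1} with $w = \eta$ to obtain
\[
\Box_{\g}\eta = \tfrac{1}{f}\Div_g\bigl(f(\nabla \eta - (\nabla \eta \cdot \hat X)\hat X)\bigr),
\]
then multiply \eqref{eq:box-eta} by $f$ and substitute the expression for $\g(\bnabla \Y, \bnabla \Y)$ given by \eqref{eq:nablaY} in Proposition~\ref{pr:elliptic2}(1). The $-\tfrac{1}{2f^2}|\nabla\eta|^2$ piece contributes $\tfrac{1}{f}|\nabla\eta|^2$ after multiplication by $-2f$, the cross term yields $2(\nabla\eta)^i \hat X^k(X_{k;i}-X_{i;k})$, and the remaining curvature/antisymmetric-derivative contributions combine precisely into $-\phi_1$ once we fold in the $+2f\Ric_{\g}(\Y,\Y)$ term.

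For \eqref{eq:div-form}, the same procedure works starting from the alternative expression \eqref{eq:nablaY2} valid when $\eta \neq 0$. The crucial algebraic recognition is that
\[
V \cdot \nabla \eta = f\bigl(|\nabla \eta|^2 - (\nabla \eta \cdot \hat X)^2\bigr),
\]
so the first term $-\tfrac{1}{2\eta}(|\nabla \eta|^2 - (\nabla \eta \cdot \hat X)^2)$ in \eqref{eq:nablaY2}, after being multiplied by $-2f$ through the Bochner identity, becomes exactly $\tfrac{1}{\eta}V \cdot \nabla \eta$. The two remaining squared terms, together with the Ricci contribution, assemble into $-\phi_2$ by inspection.

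There is no real obstacle here; the content of the corollary is genuinely an algebraic repackaging of the Bochner identity \eqref{eq:box-eta}, the divergence identity of Lemma~\ref{le:elliptic1}, and the two forms of $\g(\bnabla \Y, \bnabla \Y)$ computed in Proposition~\ref{pr:elliptic2}. The only mild subtlety is the bookkeeping of signs and factors of $f$ and $\eta$, and in particular noticing the identity for $V \cdot \nabla \eta$ that converts the scalar prefactor $\tfrac{1}{\eta}$ in the Bochner expression into the divergence-form right-hand side required by \eqref{eq:div-form}.
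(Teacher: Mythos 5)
Your proposal is correct and matches the paper's route: the paper derives this corollary directly by combining the Bochner identity \eqref{eq:box-eta} with Lemma~\ref{le:elliptic1} applied to $w=\eta$ and the two expressions for $\g(\bnabla\Y,\bnabla\Y)$ in Proposition~\ref{pr:elliptic2}, exactly as you do. Your bookkeeping (multiplying by $-2f$, the $u$-independence of $\eta$, and the identity $V\cdot\nabla\eta = f\bigl(|\nabla\eta|^2-(\nabla\eta\cdot\hat X)^2\bigr)$) checks out.
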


%\begin{proof}

%Using \eqref{eq:bochner}, \eqref{eq:elliptic}, and \eqref{eq:nablaY}, we obtain
%\begin{align*}
%\Div_g \big(  f (\nabla \eta -( \nabla \eta \cdot \hat X)\hat X)\big)&\le   \tfrac{1}{f} |\nabla \eta|^2  +   2 (\nabla \eta)^i\hat X^k (X_{k;i} - X_{i;k}) \\
%&=  \Big( \tfrac{1}{f} (\nabla \eta)_i  +   2 \hat X^k (X_{k;i} - X_{i;k})\Big) \big(\nabla \eta\big)^i
%\end{align*}

%Using  \eqref{eq:bochner}  and \eqref{eq:nablaY2}, we obtain
%\begin{align*}
%	\Div_g \big(  f (\nabla \eta -( \nabla \eta \cdot \hat X)\hat X )\big)&\le \tfrac{1}{f} |\nabla \eta|^2  + \tfrac{f}{\eta} \big(|\nabla \eta|^2 |\hat X|^2 - (\nabla \eta \cdot \hat X)^2\big)\\
%	&= \tfrac{ f}{\eta} \big(\nabla \eta - (\nabla \eta \cdot \hat X) \hat X\big)\cdot \nabla \eta.
%\end{align*}
%\end{proof}

%Observe that the equation is elliptic if $\Y$ is causal, i.e., equivalently, if $\eta \ge 0$. The equation becomes degenerately elliptic precisely on the null set  $Z=\{ p\in \N: \eta(p)=0\} = \{ p \in \N: |\hat X(p)|=1\}$. (Recall $\hat X = \frac{X}{f}$.) 

%As mentioned above, this proposition can also be expressed entirely in terms of initial data sets, cf. Appendix \ref{Appendix D}.

We prove Theorem~\ref{TH:monotonicity}, which follows from the more general theorem below.

\begin{theorem}[Monotonicity formula]\label{th:monotonicity}
Let $(\N, \g)$ be a spacetime admitting a causal Killing vector field $\Y$ with $\Ric_{\g}(\Y, \Y)\le 0$. Let $(U, g, k)$ be an initial data set in $\N$. Let $(f, X)$ be the lapse–shift pair of $\Y$ with $f>0$.  

For regular values $t>0$ of $\eta$, assume the level sets  $\Sigma_t = \{ x \in U : \eta(x) = t \}$ are compact without boundary. Define 
\begin{align}
 F(t) &= \int_{\Sigma_t} f\big(\nabla \eta - (\nabla \eta \cdot \hat X)\hat X\big)\cdot \tfrac{\nabla \eta}{|\nabla \eta|}\, d\sigma, \\
 M(t) &= t^{-1}F(t) = t^{-1}  \int_{\Sigma_t} f\big(\nabla \eta - (\nabla \eta \cdot \hat X)\hat X\big)\cdot \tfrac{\nabla \eta}{|\nabla \eta|}\, d\sigma.\notag
 \end{align}
 Then
\begin{enumerate}
\item $M(t)\ge \int_{\Sigma_t} f^{-1} |\nabla \eta| \, d\sigma$ for  regular values $t$ of $\eta$, and $M(t) \in W^{1,1}_{\loc} ((0, \infty))$.
\item  $M(t)$ is non-increasing in  $t$. \label{it:mono}
\item  \label{it:con} $M(t)$ is constant for $t\in (a,b)$ if and only if,  on the region $\Omega_b\setminus \Omega_a$, we have  $\Ric_{\g}(\Y, \Y)=0$, the $g$-dual one-form of $X$ is closed, and $\nabla \eta \cdot X=0$.
\item \label{it:limit} If $\lim_{t\to 0^+} F(t) = 0$, then  
\[
 \lim_{t\to 0^+} M(t) \le \limsup_{t\to 0^+} \int_{\Sigma_t} \left(\tfrac{1}{f}(\nabla \eta)_i - 2\hat X^k (X_{k;i} - X_{i;k})\right)\tfrac{(\nabla \eta)^i}{|\nabla \eta|}\, d\sigma.
\]
\end{enumerate}
\end{theorem}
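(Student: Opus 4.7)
The plan is to base everything on the divergence identities \eqref{eq:div-form} and \eqref{eq:equ}. The crucial inputs are that, under the hypotheses, both $\phi_1$ and $\phi_2$ are non-negative (each is a sum of squared terms together with $-2f\Ric_{\g}(\Y,\Y) \geq 0$), and that causality of $\Y$ forces $|\hat X|^2 = 1 - \eta/f^2 \in [0,1]$. I will run the monotonicity and rigidity from \eqref{eq:div-form} and the limit statement~(4) from \eqref{eq:equ}.

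For the pointwise lower bound in~(1), on $\Sigma_t$ with $t>0$ Cauchy--Schwarz gives $(\nabla \eta \cdot \hat X)^2 \leq (1 - t/f^2)|\nabla \eta|^2$, so the integrand of $F(t)$ satisfies $\tfrac{f}{|\nabla\eta|}(|\nabla \eta|^2 - (\nabla \eta \cdot \hat X)^2) \geq \tfrac{t|\nabla \eta|}{f}$; integrating over $\Sigma_t$ and dividing by $t$ yields the bound on $M(t)$. For regularity and~(2), fix regular values $0 < a < b$; by hypothesis $\Sigma_a, \Sigma_b$ are smooth compact hypersurfaces without boundary enclosing $\Omega = \{a < \eta < b\}$. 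Applying the divergence theorem to $V$ on $\Omega$, substituting \eqref{eq:div-form}, and using the co-area formula gives
\[
F(b) - F(a) = \int_a^b \frac{F(s)}{s}\, ds - \int_{\Omega} \phi_2\, dv.
\]
Since Sard's theorem provides a dense set of regular values of full measure, this identity lets me extend $F$ absolutely continuously to $(0,\infty)$, whence $M \in W^{1,1}_{\loc}((0,\infty))$. Differentiating in $b$ yields $F'(t) = F(t)/t - \int_{\Sigma_t}\phi_2/|\nabla\eta|\, d\sigma$ a.e., so
\[
t^2 M'(t) = tF'(t) - F(t) = -t \int_{\Sigma_t}\frac{\phi_2}{|\nabla\eta|}\, d\sigma \leq 0,
\]
proving~(2).

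For~(3), $M$ constant on $(a,b)$ forces $\phi_2 \equiv 0$ on $\Omega_b\setminus \Omega_a$. Since $\phi_2$ is a sum of two non-negative squared terms and $-2f\Ric_{\g}(\Y,\Y) \geq 0$, each piece vanishes separately: the Ricci term gives $\Ric_{\g}(\Y,\Y) = 0$, while the squared terms, via the last assertion of Proposition~\ref{pr:elliptic2}(2) applied in the timelike region $\{\eta > 0\}$ (where $|\hat X| < 1$), give that the $g$-dual one-form of $X$ is closed and $\nabla \eta \cdot X = 0$. For~(4), I would repeat the argument with \eqref{eq:equ} in place of \eqref{eq:div-form}: the divergence theorem on $\{t_0<\eta<t\}$ and co-area give
\[
F(t) - F(t_0) = \int_{t_0}^t \int_{\Sigma_s}\left[\tfrac{|\nabla\eta|}{f} + \tfrac{2\hat X^k(X_{k;i}-X_{i;k})(\nabla \eta)^i}{|\nabla\eta|}\right] d\sigma\, ds - \int_{\{t_0<\eta<t\}}\phi_1\, dv.
\]
Dropping the non-positive $-\int\phi_1\, dv$, sending $t_0 \to 0^+$ under the assumption $F(t_0) \to 0$, dividing by $t$, and taking $\limsup$ of the averaged boundary integrand produces the bound in~(4).

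The main obstacle is making the divergence theorem and co-area formula rigorous across possible critical values of $\eta$; I expect to restrict all computations to the dense full-measure set of regular values (via Sard) and then extend by absolute continuity. A more conceptually delicate point lies in~(3), where extracting $\nabla \eta \cdot X = 0$ together with the closedness of the $g$-dual one-form of $X$ from the vanishing of the squared terms in $\phi_2$ relies on the adapted orthonormal frame $\{e_1,\dots,e_{n-1},e_n = X/|X|_g\}$ from Proposition~\ref{pr:elliptic2}(2) and crucially on the strict timelikeness $|\hat X| < 1$ that holds on $\{\eta > 0\}$.
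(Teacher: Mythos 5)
Your proposal is correct and takes essentially the same route as the paper: the same Cauchy--Schwarz estimate for item (1), the same divergence theorem plus coarea argument applied to \eqref{eq:div-form} to get the $W^{1,1}_{\loc}$ regularity and the differential inequality $t^2M'(t)=-t\int_{\Sigma_t}\phi_2/|\nabla\eta|\,d\sigma\le 0$, the same rigidity via $\phi_2\equiv 0$ and Proposition~\ref{pr:elliptic2}, and the same use of \eqref{eq:equ} with $\phi_1\ge 0$ dropped for item (4). The only point you leave implicit is the converse direction of item (3), which is immediate from your derivative formula together with the ``if and only if'' in Proposition~\ref{pr:elliptic2}(2).
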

\begin{proof}

Applying the Cauchy-Schwarz inequality yields, for a regular value $t$ of $\eta$, 
\begin{align*}
	M(t) &= \int_{\Sigma_t}\tfrac{1}{\eta |\nabla \eta|} f(|\nabla \eta|^2 - (\nabla \eta \cdot \hat X)^2)\, d\sigma\\
	& \ge \int_{\Sigma_t} \tfrac{1}{\eta |\nabla \eta|} f(1-|\hat X|^2) |\nabla \eta|^2 \, d\sigma= \int_{\Sigma_t} f^{-1} |\nabla \eta| \, d\sigma.
\end{align*}

In the following computation, we denote the vector field 
\[
	V =  f (\nabla \eta -( \nabla \eta \cdot \hat X)\hat X).
\]
Define the sublevel set $\Omega_t = \{ x\in U: \eta(x) < t\}$. For any regular values $0< a<t<b$ of $\eta$, we apply the divergence theorem in the first line and the coarea formula in the second line below to obtain
\begin{align*}
 F(t) &= F(a) +  \int_{\Omega_{t}\setminus \Omega_{a} }  \Div_g V \, d\mu\\
 &= F(a) + \int_a^t \int_{\Sigma_s}  \tfrac{1}{|\nabla \eta|} \Div_g V \, d\sigma ds.
 \end{align*}
 It implies that $F\in L^1([a,b])$ and for almost every $t$, 
 \[
 F'(t) = \int_{\Sigma_t}  \tfrac{1}{|\nabla \eta|} \Div_gV\, d\sigma \in L^1([a,b]). 
\]
It is standard to verify that $F'(t)$ is the weak derivative of $F$, which shows that $F(t)\in W^{1,1}([a,b])$, and consequently $M(t) \in W^{1,1}([a,b])$ as well. 

%To see that $B'(t)\in W^{1,1}([a,b])$, let $\chi\in \C^\infty_c((a,b))$ be a test function, we compute 
%\begin{align*}
%	 - \int_a^b \chi'(t) B(t) \, dt &= -\int_a^b  \int_{\Sigma_t} \chi'(\eta)V\cdot \tfrac{\nabla \eta}{|\nabla \eta| } \, d\sigma\\
%	 &= - \int_{\Omega_b \setminus \Omega_a } \chi'(\eta) V\cdot \nabla \eta \, dv\qquad  (\mbox{by coarea formula})\\
%	 &= - \int_{\Omega_b \setminus \Omega_a }  V\cdot \nabla (\chi(\eta))\, dv \quad \\
%	 &= \int_{\Omega_b \setminus \Omega_a }  \Div V (\chi(\eta))\, dv\\
%	 &= \int_a^b \chi(t) \int_{\Sigma_t}  \tfrac{1}{|\nabla \eta|} \Div V \,d\sigma dt \qquad (\mbox{by coarea formula}).
%\end{align*}

To prove Item~\eqref{it:mono}, we apply \eqref{eq:div-form} and the divergence theorem gives
\begin{align}\label{eq:B}
\begin{split}
 F(t) - F(a)&= \int_{\Omega_{t}\setminus\Omega_a }  \Div_g V \, d\mu\\
 &= \int_{\Omega_{t}\setminus \Omega_a } \tfrac{ 1}{\eta}V \cdot \nabla \eta\, d\mu - \int_{\Omega_{t}\setminus \Omega_a } \phi_2 \, d\mu
 \\
	&= \int_{a}^{t} \int_{\Sigma_s} \tfrac{ 1}{\eta}V\cdot \tfrac{\nabla \eta}{|\nabla \eta|}  \, d\sigma ds- \int_{\Omega_{t}\setminus \Omega_a } \phi_2 \, d\mu\\
	&= \int_{a}^{t} M(s) \, ds  -\int_{\Omega_{t}\setminus \Omega_a } \phi_2 \, d\mu.
\end{split}
\end{align}
Hence,
\begin{align} \label{eq:M}
	tM(t)  -a M(a) = \int_{a}^{t} M(s)\, ds-\int_{\Omega_{t}\setminus \Omega_a } \phi_2 \, d\mu.
\end{align}
It follows that for almost every $t>0$,  
\[
M(t) + tM'(t) \le M(t) \quad \Longrightarrow \quad M'(t)\le 0.
\] 
 Because $M(t)\in W^{1,1}_{\loc}$, it has an absolutely continuous representative, which implies that $M(t)$ is non-increasing.

For Item~\eqref{it:con}, we analyze \eqref{eq:M}. If $M(t)$ is constant on $[a,b]$, the $\phi_2 = 0$ on $\Omega_{a}\setminus \Omega_b$, and the desired characterization follows from Proposition~\ref{pr:elliptic2}. Conversely, if $\phi_2=0$ on $\Omega_{a}\setminus \Omega_b$, then $M'(t)\equiv 0$, and thus $M(t)$ is constant.

For Item~\eqref{it:limit}, we integrate \eqref{eq:equ} on $\Omega_t \setminus \Omega_{a}  $ and apply the divergence theorem and co-area formula,
\begin{align*}
	F(t) - F(a) &=  \int_{\Omega_t\setminus \Omega_{a} }  \Div_g \big(  f (\nabla \eta -( \nabla \eta \cdot \hat X)\hat X)\big) \, d\mu\\
	&\le \int_{\Omega_t\setminus \Omega_{a} } \left(  \tfrac{1}{f} (\nabla \eta)_i +   2 \hat X^k (X_{k;i} - X_{i;k})\right)  (\nabla \eta)^i \, d\mu\\
	&=  \int_{a}^t \int_{\Sigma_s} \left(  \tfrac{1}{f} (\nabla \eta)_i +   2 \hat X^k (X_{k;i} - X_{i;k})\right) \tfrac{(\nabla \eta)^i}{|\nabla \eta|}\, d\sigma ds.
\end{align*}
If $\lim_{a\to 0^+} F(a)=0$, then 
\begin{align*}
	M(t) &= t^{-1}F(t)\le   t^{-1} \int_{0}^t \int_{\Sigma_s} \left(  \tfrac{1}{f} (\nabla \eta)_i +   2 \hat X^k (X_{k;i} - X_{i;k})\right) \tfrac{(\nabla \eta)^i}{|\nabla \eta|}\, d\sigma ds.
\end{align*}
The desired result follows L'H\^opital's rule. 
\end{proof}

\begin{remark}
In the case $X\equiv 0$, that is, when $\Y$ is orthogonal to $U$,  we have $M(t) =2\int_{\Sigma_t} |\nabla f|\, d\sigma $, and the previous theorem recovers the classical fact that if $\Delta f\le 0$, then $\int_{\Sigma_t} |\nabla f|$ is non-increasing. 
\end{remark}

\begin{proposition}[Localized monotonicity formula]\label{pr:mono}
Let  $(\N, \g)$ be a spacetime admitting a causal Killing vector field $\Y$ with $\Ric_{\g}(\Y, \Y)\le 0$.  Let $(U, g, k)$ be an initial data set in $\N$, and let $(f, X)$ be the lapse-shift pair of $\Y$ with $f>0$. Let $K \subset U$ be a compact smooth $n$-dimensional manifold with 
piecewise $\C^1$ boundary such that $\partial K = \partial_1 K \sqcup \partial_2 K$, where $\partial_1 K$ consists of regular level sets of $\eta$ and $\partial_2 K$ consists of hypersurfaces  tangent to $\nabla \eta - (\nabla \eta \cdot \hat{X}) \hat{X}$.  Suppose $K$ is foliated by level sets of $\Sigma_t$ transvese to $\partial_2 K$.

Denote the level set of $\eta$ on $U$ by $\Sigma_t = \{ x\in  U: \eta (x) = t\}$. For regular values $t>0$ of $\eta$, we define 
\begin{align*}
 F_K(t) &= \int_{\Sigma_t\cap K }f (\nabla \eta - (\nabla \eta \cdot \hat X)\hat X) \cdot \tfrac{\nabla \eta}{|\nabla \eta|}\, d\sigma \\
 M_K(t) &= t^{-1}F_K(t).
\end{align*}
Then 
\begin{enumerate}
\item  $M_K(t) \ge \int_{\Sigma_t} f^{-1} |\nabla \eta| \, d\sigma$ for regular values $t$ of $\eta$, and $M_K(t)\in W^{1,1}_{\loc}((0, \infty))$. 
\item $M_K(t)$ is non-increasing in $t$. 
\item $M_K(t)$ is constant for $t\in (a,b)$ if and only if, on the region $(\Omega_b\setminus \Omega_a )\cap K$, we have $\Ric_{\g}(\Y, \Y)=0$, the $g$-dual one-form of $X$ is closed, and $\nabla \eta \cdot X=0$. 
\item If $\lim_{t\to 0^+} F_K(t) =  0$, then 
\begin{align*}
	 \lim_{t\to 0^+} M_K(t) \le  \limsup_{t\to 0^+} \int_{\Sigma_t\cap K}  \left( \tfrac{1}{f} (\nabla \eta)_i  -   2 \hat X^k (X_{k;i} - X_{i;k})\right)  \tfrac{(\nabla \eta)^i}{|\nabla \eta|} \, d\sigma.
\end{align*}
\end{enumerate}
\end{proposition}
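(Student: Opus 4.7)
The plan is to run the proof of Theorem~\ref{th:monotonicity} essentially verbatim on the restricted region $(\Omega_t\setminus\Omega_a)\cap K$. The only new ingredient is the treatment of the lateral boundary $\partial_2 K$ when invoking the divergence theorem. By hypothesis, $\partial_2 K$ is tangent to the vector field $V := f(\nabla\eta - (\nabla\eta\cdot\hat X)\hat X)$, so its outward unit normal $\nu$ satisfies $V\cdot\nu = 0$, and the flux of $V$ through $\partial_2 K$ vanishes.

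With this observation in hand, Item (1) (the Cauchy--Schwarz lower bound) is a pointwise estimate on $\Sigma_t\cap K$ and carries through from the proof of Theorem~\ref{th:monotonicity} word for word, using $\eta = f^2(1-|\hat X|^2) = t$ on $\Sigma_t$. For Items (2) and (3), I apply the divergence theorem to $V$ on $(\Omega_t\setminus\Omega_a)\cap K$, whose boundary decomposes as $(\Sigma_t\cap K)\sqcup(\Sigma_a\cap K)\sqcup(\partial_2 K\cap(\Omega_t\setminus\Omega_a))$. The first two pieces produce $F_K(t) - F_K(a)$ (with outward normals $\pm \nabla\eta/|\nabla\eta|$), the third vanishes by the tangency above, and substituting~\eqref{eq:div-form} together with the coarea formula yields the analog of~\eqref{eq:B},
\begin{align*}
F_K(t) - F_K(a) = \int_a^t M_K(s)\, ds - \int_{(\Omega_t\setminus\Omega_a)\cap K} \phi_2\, d\mu.
\end{align*}
Since $\phi_2 \ge 0$ (the first two pieces are manifest sums of squares and $-2f\,\Ric_{\g}(\Y,\Y) \ge 0$ by hypothesis), the same differentiation argument as in Theorem~\ref{th:monotonicity} yields $M_K' \le 0$ almost everywhere, hence Item (2); and constancy of $M_K$ on $(a,b)$ is equivalent to $\phi_2\equiv 0$ on $(\Omega_b\setminus\Omega_a)\cap K$, which by the rigidity clause of Proposition~\ref{pr:elliptic2} is equivalent to the three conditions listed in Item (3). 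The $W^{1,1}_{\loc}$ regularity of $M_K$ follows just as before from the $L^1_{\loc}$-representation of $F_K'$.

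For Item (4), I instead integrate identity~\eqref{eq:equ} over $(\Omega_t\setminus\Omega_a)\cap K$: the lateral flux through $\partial_2 K$ again vanishes by tangency, dropping the non-negative part of $\phi_1$ (permitted by $\Ric_{\g}(\Y,\Y)\le 0$) gives the one-sided inequality, and the coarea formula converts the bulk term into a time integral. Sending $a\to 0^+$ under the hypothesis $\lim F_K = 0$ and applying L'Hôpital's rule to $t^{-1}F_K(t)$ produces the stated $\limsup$ bound. The only (mild) obstacle is verifying that $\partial_2 K$ contributes no flux to $V$, which is precisely what the tangency hypothesis on $\partial_2 K$ is designed to ensure; no further difficulty arises beyond what is already resolved in the global case.
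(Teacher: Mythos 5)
Your proposal is correct and coincides with the paper's own proof: the paper likewise runs the argument of Theorem~\ref{th:monotonicity} on $(\Omega_t\setminus\Omega_a)\cap K$, with the single new observation that the flux of $f(\nabla\eta-(\nabla\eta\cdot\hat X)\hat X)$ through $\partial_2 K$ vanishes by the tangency hypothesis, after which Items (1)--(4) follow exactly as in the global case.
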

\begin{proof}
The proof follows a similar argument as in the proof of Theorem~\ref{th:monotonicity}, except that we now integrate  $ \Div_g \big(  f (\nabla \eta -( \nabla \eta \cdot \hat X)\hat X)\big) $ over the set $(\Omega_t \setminus \Omega_a) \cap K$.  Applying the divergence theorem to $(\Omega_t \setminus \Omega_a) \cap K$ introduces an additional boundary integral term on $\partial_2 K$, which vanishes because we assume $\partial_2 K$ is tangent to $\nabla \eta - (\nabla \eta \cdot \hat{X}) \hat{X}$. For example, in the computations parallel to \eqref{eq:B}, we have, for $\Sigma_t, \Sigma_a $ both intersecting $K$, 
\begin{align*}
F_K(t) - F_K(a)&= \int_{(\Omega_t\setminus\Omega_a ) \cap K }  \Div_g \big(  f (\nabla \eta -( \nabla \eta \cdot \hat X)\hat X)\big) \, d\sigma\\
&\quad  + \int_{(\Omega_t\setminus\Omega_a ) \cap \partial_2 K} f (\nabla \eta - (\nabla \eta \cdot \hat X)\hat X) \cdot \nu\, d\sigma \\
&= \int_{(\Omega_t\setminus\Omega_a ) \cap K }  \Div_g \big(  f (\nabla \eta -( \nabla \eta \cdot \hat X)\hat X)\big) \, d\sigma
\end{align*}
where $\nu$ is the outward unit normal to $\partial_2 K$. The rest of the argument is similar. 
\end{proof}

\section{Uniform Stationarity}\label{se:st}

We let $(M, g, k)$ be an asymptotically flat initial data set in a spacetime $(\N, \g)$ admitting a Killing vector field $\Y$. We first focus on the case 
 where $\Y$ is asymptotically null, so the lapse-shift pair $(f, X)$ on $M$ is asymptotic to a null vector $(a,b)$. The first lemma shows that $\eta := -\g(\Y, \Y)$ exhibits a better  decay rate, using the equation
 \begin{align}\tag{\ref{eq:equ}}
\begin{split}
&\Div_g \big(  f (\nabla \eta -( \nabla \eta \cdot \hat X)\hat X)\big)=  \tfrac{1}{f} |\nabla \eta|^2+2  (\nabla \eta)^i\hat X^k (X_{k;i} - X_{i;k}) \\
&\qquad \qquad -\tfrac{1}{2} f\g^{ik} (X_{k;j} - X_{j;k}) \g^{j\ell} (X_{ i; \ell} - X_{\ell;i}) + 2f \Ric_{\g} (\Y, \Y),
\end{split}
\end{align}
where $\hat X = \frac{X}{f}$.

 \begin{lemma}
 Let  $(\N, \g)$ be a spacetime admitting a Killing vector field $\Y$.  Let $(M, g, k)$ be an asymptotically flat initial data set in $(\N, \g)$, and let $(f, X)$ be the lapse-shift pair of $\Y$ that is asymptotically translational to $(a, b)$ where $a= |b|\neq 0$. If the Einstein tensor $G= O(|x|^{-n-\epsilon})$ for some $\epsilon>0$, then for some $q_1>0$,  
 \begin{align}\label{eq:eta}
 	\eta(x) = O^{2,\alpha}(|x|^{2-n-q_1}). 
 \end{align}
 \end{lemma}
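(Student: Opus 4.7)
The plan is to treat equation \eqref{eq:equ} as a divergence-form elliptic equation for $\eta$, whose principal symbol degenerates in the $\hat X$-direction at infinity, and bootstrap from the baseline decay implied by the asymptotic translational hypothesis up to the sharper rate $|x|^{2-n-q_1}$.

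First I record the baseline decay. The hypothesis that $(f, X)$ is asymptotically translational to $(a, b)$ with $a = |b|$, together with the Killing equations, gives $f - a, X - b \in O^{2,\alpha}(|x|^{-q})$ for the asymptotically flat decay rate $q > (n-2)/2$, so $\eta = f^2 - |X|_g^2 = O^{2,\alpha}(|x|^{-q})$. Next I bound the right-hand side of \eqref{eq:equ}: the three terms quadratic in $\nabla \eta$ or in the skew-symmetric part of $\nabla X$ are each $O(|x|^{-2q-2})$; the remaining term $2f\,\Ric_{\g}(\Y, \Y)$ is $O(|x|^{-n-\epsilon})$, because the hypothesis $G = O(|x|^{-n-\epsilon})$ traces to $\Ric_{\g} = O(|x|^{-n-\epsilon})$ and $\Y$ is bounded on $M$. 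For $\epsilon$ small enough that $2q + 2 > n + \epsilon$, the RHS of \eqref{eq:equ} is therefore $O(|x|^{-n-\epsilon})$.

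Second I analyze the principal part. In asymptotically flat coordinates aligned so that $\hat b = e_n$ at infinity, the principal operator is the flat $(n-1)$-dimensional Laplacian $a\,\Delta^\perp := a\sum_{i=1}^{n-1}\partial_i^2$; for each fixed $x^n$ this is the Poisson operator on $\mathbb R^{n-1}$ with Green's function $|x^\perp|^{3-n}$ (a logarithm when $n=3$). Convolving the $O(|x|^{-n-\epsilon})$ source against this Green's function and splitting into the regions $|y^\perp| \ll |x|$, $|y^\perp| \sim |x^\perp|$, and $|y^\perp| \gg |x|$, each region contributes at most $|x|^{2-n-\epsilon}$ (with a log factor absorbed into a slightly smaller $q_1 > 0$ when $n = 3$), pointing to $\eta = O(|x|^{2-n-q_1})$.

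The main obstacle is that the principal operator degenerates in the $\hat X$-direction at infinity, so standard elliptic decay estimates do not apply directly. I would resolve this by invoking the degenerate-elliptic regularity results from Appendix~\ref{se:de} to transfer the Green's-function decay to the actual solution $\eta$, with the quadratic RHS terms treated perturbatively (a single iteration suffices given $2q + 2 > n + \epsilon$). The $C^{2,\alpha}$ upgrade then follows from Schauder estimates applied in the elliptic (perpendicular) directions, giving the claimed $O^{2,\alpha}(|x|^{2-n-q_1})$ bound.
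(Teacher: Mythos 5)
Your reduction of the right-hand side of \eqref{eq:equ} to $O(|x|^{-n-q_1})$ is correct and matches the paper, but the core of your argument---recovering the decay of $\eta$ by convolving that source against the Green's function of the slice operator $a\,\Delta^\perp$---has a genuine gap. The solution of the degenerate equation is not given by such a convolution: you must also control the ``homogeneous part,'' i.e.\ rule out solutions of the actual variable-coefficient, degenerate operator that decay only at the a priori rate $|x|^{-q}$, where $q$ may be barely above $\tfrac{n-2}{2}$, far weaker than the target $|x|^{2-n-q_1}$. No uniqueness or Liouville-type statement for the degenerate operator on the exterior region is supplied, and the tool you invoke to bridge this, Appendix~\ref{se:de}, contains only a Hopf boundary lemma and a propagation-of-zeros lemma (maximum-principle statements on compact domains), not weighted decay or Schauder estimates at infinity for degenerate operators. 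Moreover, the perturbations you must absorb are not only the quadratic right-hand side but also coefficient errors of size $O(|x|^{-q})$ multiplying second derivatives of $\eta$, including in the degenerate $x^n$-direction; handling these ``perturbatively in a single iteration'' is exactly the estimate that is not standard in the degenerate setting.

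The paper avoids this analysis entirely at this stage: it first uses the fact that $(f,X)$ satisfies a \emph{non-degenerate} elliptic (Hessian/KID-type) system---via Remark~\ref{re:Rij} and the Beig--Chru\'sciel expansion of Theorem~\ref{th:expan} in harmonic coordinates---to obtain the refined expansion $\eta = A|x|^{2-n} + O^{2,\alpha}(|x|^{2-n-q_1})$ for a single constant $A$; this step encodes nontrivial cancellations coming from $a=|b|$ and the structure of the expansions, which your baseline $O(|x|^{-q})$ bound cannot see. Only then does the degenerate equation \eqref{eq:equ} enter, and only to kill one constant: since $(\Delta_0 - \partial_n^2)|x|^{2-n} = -(2-n)|x|^{-n}\bigl(1 - n(x^n)^2/|x|^2\bigr)$ has a non-vanishing angular profile at order $|x|^{-n}$ while the right-hand side is $O(|x|^{-n-q_1})$, comparison forces $A=0$. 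If you want to keep a purely PDE route, you would need to supply the missing asymptotic theory for the degenerate operator (for instance a Liouville-type theorem on the slices together with control of the terms coupling the slices), which is substantially more than your proposal or the cited appendix provides.
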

 \begin{proof}
 In the proof, we let $q_1>0$, which may vary from equation to equation.  From the expansion of $(f, X)$ in Theorem~\ref{th:expan} and Remark~\ref{re:Rij}, we obtain, for some constant $A$, 
 \[
 \eta(x) = A |x|^{2-n} +  O^{2,\alpha}(|x|^{2-n-q_1}) 
 \]    We claim that the right hand side of \eqref{eq:equ} is of the order $O^{0,\alpha}(|x|^{-n-q_1})$.  Since the quadratic terms decay faster, it suffices to check the decay of $\Ric_{\g}(\Y, \Y)$.  The assumption  $G= O^{0,\alpha}(|x|^{-n-q_1})$ implies  $\Ric_{\g} =O^{0,\alpha}(|x|^{-n-q_1})$, and therefore, 
 \[
 \Ric_{\g} (\Y, \Y) = f^2 \Ric_{\g} (\n, \n) +2 f\Ric_{\g} (\n, X) + \Ric_{\g} (X, X) = O^{0,\alpha}(|x|^{-n-q_1}).
 \]
 On the other hand, we assume $\hat X = \partial_n + O^{2,\alpha}(|x|^{-q})$ by rotating the coordinates and compute the left hand side of \eqref{eq:equ} as
 \begin{align*}
 &\Div_g \big(  f (\nabla \eta -( \nabla \eta \cdot \hat X)\hat X)\big) \\
 &= a (\Delta_0 \eta - \partial_n^2 \eta) + O^{0,\alpha}(|x|^{-n-q_1})\\
 &= -a A (2-n) |x|^{-n} (1- n|x|^{-2} (x^n)^2) + O^{0,\alpha}(|x|^{-n-q_1}).
 \end{align*}
Comparing both sides shows that $A=0$. \end{proof}

\begin{proposition}\label{pr:asymp}
Let  $(\N, \g)$ be a spacetime admitting a Killing vector field $\Y$.  Let $(M, g, k)$ be an asymptotically flat initial data set in $(\N, \g)$, and let $(f, X)$ be asymptotically translational to $(a, b)$ where $a=|b|\neq 0$. Let  $\{\Omega_k\}_{k\in \mathbb N}$  be an exhaustion of $M$ by compact manifolds with (possibly piecewise smooth) boundary. Then 
\begin{align}
	\lim_{k \to \infty} \int_{\partial \Omega_k\setminus \partial M } f (\nabla \eta - (\nabla \eta \cdot \hat X) \hat X) \cdot \nu \, d\sigma&=0 \label{eq:bdry}\\
	\lim_{k \to \infty} \int_{\partial \Omega_k\setminus \partial M } \left(\tfrac{1}{f} (\nabla \eta)_i + 2 \hat X^k (X_{k;i} - X_{i;k})  \right)\nu^i \, d\sigma &= 0 \label{eq:B'}
\end{align}
where $\nu$ is the outward unit normal vector to $\partial \Omega$. 
\end{proposition}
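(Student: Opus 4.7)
My approach uses the improved decay $\eta = O^{2,\alpha}(|x|^{2-n-q_1})$ from the preceding lemma, which gives $|\nabla\eta| = O(|x|^{1-n-q_1})$.

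For \eqref{eq:bdry}, the integrand is pointwise bounded by $C|\nabla\eta| = O(|x|^{1-n-q_1})$, so on a coordinate sphere $S_R$ (of area $O(R^{n-1})$) the integral is $O(R^{-q_1}) \to 0$. To handle a general exhaustion $\{\Omega_k\}$, I would fix a large coordinate ball $B_R\subset\Omega_k$ and apply the divergence theorem on $\Omega_k\setminus B_R$; by \eqref{eq:equ}, $\Div_g V$ is absolutely integrable at infinity, since $\tfrac{1}{f}|\nabla\eta|^2$ and the cross term are controlled by the $\eta$-decay, the quadratic-in-$\nabla X$ term is $O(|x|^{-2-2q})$ (integrable for $q > (n-2)/2$), and $f\Ric_{\g}(\Y,\Y) = O(|x|^{-n-\epsilon})$ by the Einstein-tensor assumption. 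Sending $k\to\infty$ and then $R\to\infty$ yields \eqref{eq:bdry}.

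For \eqref{eq:B'}, using the Killing relation $X_{i;j}+X_{j;i}=-2fk_{ij}$ together with $\eta = f^2-|X|_g^2$, a direct calculation gives the algebraic identity
\[
\tfrac{1}{f}(\nabla\eta)_i + 2\hat X^k(X_{k;i}-X_{i;k}) = 4\bigl(f_{;i}+k_{ij}X^j\bigr) - \tfrac{1}{f}\eta_{;i}.
\]
The $\tfrac{1}{f}\eta_{;i}\nu^i$ term integrates to $O(R^{-q_1}) \to 0$ as before. For the leading part, Killing antisymmetry gives $f_{;i}+k_{ij}X^j = \Y_{i|0} = \tfrac{1}{2}F_{0i}$, where $F := d\Y^\flat$ is the Killing $2$-form. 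Hence \eqref{eq:B'} reduces to the vanishing of the Komar-type integral $\lim_{k}\int_{\partial\Omega_k\setminus\partial M}F_{0i}\nu^i\,d\sigma$. The spacetime Killing identity $\bnabla^\beta F_{\alpha\beta} = 2\Ric_{\g}(\Y,e_\alpha)$, combined with the momentum constraint and $G = O(|x|^{-n-\epsilon})$, makes the relevant bulk divergence integrable, so the same divergence-theorem reduction as in the first part lets us pass the limit to coordinate spheres.

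The main technical obstacle is to show $\lim_{R\to\infty}\int_{S_R}F_{0i}\nu^i\,d\sigma = 0$---the Beig--Chru\'sciel-type statement that the Komar charge of an asymptotically null Killing vector vanishes. Using the expansions of $(f,X,g,k)$ from Theorem~\ref{th:expan} and Remark~\ref{re:Rij}, the leading $|x|^{1-n}$ part of $(f_{;i}+k_{ij}X^j)\nu^i$ can be expressed in terms of the monopole coefficients $c_f$, $c_X$ and the leading metric block $h(b,b)$. The cancellation ``$A=0$'' from the preceding lemma (supplying one relation among these) together with the asymptotic form of the Hamiltonian and momentum constraints and the Killing identity at next-to-leading order supply the remaining relations, forcing the sphere-averaged leading term to vanish. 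Propagating ``$A=0$'' into the precise cancellation among the leading coefficients of $f_{;i}$ and $k_{ij}X^j$ is the step I expect to require the most care.
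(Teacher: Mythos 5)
Your treatment of \eqref{eq:bdry} is correct and essentially the paper's argument: the improved decay $\eta=O^{2,\alpha}(|x|^{2-n-q_1})$ makes the flux on coordinate spheres $O(R^{-q_1})$, and exhaustion-independence (via integrability of $\Div_g V$ from \eqref{eq:equ}) transfers this to general $\Omega_k$.

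For \eqref{eq:B'}, however, there is a genuine gap, and it sits exactly at the heart of the statement. Your algebraic reduction $\tfrac{1}{f}(\nabla\eta)_i+2\hat X^k(X_{k;i}-X_{i;k})=4\bigl(f_{;i}+k_{ij}X^j\bigr)-\tfrac{1}{f}\eta_{;i}$ is fine, but it merely repackages the problem: the term $\hat X^k(X_{k;i}-X_{i;k})\nu^i$ (equivalently your Komar-type flux of $f_{;i}+k_{ij}X^j$) is only $O(r^{1-n})$ on coordinate spheres, so its sphere integral is borderline $O(1)$, and the whole content of \eqref{eq:B'} is that this $O(1)$ contribution vanishes. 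You do not prove this; you state that ``propagating $A=0$'' together with next-to-leading-order constraint and Killing identities should force the cancellation among the monopole coefficients, and you acknowledge this is the step requiring the most care. That is a plan, not a proof, and it is not clear it closes without substantial extra input: the leading $|x|^{2-n}$ coefficients of $X$ involve both $E$ and the undetermined matrix $c_{ik}$ from Theorem~\ref{th:expan}, and the lowered-index corrections $\partial_i(g_{kj}X^j)-\partial_k(g_{ij}X^j)$ contribute at order $|x|^{-1-q}$, so the sphere computation is genuinely delicate.

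The paper avoids this computation altogether by exploiting the freedom in the choice of exhaustion. After rotating so that $b=a\partial_n$, it integrates over flattened coordinate cylinders $C_\rho=\{|x'|\le\rho,\ |x^n|\le\rho^s\}$ with $\tfrac{n-1}{n-1+q_1}<s<1$. On the top and bottom faces, $\nu\approx\pm\partial_n$ and $\hat X\approx\partial_n$, so the antisymmetry gives $\hat X^k(X_{k;i}-X_{i;k})\nu^i=\mp(\partial_nX^n-\partial_nX^n)+O(\rho^{s(1-n-q_1)})$, i.e.\ the dangerous leading term cancels pointwise, and $s(1-n-q_1)+(n-1)<0$ kills the integral; on the lateral face the integrand is only $O(\rho^{1-n})$ but the area is $O(\rho^{n-2+s})$, giving $O(\rho^{s-1})\to 0$. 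Thus the borderline Komar-type charge is never evaluated on spheres at all. If you want to complete your route instead, you would essentially have to reprove this vanishing by a Beig--Chru\'sciel-type coefficient analysis, which is precisely the work your sketch defers; as written, the proof of \eqref{eq:B'} is incomplete.
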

\begin{proof}
For a flux integral, the divergence theorem implies that its limit coincides with that of the interior integral and is independent of the chosen compact exhaustion. Therefore, we may compute the limits using a convenient compact exhaustion.

To prove \eqref{eq:bdry}, we use  large coordinate balls $\{ B_r \}$ as a compact exhaustion. By \eqref{eq:eta}, the integrand $f (\nabla \eta - (\nabla \eta \cdot \hat X) \hat X) \cdot \nu$ is  $O(r^{1-n-q_1})$ on $\partial B_r$, while  $\mathrm{vol}(\partial B_r) = O(r^{n-1})$, 
\[
\lim_{r \to \infty} \int_{\partial B_r } f (\nabla \eta - (\nabla \eta \cdot \hat X) \hat X) \cdot \nu \, d\sigma = \lim_{r\to \infty} O(r^{-q_1}) = 0.
\]	

To prove \eqref{eq:B'}, we use large coordinate cylinders $C_\rho = \{ x\in M: |x'|\le \rho, |x^n|\le \rho^s\}$, with the number $s$ fixed such that $\frac{n-1}{n-1+q_1} <s<1$ and  $x' = (x^1, \dots, x^{n-1})$.  The first integrand $\frac{1}{f} \nabla \eta \cdot \nu$ on $\partial C_\rho$ is of the order $O(r^{1-n-q_1})=O\big(\rho^{s(1-n-q_1)}\big)$, while of $\mathrm{vol}(\partial C_\rho) =O(\rho^{n-1})$, so
\[
	\lim_{\rho \to \infty} \int_{\partial C_\rho } \tfrac{1}{f} \nabla \eta \cdot \nu \, d\sigma = \lim_{\rho \to \infty} O(\rho^{s(1-n-q_1) + n-1}) = 0.
\]

To handle the second integrand $ \hat X^k (X_{k;i} - X_{i;k}) \cdot \nu $ in \eqref{eq:B'},  we rotate coordinates so that $b= a\partial_n$, and  decompose $\partial C_\rho =\partial_1 C_\rho \cup \partial_2 C_\rho$ where  $\partial_1 C_\rho$ consists the bases of the cylinder and $\partial_2 C_\rho$ is the lateral hypersurface: 
\begin{align*}
\partial_1 C_\rho &= \big\{ x\in M: |x^n|=\rho^s, |x'|\le \rho\big\}   \\
 \partial_2 C_\rho &= \big\{ x\in M: |x^n|\le \rho^s, |x'|= \rho\big\}. 
\end{align*}
These have volumes: 
\begin{align*}
\mathrm{vol}(\partial_1 C_\rho) &=O(\rho^{n-1})\quad \mbox{ and } \quad \mathrm{vol}(\partial_2 C_\rho) = O(\rho^{n-2+s}).
\end{align*}
On $\partial_1 C_\rho$, 
\[ 
\hat X^k (X_{k;i} - X_{i;k}) \nu^i = -(\partial_n X^{n} -\partial_n X^{n}) + O(r^{1-n-q_1}) = O\big(\rho^{s(1-n-q_1)}\big),
\]
so
\begin{align*}
\lim_{\rho \to \infty} \int_{\partial_1 C_\rho}\hat X^k (X_{k;i} - X_{i;k})  \nu^i \, d\sigma &=\lim_{r\to \infty} O(r^{s(1-n-q_1)+ (n-1)}) = 0. 
\end{align*}
On  $\partial_2 C_\rho$, we have $\rho \le r \le  2\rho $, and  Theorem~\ref{th:expan} gives 
\begin{align*}
	\hat X^k (X_{k;i} - X_{i;k}) \nu^i &= - \sum_{a=1}^{n-1}(\partial_a X^{n} -\partial_n X^{a}) \tfrac{x^a}{\rho}+ O(\rho^{1-n-q_1})= O(\rho^{1-n}),
\end{align*}
so
\begin{align*}
\lim_{\rho \to \infty} \int_{\partial_2 C_\rho}  \hat X^k (X_{k;i} - X_{i;k}) \nu^i \, d\sigma &= \lim_{\rho \to \infty} O(\rho^{(n-2+s)+(1-n)}) =0.
\end{align*}
\end{proof}

We combine the above results to prove Theorem~\ref{Th:uni}.
\begin{proof}[Proof of Theorem~\ref{Th:uni}]
By Lemma~\ref{le:asy} and that $\Y$ is timelike, $(f, X)$ must be asymptotically translational to $(a, b)$. If instead $\Y$ were asymptotically null,  then $a=|b|\neq0$. As in Theorem~\ref{th:monotonicity}, let $\Sigma_t $ denote the level sets of $\eta$ in $M$ and define
\begin{align*}
	F(t) &= \int_{\Sigma_t} f (\nabla \eta - (\nabla \eta \cdot \hat X) \hat X) \cdot \tfrac{\nabla \eta}{|\nabla \eta|} \, d\sigma\\
	M(t) &= t^{-1} F(t).
\end{align*}
For sufficiently small $t>0$, $\Sigma_t$ is a compact hypersurface without boundary. By \eqref{eq:bdry}, $\lim_{t\to0^+}F(t)=0$. Hence, by Item~\eqref{it:limit} in Theorem~\ref{th:monotonicity} and \eqref{eq:B'},  $\lim_{t\to0^+}M(t)=0$. Since  $M(t)$ is non-increasing in $t$, it  follows that  $\nabla \eta=0$, and thus $\eta\equiv0$.  A contradiction. \end{proof}

\section{The Strong Maximum Principle}\label{se:max}

Let $(\N, \g)$ be a spacetime having a Killing vector field $\Y$. Denote $\eta = - \g(\Y, \Y  )$ and the (spacetime) null set $\mathbf Z= \{ p\in  \N: \eta(p)=0\}$. The null set is intimately connected with  the concept of Killing (pre)horizons. However, we do not impose additional assumptions,  such as connectedness, foliation by null hypersurfaces, absence of zeros, as are typically required for Killing (pre)horizons.

Let $(f, X)$ be the lapse-shift pair of $\Y$ along an initial data set $(U, g, k)$ with $f>0$.  We recall that $\eta$ satisfies  the differential equation on $U$:
\begin{align}\tag{\ref{eq:equ}}
\begin{split}
\Div_g \big(  f (\nabla \eta -( \nabla \eta \cdot \hat X)\hat X)\big)&=  \tfrac{1}{f} |\nabla \eta|^2+2  (\nabla \eta)^i\hat X^k (X_{k;i} - X_{i;k}) - \phi_1 
\end{split}
\end{align}
where $\hat X = \frac{X}{f}$ and $\phi_1 =   \tfrac{1}{2} f\g^{ik} (X_{k;j} - X_{j;k}) \g^{j\ell} (X_{ i; \ell} - X_{\ell;i}) - 2f \Ric_{\g} (\Y, \Y).$

Define the associated \emph{linear} differential operator
\begin{align*}
	Lw &:= \Div_g \big( f (\nabla w - (\nabla w \cdot \hat X)\hat X) \big) \\
	&\quad - \tfrac{1}{f} \nabla \eta\cdot \nabla w-2 (\nabla w)^i \hat X^k (X_{k;i} - X_{i;k}). 
\end{align*}
The principal coefficients of  $L$ are given by $\g^{ij}  = g^{ij} - \hat X^i \hat X^j$. We rewrite  \eqref{eq:equ} using $L$ and get
\begin{align} \label{eq:L}
 L\eta=- \tfrac{1}{2} f \g^{ik} (X_{k;j} - X_{j;k}) \g^{j\ell} (X_{i;\ell} - X_{\ell;i}) + 2f\Ric_{\g}(\Y, \Y). 
 \end{align}

If $\Y$ is a causal Killing vector field,  then $\eta = f^2 - |X|^2 \ge 0$ and $|\hat X|\le 1$.  Define the \emph{null set}  $Z = \{ x\in U: \eta(x) = 0\} = \mathbf Z\cap  U$. The operator $L$ is  strictly elliptic on $U\setminus Z$, while it degenerates on $Z$ with characteristic direction $\hat X$; that is, $\g^{ij} \hat X_i \hat X_j=0$ on $Z$.  

To establish a strong maximum principle for $\eta$, we first analyze the null set $Z$, with the aim of proving the regularity of $\partial Z$,  Lemma~\ref{Le:smooth}, through the next three results.

We work  in a local double null frame in the next lemma. Given a null vector $\ell$ in a spacetime $(\N, \g)$, its conjugate null vector $\underline \ell$ is defined by $\g(\ell, \underline \ell)=1$. Consider a local double null frame $\{ E_0 , E_1, E_2, \dots, E_{n}\}$ of $(\N, \g)$, where $E_0=\ell$ is a null vector, $E_1=\underline \ell$, and $\{E_2, \dots, E_{n}\}$ is a local orthonormal set orthogonal to both $\ell$ and $\underline \ell$. In this frame, the only nonzero components of $\g$ and $\g^{-1}$  are  $\g_{01} = \g_{10} = \g_{aa} = 1$, and $\g^{01} = \g^{10} = \g^{aa} = 1$ for each $a = 2, \dots, n$.

\begin{lemma}\label{le:Y}
Let $(\N, \g)$ be a spacetime admitting a Killing vector field~$\Y$. Suppose that $\eta(p)=0$ and  $\bnabla \eta (p)=0$ for some point $p\in \N$. Then,  at $p$, for all $a, b = 2, \dots, n$,  
\begin{align}\label{eq:Yall}
\left\{ 
\begin{array}{ll}
\bnabla_{E_a} \Y &=  \g(\bnabla_{E_a} \Y, \underline \ell) \Y + \g(\bnabla_{E_a} \Y, E_b) E_b\\
 \bnabla_\Y \Y &=0  \\
 \bnabla_{\underline \ell } \Y &  = -  \g(\Y, \bnabla_{\underline \ell} E_a  ) E_a %= \g(\bnabla_{\bar \Y} \Y, E_a ) E_a
\end{array}\right.
 \end{align}
where $\{ \Y, \underline\ell, E_2, \dots, E_n\}$ is a local double null frame.
 
 If in addition we assume  $\Box_{\g} \eta\ge 0$ and $\Ric_{\g}(\Y, \Y)\le 0$  at $p$, then, at $p$, we have $\Box_{\g} \eta=\Ric_{\g}(\Y,\Y)=0$ and 
\begin{align}\label{eq:Y2}
	\Y_{b| a} := \g(\bnabla_{E_a} \Y, E_b)=0 \quad \mbox{  for } a, b = 2, \dots, n.
\end{align}
Furthermore, $\Y$ does not vanish at $p$. 
\end{lemma}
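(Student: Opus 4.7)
The first three identities in \eqref{eq:Yall} will follow from the double null frame decomposition
\[
V = \g(V,\underline\ell)\,\Y + \g(V,\Y)\,\underline\ell + \sum_{a=2}^{n}\g(V,E_a)\,E_a, \qquad V\in T\N,
\]
which is immediate from $\g(\Y,\underline\ell)=1$, $\g(\Y,\Y)=\g(\underline\ell,\underline\ell)=0$, and $\g(E_a,E_b)=\delta_{ab}$. Applied to $V=\bnabla_{E_a}\Y$ and $V=\bnabla_{\underline\ell}\Y$, the problem reduces to two Killing identities: $\g(\bnabla_V\Y,\Y) = -\tfrac12\bnabla_V\eta$ (from $\eta=-\g(\Y,\Y)$) and $(\bnabla_\Y\Y)_\alpha = \Y^\beta\Y_{\alpha|\beta} = -\Y^\beta\Y_{\beta|\alpha} = \tfrac12\bnabla_\alpha\eta$, both of which vanish at $p$ by $\bnabla\eta(p)=0$. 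The third identity of \eqref{eq:Yall} additionally uses the Killing antisymmetry $\g(\bnabla_{\underline\ell}\Y,\underline\ell)=\underline\ell^\alpha\underline\ell^\beta\Y_{\alpha|\beta}=0$ (valid everywhere) and the frame relation $\g(\bnabla_{\underline\ell}\Y,E_a)=-\g(\Y,\bnabla_{\underline\ell}E_a)$, obtained by differentiating $\g(\Y,E_a)\equiv 0$.

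For the additional conclusions, I rewrite the Bochner identity \eqref{eq:bochner} as $\Box_\g\eta = -2\g(\bnabla\Y,\bnabla\Y)+2\Ric_\g(\Y,\Y)$, so the hypotheses force $\g(\bnabla\Y,\bnabla\Y)(p)\le 0$. On the other hand, I expand $\g(\bnabla\Y,\bnabla\Y)=\g^{\alpha\gamma}\g^{\beta\delta}\Y_{\alpha|\beta}\Y_{\gamma|\delta}$ in the double null frame, where the only nonzero entries of $\g^{-1}$ are $\g^{01}=\g^{10}=\g^{aa}=1$. The identities in \eqref{eq:Yall} together with Killing antisymmetry give $\Y_{0|\alpha}(p)=\Y_{\alpha|0}(p)=\Y_{1|1}=0$, so every cross term drops and
\[
\g(\bnabla\Y,\bnabla\Y)(p)=\sum_{a,b=2}^{n}\Y_{a|b}^{\,2}\,\ge\,0.
\]
Pinching between the two bounds yields \eqref{eq:Y2} together with $\Box_\g\eta(p)=\Ric_\g(\Y,\Y)(p)=0$.

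Finally, I show $\Y(p)\neq 0$ by contradiction. Assuming $\Y(p)=0$, in normal coordinates at $p$ one has $\Y^\alpha(x)=A^\alpha_{\;\;\gamma}x^\gamma+O(|x|^2)$, where $A^\alpha_{\;\;\gamma}=\bnabla_\gamma\Y^\alpha(p)$ and $A_{\alpha\beta}$ is antisymmetric by Killing. A second-order Taylor expansion of $\eta=-\g(\Y,\Y)$ gives the Hessian $H(V,V)=-2\g(AV,AV)$ on $T_p\N$. In the setting of Section~\ref{se:max} the field $\Y$ is causal, so $\eta\ge 0$ in a neighborhood of $p$; combined with $\eta(p)=0$ and $\bnabla\eta(p)=0$, this makes $p$ a local minimum of $\eta$ and forces $H\succeq 0$, i.e.\ $\g(AV,AV)\le 0$ for every $V$. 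Therefore the image of $A$ is a linear subspace of $T_p\N$ contained in the causal cone, which in Lorentzian signature has dimension at most one; hence $\mathrm{rank}\,A\le 1$. Since antisymmetric $2$-tensors have even rank, $A=0$, and the Killing ODE \eqref{eq:DDY} with $\Y(p)=\bnabla\Y(p)=0$ yields $\Y\equiv 0$ by standard unique continuation, contradicting the hypothesis. The hard part is this last algebraic step, which hinges both on the Lorentzian fact that linear subspaces of the causal cone are at most one-dimensional and on the rank parity of antisymmetric tensors.
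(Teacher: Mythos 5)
Your derivation of \eqref{eq:Yall} and of \eqref{eq:Y2} is correct and essentially the paper's own argument: the double null frame decomposition together with the Killing identities $\g(\bnabla_V\Y,\Y)=-\tfrac12\bnabla_V\eta$ and $\Y^\beta\Y_{\alpha|\beta}=\tfrac12\bnabla_\alpha\eta$ gives \eqref{eq:Yall}, and the paper pinches exactly as you do, $0\ge \Ric_{\g}(\Y,\Y)\ge \g(\bnabla\Y,\bnabla\Y)=\sum_{a,b}\Y_{a|b}^2\ge 0$, using \eqref{eq:box-eta} and the frame.

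Where you genuinely diverge is the last claim, $\Y(p)\neq 0$. The paper's route is shorter: once \eqref{eq:Y2} is known, every term on the right of \eqref{eq:Yall} carries a factor of $\Y$ or of $\g(\Y,\cdot)$, so $\Y(p)=0$ would force $\bnabla\Y(p)=0$, and then \eqref{eq:DDY} propagates $\Y\equiv 0$, contradicting the convention that a Killing field is not identically zero. Your route (Taylor expansion at the zero, Hessian of $\eta$ equal to $-2\g(A\cdot,A\cdot)$, a linear subspace of the causal cone has dimension at most one, antisymmetric forms have even rank, then \eqref{eq:DDY}) is correct as far as it goes, but it uses $\eta\ge 0$ near $p$, i.e.\ causality of $\Y$, which is not among the stated hypotheses of Lemma~\ref{le:Y}; you import it by appeal to ``the setting of Section~\ref{se:max}''. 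Strictly within the lemma's pointwise hypotheses your argument therefore has a gap, although it is a defensible one: some extra input is genuinely needed for a frame-free reading of this claim, since the boost field $x\partial_t+t\partial_x$ in Minkowski space satisfies $\eta(p)=0$, $\bnabla\eta(p)=0$, $\Box_{\g}\eta(p)>0$ and $\Ric_{\g}(\Y,\Y)=0$ at its zero, while the paper's own deduction leans on the double null frame with $E_0=\Y$, which presupposes $\Y(p)\neq 0$. In both places the lemma is actually applied (Lemmas~\ref{le:inv} and~\ref{le:zero}) the field $\Y$ is causal, so your argument covers the paper's needs; you should simply state the causality (or $\eta\ge0$ near $p$) assumption explicitly as an additional hypothesis for this part rather than invoking it silently.
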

\begin{proof}
It is straightforward to verify \eqref{eq:Yall}. To prove \eqref{eq:Y2}, we use ~\eqref{eq:box-eta} to obtain 
\begin{align*}
	0\ge \Ric_{\g}(\Y, \Y)&\ge \g(\bnabla \Y, \bnabla \Y)  = \g^{\alpha \beta} \g(\bnabla_{E_\alpha} \Y, \bnabla_{E_\beta} \Y)\\
    &= 2\g^{01} g(\bnabla_{\underline \ell} \Y, \bnabla_{\Y} \Y) + \sum_{a=2}^n \g^{aa} \g(\bnabla_{E_a} \Y, \bnabla_{E_a} \Y) \\
	&=\sum_{a, b=2}^n \g (\bnabla_{E_a} \Y, E_b)^2. 
\end{align*}
If $\Y=0$ at $p$, then $\bnabla \Y=0$ at $p$, and hence $\Y$ vanishes identically, a contradiction.
\end{proof}

\begin{lemma}\label{le:inv}
Let $(\N, \g)$ be a spacetime admitting a causal Killing vector field~$\Y$ and satisfying $\Ric_{\g}(\Y, \Y)\le 0$. Then the set $Z \cap \Int U$ is characterized  as the set of the points $x\in \Int U$ such that, at $x$, 
\begin{align*}
 &\eta = 0, \, \Y\neq 0,  \, \nabla \eta=0, \, L\eta=0, \\
& \Ric_{\g}(\Y, \Y)=0, \quad   X^\perp \mbox{ is involutive},
\end{align*}
where $X^\perp$ denotes the orthogonal distribution to $X$. 
\end{lemma}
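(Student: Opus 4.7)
The reverse inclusion is trivial, since the list of conditions begins with $\eta(x)=0$. So the content is to show that every $x\in Z\cap\Int U$ automatically satisfies the other five conditions. The plan is to exploit the fact that $x$ is an interior minimum of $\eta$ on $U$ (because $\Y$ causal forces $\eta\ge 0$ everywhere) and to squeeze the value $L\eta(x)$ between two opposing sign conditions coming from the two equivalent expressions for $L\eta$.

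Concretely, I would first record the easy consequences: $\nabla \eta(x)=0$ and $\mathrm{Hess}\,\eta(x)\succeq 0$ (from the interior minimum), $|\hat X(x)|=1$ (from $\eta=f^2-|X|^2=0$ and $f>0$), and $\Y(x)\ne 0$ (from $f(x)>0$ alone, since $\Y=f\n+X$ has nonvanishing $\n$-component). Then I would compute $L\eta$ at $x$ \emph{directly from its definition}: the two lower-order terms $\tfrac{1}{f}|\nabla\eta|^2$ and $2(\nabla\eta)^i\hat X^k(X_{k;i}-X_{i;k})$ drop out because $\nabla\eta(x)=0$, and the divergence term collapses (again using $\nabla\eta(x)=0$) to
\[
L\eta(x) = f\bigl(\Delta_g\eta-\mathrm{Hess}\,\eta(\hat X,\hat X)\bigr)(x) = f\,\g^{ij}\,\mathrm{Hess}\,\eta_{ij}(x),
\]
where $\g^{ij}=g^{ij}-\hat X^i\hat X^j$ is the restriction of the spacetime cometric to $U$. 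Because $|\hat X(x)|=1$, the symmetric tensor $\g^{ij}$ is positive semidefinite at $x$, and since $\mathrm{Hess}\,\eta(x)\succeq 0$, the trace of a product of two PSD symmetric tensors is nonnegative. Hence $L\eta(x)\ge 0$.

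Next I would read off the opposite bound from \eqref{eq:L}: the Ricci term $2f\,\Ric_\g(\Y,\Y)$ is $\le 0$ by hypothesis, and the quadratic form $-\tfrac{1}{2}f\,\g^{ik}\g^{j\ell}\omega_{kj}\omega_{i\ell}$ (with $\omega_{ij}=X_{i;j}-X_{j;i}$) is also $\le 0$ since $\g^{ij}(x)$ is PSD; so $L\eta(x)\le 0$. The two inequalities together force $L\eta(x)=0$, and simultaneously each of the two nonpositive pieces on the right of \eqref{eq:L} must vanish separately. This yields both $\Ric_\g(\Y,\Y)(x)=0$ and the vanishing of the antisymmetric quadratic term. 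Invoking the null-case criterion in Proposition~\ref{pr:elliptic2}, the latter is exactly the involutivity of $X^\perp$ at $x$, completing the list of conditions.

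There isn't really a hard obstacle here: the only conceptual step is noticing that the \emph{same} operator $L$ applied to $\eta$ admits two representations whose signs are \emph{opposite} at a point of $Z$, namely a ``second-order/PSD'' representation coming from the minimum and a ``first-order/Frobenius+Ricci'' representation from the Killing identity, and that the meeting point forces both representations to degenerate. The only care required is in the algebraic manipulations showing that $\Div_g(f(\nabla\eta\cdot\hat X)\hat X)|_x=f\,\mathrm{Hess}\,\eta(\hat X,\hat X)|_x$ when $\nabla\eta(x)=0$, which is routine since all Christoffel contributions are killed by the vanishing gradient.
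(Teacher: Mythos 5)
Your proposal is correct and follows essentially the same route as the paper: the interior minimum of $\eta\ge 0$ gives $\nabla\eta(x)=0$ and $\mathrm{Hess}\,\eta(x)\succeq 0$, hence $L\eta(x)\ge 0$, while \eqref{eq:L} bounds $L\eta(x)\le 0$, forcing both the Ricci term and the antisymmetric quadratic term to vanish, and Proposition~\ref{pr:elliptic2} then yields involutivity of $X^\perp$. Your brief appeal to positive semidefiniteness of $\g^{ij}$ for the sign of the quadratic term is fine (it is exactly the sum-of-squares form computed in Proposition~\ref{pr:elliptic2}), and $\Y\ne 0$ indeed follows from the standing assumption $f>0$.
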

\begin{proof}
Because $\eta\ge 0$,   $\eta$ attains a minimum at $x\in Z\cap \Int U$. Hence, $\nabla \eta=0$ and $\nabla^2 \eta$ is nonnegative definite, so $L\eta\ge 0$ at $x$. By \eqref{eq:L},  we  have 
\[
L\eta=0,\quad \g^{ik} (X_{k;j} - X_{j;k}) \g^{j\ell} (X_{i;\ell} - X_{\ell;i})=0,\quad  \Ric_{\g}(\Y, \Y)=0 \quad \mbox{ at } x.
\] By Proposition~\ref{pr:elliptic2}, it follows that $X^\perp$ is involutive at $x$. 
\end{proof}

The proof of Lemma~\ref{Le:smooth} on the regularity of $\partial Z$ follows from the earlier characterization of the null set, together with classical results on degenerate sets of elliptic differential equations reviewed in Appendix~\ref{se:de}.
%\begin{lemma}\label{le:smooth}
%Let  $(\N, \g)$ be a spacetime admitting a causal Killing vector field $\Y$ and satisfying $\Ric_{\g}(\Y, \Y)\le 0$, and let $(U, g, k)$ be an initial data set in $\N$. Let $(f, X)$ be the lapse-shift pair of $\Y$ on $U$ with $f> 0$.   Then $\Pi_p$ is a smooth hypersurface orthogonal to $X$. Consequently, the boundary of the null set, $\partial Z \cap \Int U$, is a smooth  hypersurface in $\Int U$.
%\end{lemma}
\begin{proof}[Proof of Lemma~\ref{Le:smooth}]
Let $z \in \partial Z\cap \Int U$. Recall the propagation set $P_z$ defined in Lemma~\ref{le:pro}:
\[
	P_z = \{ x\in \Int U: x, z \in \gamma(t) \mbox{ where $\gamma(t)$ is an integral curve of $V\in X^\perp$ }\}.
\]
By Lemma~\ref{le:pro}, we have $P_z \subset Z$. By Lemma~\ref{le:inv}, $X^\perp$ is involutive on $P_z$. Therefore, by the Frobenius theorem, $P_z$ is a hypersurface as smooth as $g$ and $X$. Moreover, for any $\hat z \in \partial Z$ sufficiently close to $z$, the hypersurfaces $P_{z}$ and $P_{\hat z}$ must be tangent at their intersection, since both are orthogonal to $X$. Hence $P_{z}$ and $P_{\hat z}$ coincide, which implies that $\partial Z$ coincides with $P_z$ near $z$. 
\end{proof}

We prove the strong maximum principle, Theorem~\ref{Th:strong}, whose statement is recalled below:
\begin{manualtheorem}{\ref{Th:strong}}
Let $(\N,\g)$ be a spacetime admitting a causal Killing vector field $\Y$ with $\Ric_{\g}(\Y,\Y)\le 0$, and let $(U, g, k)$ be an initial data set in $\N$ with smooth boundary $\partial U$ having unit normal $\nu$. Then
\begin{enumerate}
\item Suppose there exists $p \in \partial U$ such that $\eta(x) > \eta(p)$ for all $x \in \Int U$ with $\eta(p) > 0$. Then $\nu( \eta)\neq 0$ at $p$. \label{it:Hopf1}
\item If $\eta>0$ in $\Int U$ and $\eta=0$ in an open subset $\Sigma\subset \partial U$, then $\nu( \eta) \neq 0$ on a dense subset of $\Sigma$.    \label{it:Hopf2}
\item If $\eta$ attains a local minimum at an interior point of $U$, then $\eta$ is constant on $U$.  \label{it:max}
\end{enumerate}
\end{manualtheorem}
\begin{proof}
Item~\eqref{it:Hopf1} follows from the standard Hopf boundary point lemma since \eqref{eq:equ} is uniformly elliptic in this case. 

We now prove Item~\eqref{it:Hopf2}. Suppose, for contradiction, that $\nu(\eta) = 0$, and hence $\nabla \eta = 0$, on an open subset of $\Sigma$,  denoted by~$\Sigma_0$. By Lemma~\ref{Le:smooth}, $\Sigma_0$ is a hypersurface. A collar neighborhood of  $\Sigma_0$, denoted by $N$, is foliated by  level sets $\Sigma_t\cap N$ increasing in $t>0$, with $\nabla \eta\neq 0$ on each $\Sigma_t\cap  N$ for $t>0$ by Item~\eqref{it:Hopf1}. Consequently, the vector field $\nabla \eta - (\nabla \eta \cdot \hat X) \hat X $ is nonzero on each $\Sigma_t\cap  N$ for $t>0$. Consider the normalized vector field:
\[
W = \tfrac{\nabla \eta - (\nabla \eta \cdot \hat X) \hat X}{|\nabla \eta - (\nabla \eta \cdot \hat X) \hat X|}.
\]
Since $W$ is bounded, it extends continuously across $\Sigma_0$. Flowing $\Sigma_t$ along $-W$ back into $\Sigma_0$, we obtain the flowout manifold $K$. Then $K$ satisfies the assumptions in Proposition~\ref{pr:mono}; namely, $K$ has piecewise smooth boundary  $\partial K = \partial_1 K \cup \partial_2 K$, where $\partial_1 K =(\Sigma_0 \cup \Sigma_t)\cap  N$ consists level sets of $\eta$  and $\partial_2 K$ is tangent to $\nabla \eta - (\nabla \eta \cdot \hat X) \hat X$.

Applying the localized monotonicity formula, Proposition~\ref{pr:mono}, we recall, for $t>0$, 
\begin{align*}
 F_K(t) &= \int_{\Sigma_t\cap K }f (\nabla \eta - (\nabla \eta \cdot \hat X)\hat X) \cdot \tfrac{\nabla \eta}{|\nabla \eta|}\, d\sigma \\
 M_K(t) &= t^{-1}F_K(t)
\end{align*} 
and $M_k(t)\ge 0$ is non-increasing in $t>0$. Since $\nabla \eta=0$ on $\Sigma_0$,   we have $\lim_{t\to 0^+} F_K(t)=0$. By Item~\eqref{it:limit} of Proposition~\ref{pr:mono}, 
\begin{align*}
\lim_{t\to 0^+} M_K(t) &\le \lim_{t\to 0^+} \int_{\Sigma_t\cap K}  \left( \tfrac{1}{f} (\nabla \eta)_i  +   2 \hat X^k (X_{k;i} - X_{i;k})\right)  \tfrac{(\nabla \eta)^i}{|\nabla \eta|} \, d\sigma \\
&=\lim_{t\to 0^+} \int_{\Sigma_t\cap K}   2 \hat X^k (X_{k;i} - X_{i;k}) \hat X^i \, d\sigma=0
\end{align*}
where we used that $\Sigma_0$ is orthogonal to $\hat X$. This implies that $M_K(t) \equiv 0$, and thus $\nabla \eta\equiv 0$, a contradiction.

Finally, Item~\eqref{it:max} follows by a standard argument. It suffices to consider the case where the minimum of $\eta$ is zero; otherwise, the standard strong maximum principle applies. If $\eta$ attains this minimum at an interior point $p$, then $Z \cap \Int U \neq \emptyset$. By Lemma~\ref{Le:smooth}, $\partial Z\cap \Int U$ is a smooth hypersurface on which $\nabla \eta \equiv 0$. Applying Item~\eqref{it:Hopf2} to $U\setminus Z$ then yields a contradiction. 
\end{proof}

\section{First Variation and Spacelike Lapse–Shift Pairs} \label{se:spacelike}

Let  $(M, g, k)$ be an asymptotically flat initial data set, possibly with compact boundary, and the ADM energy-momentum $(E, P)$. Recall the modified Einstein constraint operator~$\overline{\Phi}_{(g,k)}$ from~\eqref{eq:varphi0}. Fixing a lapse-shift pair $(f_0, X_0)$ asymptotic to $(E, -P)$, we define the \emph{modified Regge-Teitelboim Hamiltonian} $\mathcal{H} :\mathcal M^{2,\alpha}_{-q} (M)\times \C^{1,\alpha}_{-1-q} (M)\to \mathbb{R} $ by 
\begin{align}\label{equation:functional}
\begin{split}
	\mathcal{H} (\gamma, \tau) &=(n-1)\omega_{n-1} \big(E E(\gamma,\tau) -P \cdot P(\gamma, \tau)\big) \\
	&\quad - \int_M \overline{\Phi}_{(g,k)}(\gamma, \tau) \cdot (f_0, X_0)\, d\mu_g
\end{split}
\end{align}
where recall $\mathcal M^{2,\alpha}_{-q} (M)$ denotes the space of Riemannian metrics $\gamma$ satisfying $\gamma- g_{\mathbb E} \in \C^{2,\alpha}_{-q}$, $(E(\gamma,\tau), P(\gamma, \tau))$ is the ADM energy-momentum of $(\gamma, \tau)$, and both the volume measure $d\mu_g$ and the inner product are taken with respect to~$g$. All results in this section remain valid if  $\overline{\Phi}_{(g,k)}$ is replaced by the $\varphi$-modified operator $\Phi^{\varphi}_{(g,k)}$ for  any bounded smooth function $\varphi$, but we work with  $\overline{\Phi}_{(g,k)}$ for notational simplicity.

The main result of this section, Theorem~\ref{th:causality}, establishes that if $(g, k)$ is a critical point of the above functional with respect to a lapse–shift pair $(f, X)$ that is timelike somewhere, then one can perturb $(g, k)$ to decrease the ADM mass while preserving the dominant energy scalar and, when $\partial M\neq \emptyset$,   the Bartnik boundary data.

We define the \emph{Bartnik boundary data} on $\partial M$ as 
\begin{align}
\label{eq:Bartnik}
 B (g, k) :=\Big (g^\intercal,  H_g, k(\nu_g)^\intercal,  \tr_{g^\intercal} k\Big)
\end{align}
where $g^\intercal$ is the restriction of $g$ on the tangent bundle of $\partial M$, $\nu_g$ is the inward unit normal, $H_g = \Div \nu_g$ is the mean curvature, and $k(\nu_g)^\intercal$ denotes the $1$-form $k(\nu_g, \cdot)$ restricted on the tangent bundle of $\partial M$. More generally, for a symmetric $(0,2)$-tensor $\tau$, we use the notation $\tau (v)^\intercal$ to denote the restriction of $\tau (v, \cdot)$ on the tangent bundle of $\partial M$. 

The linearization of $B$ at $(g, k)$, denoted $B'|_{(g, k)}$, is given by
\[
	B'|_{(g, k)}(h, w) = \Big(h^\intercal, H'|_g(h), w(\nu_g)^\intercal + k(\nu'|_g(h))^\intercal , \tr_{g^\intercal} w - h^\intercal \cdot k^\intercal \Big)
\] 
where $\nu '|_g(h)$ and $H'|_g(h)$ denote the linearization of the unit normal and mean curvature respectively:
\begin{align*}
	\nu '|_g(h)&=-\tfrac{1}{2} h (\nu, \nu) \nu - g^{ab} h(\nu, e_a) e_b \\
	H'|_g(h)&= \tfrac{1}{2} \nu(\tr_g h^\intercal) - \Div_{g^\intercal}h(\nu)^\intercal- \tfrac{1}{2} h(\nu, \nu)H_g
 \end{align*}         
 where $e_a, e_b$ are tangent to $\partial M$. Here and in what follows, we often simply write $\nu$ for $\nu_g$ when the context is clear.

%For the time-symmetric case $k\equiv 0$,  refer to \cite[Proposition 3.7]{Anderson-Khuri:2013}. 
 \begin{lemma} \label{le:first}
The first variation of $\mathcal{H}$ at $(g, k)$ is given by:
\begin{align}\label{eq:first}
\begin{split}
	 &D\mathcal{H}\big|_{(g,k)} (h, w) =- \int_M (h, w) \cdot (D\overline \Phi_{(g,k)})^*(f_0, X_0) \, d\mu_g\\
	& -\tfrac{1}{2}  \int_{\partial M}  \bigg(f_0 A_g - \Big(\nu (f_0) + k(\nu, X_0) \Big)g^\intercal + g(\nu, X_0) k^\intercal , 2f\bigg) \cdot \Big(h^\intercal, H'|_g(h)\Big) \, d\sigma_g \\
	&\small{ + \int_{\partial M} \Big(X_0^\intercal, -g(\nu, X_0) \Big) \cdot \Big(w(\nu)^\intercal + k(\nu'|_g(h))^\intercal, \tr_{g^\intercal} w - h^\intercal \cdot k^\intercal \Big)\, d\sigma_g}
\end{split} 
\end{align}
where $\nu$ is the inward unit normal to $M$. Consequently, if $B'|_{(g,k)}(h, w)=0$, then 
\begin{align*}
	 &D\mathcal{H}\big|_{(g,k)} (h, w) =- \int_M (h, w) \cdot (D\overline \Phi_{(g,k)})^*(f_0, X_0) \, d\mu_g.
\end{align*}

\end{lemma}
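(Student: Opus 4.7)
The plan is to carry out a Regge--Teitelboim style integration by parts that relates the ADM variation to the bulk pairing with the modified constraint operator. First I would differentiate $\mathcal{H}(\gamma, \tau)$ term by term at $(g, k)$. The ADM part contributes $(n-1)\omega_{n-1}\big(E \cdot E'|_{(g,k)}(h, w) - P \cdot P'|_{(g,k)}(h, w)\big)$, where the linearizations $E', P'$ are expressed as limits of flux integrals over large coordinate spheres. The variation of the second term is simply $-\int_M D\overline{\Phi}_{(g,k)}|_{(g,k)}(h, w) \cdot (f_0, X_0)\, d\mu_g$, since the reference measure $d\mu_g$ is fixed at the base metric and the subscript $(g, k)$ in $\overline{\Phi}_{(g,k)}$ plays the role of a frozen base point (so only the argument varies).

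Next I would invoke the classical Regge--Teitelboim identity with test field $(f_0, X_0)$: for a lapse--shift pair asymptotic to $(a, b)$, integration by parts against the linearized constraints produces a bulk adjoint term, the ADM linearization $(n-1)\omega_{n-1}(a E'|_{(g,k)}(h,w) + b \cdot P'|_{(g,k)}(h,w))$ arising from the boundary at infinity, and boundary integrals on $\partial M$. Because the modification in $\overline{\Phi}_{(g,k)}$ is zeroth order, the bulk formal adjoint identity and the asymptotic boundary terms agree with those of the unmodified constraint $\Phi$. Substituting the asymptotics $(f_0, X_0) \to (E, -P)$ and combining with the previous step, the two bulk integrals involving $(f_0, X_0) \cdot D\overline{\Phi}(h, w)$ cancel, and the ADM contribution in $D\mathcal{H}$ cancels the boundary-at-infinity contribution from integration by parts, leaving
\[
	D\mathcal{H}|_{(g,k)}(h, w) = -\int_M (h, w) \cdot (D\overline{\Phi}_{(g,k)})^*(f_0, X_0) \, d\mu_g + (\text{boundary integrals on } \partial M).
\]

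The principal obstacle is to identify the remaining $\partial M$ boundary integrals with the claimed pairings against the linearized Bartnik data. Starting from the explicit linearizations of the scalar and momentum constraints, each tangential and normal derivative of $(h, w)$ that is paired with $(f_0, X_0)$ is integrated by parts, producing boundary contributions on $\partial M$ involving $h, w,$ and first derivatives of $h$ tested against $f_0, \nabla f_0,$ and $X_0$. Using the standard identities $H'|_g(h) = \tfrac{1}{2}\nu(\tr_g h^\intercal) - \Div_{g^\intercal} h(\nu)^\intercal - \tfrac{1}{2} h(\nu, \nu) H_g$ and $\nu'|_g(h) = -\tfrac{1}{2} h(\nu, \nu)\nu - g^{ab} h(\nu, e_a) e_b$, together with the tangential--normal decomposition of $\nabla h$ along $\partial M$, the boundary contributions can be regrouped so as to pair naturally against the four components $h^\intercal,\, H'|_g(h),\, w(\nu)^\intercal + k(\nu'|_g(h))^\intercal,$ and $\tr_{g^\intercal} w - h^\intercal \cdot k^\intercal$ of $B'|_{(g,k)}(h, w)$. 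The second fundamental form $A_g$ of $\partial M$ emerges from commuting tangential with normal derivatives and from collecting the $h^\intercal$ coefficients that arise after tangential integration by parts on $\partial M$; the terms $\nu(f_0) + k(\nu, X_0)$ and $g(\nu, X_0)$ appear naturally from the normal pieces of the constraint linearization. Once the boundary integrals have been rewritten in the stated form, the consequence under the hypothesis $B'|_{(g,k)}(h, w) = 0$ is immediate.
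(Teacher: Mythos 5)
Your proposal is correct and follows essentially the same route as the paper: differentiate $\mathcal{H}$ term by term, integrate the pairing $\int_M D\overline{\Phi}_{(g,k)}(h,w)\cdot(f_0,X_0)\,d\mu_g$ by parts twice, use the asymptotics $(f_0,X_0)\to(E,-P)$ so the flux terms at infinity cancel the linearized ADM contribution, and regroup the remaining $\partial M$ terms (the $A_g$, $\nu(f_0)$, $k(\nu,X_0)$, $g(\nu,X_0)$ coefficients) against the components of $B'|_{(g,k)}(h,w)$, exactly as in the paper's computation (which cites Anderson--Khuri for the metric part). The only content left implicit in your sketch is the explicit bookkeeping verifying the precise boundary coefficients, which is the same standard computation the paper carries out.
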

\begin{proof}
The interior integral can be derived just as the case where $M$ has no boundary, see, for example, \cite[Theorem 5.2]{Bartnik:2005} (or \cite[Lemma 5.2]{Huang-Lee:2020}), so we focus on deriving the boundary terms.  

Applying the formula $D\overline \Phi_{(g,k)}(h,w)$ from \eqref{eq:phi} and \eqref{eq:phi2} gives
\begin{align*}
	&- \int_M D\overline\Phi_{(g,k)}(h,w) \cdot (f_0, X_0)\, d\mu_g \\
	&= - \int_M (-\Delta_g (\tr_g h) + \Div_g \Div_g h\Big)  f_0\, d\mu_g \\
	&\quad - \int_M\Big( \Div_g w - d(\tr_g w) - g^{jp} k_i^q h_{pq;j} + \tfrac{1}{2} g^{jp} k_i^q h_{pj;q} + \tfrac{1}{2} k^{jq} h_{jq;i} \Big) X_0^i \, d\mu_g\\
	&\quad + \int_M F_0(h,w)  \cdot (f_0, X_0)\, d\mu_g
\end{align*}
where  $F_0(h,w)$ denotes the terms involving only zeroth order in $(h, w)$ and contributes to the interior integral in \eqref{eq:first}.  %Applying integration by parts to the first and second integrals on the right-hand side, the boundary terms on coordinate spheres $S_r$ at $r\to \infty$ cancel with the linearization of the ADM energy and momentum, while the boundary terms on $\partial M$ are exactly those appearing in \eqref{eq:first}. 

Applying integration by parts twice to the first integral gives:
\begin{align*}
&- \int_M (-\Delta_g (\tr_g h) + \Div_g \Div_g h\Big)  f_0\, d\mu_g \\
&= \text{interior integral} - \tfrac{1}{2} E\int_{S_\infty} ( h_{ij,i} - h_{ii,j} ) \nu_0^j \, d\sigma_g\\
&\quad  -\tfrac{1}{2}  \int_{\partial M}\left[ \nu(\tr_g h) f_0 - h_{ij;i} \nu^j f_0 - \nu(f) \tr_g h + h_{ij} (f_0)^j \nu^i\right] \, d\sigma_g\\
&= \text{interior integral} - (n-1)\omega_{n-1} a E'(h,w) \\
& \quad - \tfrac{1}{2} \int_{\partial M} \left( f_0 A_g - \nu(f_0) g^\intercal, 2f_0 \right) \cdot \left(h^\intercal, H'|_g(h)\right)\,d\sigma_g,
%&=\text{bulk terms}- (n-1)\omega_{n-1} a E'|_{(g, k)}(h,w) 
\end{align*}
where $E'$ denotes the linearization of the ADM energy and the last term follows a standard computation, see, e.g. \cite[Proposition 3.7]{Anderson-Khuri:2013}. 

Similarly, integrating by parts twice in the second integral yields:
\begin{align*}
&- \int_M\left( \Div_g w - d(\tr_g w) - g^{jp} k_i^q h_{pq;j} + \tfrac{1}{2} g^{jp} k_i^q h_{pj;q} + \tfrac{1}{2} k^{jq} h_{jq;i} \right) X_0^i \, d\mu_g \\
&= \text{interior integral} + P_i \int_{S_\infty}(w - (\tr_g w) g)(\nu, \partial_i) \, d\sigma_g \\
&\quad + \int_{\partial M} \left[ (w - (\tr_g w) g)(\nu, X_0) - g^{ij} k(X_0, e_i) h(\nu, e_j) \right] \, d\sigma_g\\
&\quad + \int_{\partial M} \left[ \tfrac{1}{2} (\tr_g h) k(\nu, X_0) + \tfrac{1}{2} (h \cdot k) g(\nu, X_0) \right]\, d\sigma_g\\
&= \text{interior integral} + (n-1)\omega_{n-1} P \cdot P'|{(g, k)}(h, w) \\
&\quad + \int_{\partial M}  \left( \tfrac{1}{2} k(\nu, X_0) g^\intercal -\tfrac{1}{2} g(\nu, X_0) k^\intercal \right) \cdot h^\intercal \, d\sigma_g\\
&\quad + \int_{\partial M} \Big(X_0^\intercal, -g(\nu, X_0) \Big) \cdot \Big(w(\nu)^\intercal + k(\nu'|_g(h))^\intercal, \tr_{g^\intercal} w - h^\intercal \cdot k^\intercal \Big)\, d\sigma_g.
\end{align*}
\end{proof}

Next is a computational lemma for timelike $(f, X)$. 
\begin{lemma}\label{le:spacelike}
 Let $(f, X)$ be a lapse-shift pair on an initial data set $(U, g, k)$. Then $f<|X|_g$ on $U$ if and only if there exists a lapse-shift pair $(u, W)$ with $u > |W|_g$  such that 
 \[
    u f + g(W, X) < 0 \quad \mbox{ in } U. 
 \]
\end{lemma}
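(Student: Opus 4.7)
The plan is to verify both implications directly; the reverse is a Cauchy--Schwarz estimate and the forward is an explicit construction. A useful picture is the following: setting $\Y = f\n + X$ and $\mathbf Z := u\n - W$, a direct computation gives
$\g(\Y, \mathbf Z) = -fu - g(W,X) = -(uf + g(W,X))$,
while the condition $u > |W|_g$ is precisely that $\mathbf Z$ is future-timelike. Thus the lemma is equivalent to the assertion that $f < |X|_g$, i.e.\ $\Y$ fails to lie in the future-causal cone, iff there exists a future-timelike $\mathbf Z$ with $\g(\Y, \mathbf Z) > 0$, which is the standard duality between the future-causal cone and its polar. I will argue this directly rather than invoking the abstract principle.

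For the ``if'' direction, assume $(u, W)$ as in the statement exists. Since $u > |W|_g \ge 0$ we have $u > 0$. Suppose for contradiction that $f \ge |X|_g$. Then $uf \ge u|X|_g$, while Cauchy--Schwarz gives $g(W, X) \ge -|W|_g |X|_g$. Adding these two estimates yields $uf + g(W, X) \ge (u - |W|_g)|X|_g \ge 0$, which contradicts $uf + g(W, X) < 0$. Hence $f < |X|_g$ everywhere on $U$.

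For the ``only if'' direction, assume $f < |X|_g$ on $U$ and take $W := -X$, so $|W|_g = |X|_g$ and $g(W, X) = -|X|_g^2$. It then suffices to produce a smooth positive function $u$ on $U$ with $u > |X|_g$ and $uf < |X|_g^2$ pointwise. At each point $x \in U$ we check the admissible range for $u(x)$: if $X(x) = 0$, the hypothesis forces $f(x) < 0$, so any $u(x) > 0$ works; if $X(x) \neq 0$ and $f(x) \le 0$, any $u(x) > |X(x)|_g$ works; and if $f(x) > 0$, the assumption $f(x) < |X(x)|_g$ rearranges to $|X(x)|_g^2 / f(x) > |X(x)|_g$, so any $u(x) \in \bigl(|X(x)|_g,\, |X(x)|_g^2/f(x)\bigr)$ works. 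The set of admissible $u(x)$ is thus a non-empty open interval at every point, and a partition-of-unity argument (or an explicit averaged formula combining the local bounds) produces a smooth global $u$. The only mild bookkeeping is in handling the zero locus of $X$ and the sign of $f$ in this last step; beyond that, no step is obstructive.
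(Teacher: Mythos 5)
Your proof is correct and follows essentially the same route as the paper: the reverse implication via Cauchy--Schwarz giving $uf + g(W,X) \ge (u-|W|_g)|X|_g \ge 0$, and the forward implication by taking $W=-X$ and choosing $u$ pointwise in the interval $\bigl(|X|_g,\, |X|_g^2/f\bigr)$ (with upper bound $+\infty$ where $f\le 0$), exactly as in the paper's construction with its function $\lambda$. The cone-duality remark and the explicit partition-of-unity comment are harmless additions; the core argument is the same.
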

%\begin{remark}
 %  In a Lorentzian spacetime $(\N, \g)$, the above statement is equivalent to saying that a vector field $\V = f\n + V$ is past-pointing or spacelike on $U$ if and only if there exists a future-directed timelike $\W = u\n + W$  such that $\g( \V, \W)>0$ on~$U$.  
%\end{remark}
\begin{proof}
    Suppose $f<|X|_g$. On the subset $U_0\subset U$ where $|X|_g>0$, we let $u$ be a positive function so that $|X|_g < u < \lambda |X|_g$, where  $\lambda (x)$ is defined by $\lambda(x) = +\infty$ if $f(x)\le 0$ and $\lambda(x) = \frac{|X|_g}{f}$ if $f(x)>0$.  Let $W = -X$. It is straightforward to verify that
    \[
         u f + g(W, X) = u f - |X|_g^2 < 0 \quad \mbox{ on } U_0.
    \]
    We extend $u$ to be any positive function and $W=0$ elsewhere on $U_0$, and it is easy to see that $u f + g(W, X)<0$. 

    Conversely, suppose, for contradiction, that $f\ge |X|_g$. For any $(u, W)$ with $u> |W|_g$, we have 
    \[
        u f + g(W, X) \ge u f - |W|_g |X|_g \ge 0,
    \]
    which leads a contradiction. 
\end{proof}

We need local surjectivity of the modified constraint operator, with the Bartnik boundary data if $\partial M\neq \emptyset$, whose proof is included in Appendix~\ref{se:sur}.

\begin{proposition}\label{pr:sur}
Let $(M, g, k)$ be an asymptotically flat initial data set, and let $\varphi$ be a smooth bounded scalar function on $M$.  Let $m\ge 2$. 
\begin{enumerate}
    \item Suppose $M$ is complete without boundary. Then the smooth map 
    \[
     \overline \Phi_{(g, k)}:\mathcal{M}^{m,\alpha}_{-q}(M)\times \C^{m-1,\alpha}_{-1-q}(M)\to  \C^{m-2, \alpha}_{-2-q}(M)
    \] is locally surjective.
    \item \label{it:bd} Suppose $M$ is complete with  boundary~$\partial M$. Then the smooth map 
    \begin{align*}
    T:\mathcal M^{m,\alpha}_{-q} (M)&\times \C^{m-1,\alpha}_{-1-q}(M)\to \C^{m-2, \alpha}_{-2-q}(M) \times \mathcal B^{m,\alpha}(\partial M)\\
    	&T(g, k) = \Big(  \overline \Phi_{(g, k)}, B(g, k)\Big)
    \end{align*}
    is  locally surjective, where  $\mathcal B^{m,\alpha}(\partial M)$ denotes the corresponding codomain of the Bartnik boundary map $B$. % Specifically, $\mathcal B^{m,\alpha}(\partial M)$ is the product space of $\C^{m,\alpha}(\partial M)\times \C^{m-1,\alpha}(\partial M) \times \C^{m-1,\alpha}(\partial M)$, consisting of symmetric $(0,2)$-tensors on the tangent bundle of $\partial M$,  scalar functions on $\partial M$, and $1$-forms on the tangent bundle of $M$ along $\partial M$.  %Then the linearization of $T$ at $(g, k)$, 
%\[
%	DT(h,w) =  \Big(  D\Phi_{(g, k)} (h,w), B'|_{(g, k)}(h,w)\Big)
%\]
%is surjective. 
\end{enumerate}
\end{proposition}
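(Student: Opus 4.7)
\medskip

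\noindent\textbf{Proof proposal.} The strategy is the Corvino–Schoen local surjectivity technique, applied to the inverse function theorem in Banach spaces. In both parts, it suffices to exhibit a bounded right inverse of the linearization at $(g,k)$, after which the inverse function theorem yields an open neighborhood in the codomain realized by the nonlinear map.

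For part~(1), the plan is to form the composition
\[
 \mathcal{L} \,:=\, D\overline{\Phi}_{(g,k)} \circ (D\overline{\Phi}_{(g,k)})^{*},
\]
where the formal $L^{2}$-adjoint $(D\overline{\Phi}_{(g,k)})^{*}$ is read off from the interior integrand in Lemma~\ref{le:first}. I would show that $\mathcal{L}$ is elliptic (in the sense appropriate to its mixed orders) and that, in the weighted Hölder setting on an asymptotically flat manifold, it is a Fredholm isomorphism
\[
 \mathcal{L}:\C^{m,\alpha}_{-q}(M)\times \C^{m-1,\alpha}_{-1-q}(M)\to \C^{m-2,\alpha}_{-2-q}(M).
\]
The key input is injectivity of $(D\overline{\Phi}_{(g,k)})^{*}$ on the weighted space, which by an integration-by-parts argument reduces to a KID-type rigidity. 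The modification producing $\overline{\Phi}$ is designed precisely to remove the obstructions (translations and boosts at infinity) that the unmodified constraint operator carries, so no kernel survives. Then $\mathcal{R} := (D\overline{\Phi}_{(g,k)})^{*}\circ \mathcal{L}^{-1}$ is a bounded right inverse to $D\overline{\Phi}_{(g,k)}$, and local surjectivity follows.

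For part~(2), the plan is to split the linearization into a boundary piece and an interior piece, and patch right inverses. First, since the four components of $B$ prescribe independent pieces of the $1$-jet of $(g,k)$ along $\partial M$, a direct construction using a collar neighborhood produces a bounded right inverse $\mathcal{R}_{B}$ of $B'|_{(g,k)}$ taking values in tensors supported in a fixed collar. Next, to correct $\overline{\Phi}$ while preserving the Bartnik data already achieved, I would repeat the part~(1) argument using only perturbations of the form $(h,w) = (D\overline{\Phi}_{(g,k)})^{*}(u,W)$ where $(u,W)$ together with its relevant derivatives vanishes on $\partial M$. Inspecting the boundary integrals in Lemma~\ref{le:first} shows that such $(h,w)$ have $B'|_{(g,k)}(h,w)=0$, so they do not disturb $\mathcal{R}_{B}$. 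It remains to verify that $\mathcal{L}$ with these homogeneous boundary conditions is still an isomorphism onto $\C^{m-2,\alpha}_{-2-q}(M)$; once that is in hand, combining $\mathcal{R}_{B}$ with this restricted interior right inverse produces a bounded right inverse for $DT|_{(g,k)}$, and the inverse function theorem concludes.

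The main obstacle is the boundary-value problem for $\mathcal{L}$ in part~(2): both coercivity (injectivity with the chosen homogeneous boundary conditions) and surjectivity onto $\C^{m-2,\alpha}_{-2-q}(M)$ must be established. Injectivity follows from an integration-by-parts computation using Lemma~\ref{le:first} with $(f_{0},X_{0})$ vanishing on $\partial M$, which kills every boundary contribution and reduces the problem to the unmodified adjoint-kernel argument. Surjectivity is more delicate and is where the ``different argument'' advertised in the outline is likely needed: one exploits the freedom to push any source term orthogonal to the adjoint kernel using a compactly-supported cutoff in the interior, combined with a partition of unity separating contributions near $\partial M$ (absorbed by $\mathcal{R}_{B}$) from those supported well inside. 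If this Fredholm-type analysis of $\mathcal{L}$ with the correct boundary conditions goes through, the rest of the proof is mechanical.
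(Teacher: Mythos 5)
Your part (1) is essentially the standard Corvino--Schoen argument (composition with the formal adjoint, trivial kernel of $(D\overline\Phi_{(g,k)})^*$ among decaying pairs via the asymptotics in Lemma~\ref{le:asy}), which is exactly what the paper invokes, so that half is fine --- though note the modification $\overline\Phi_{(g,k)}$ is \emph{not} what removes the kernel (it is designed to preserve the dominant energy condition, Lemma~\ref{lemma:sigma_preserve}); the absence of decaying KIDs holds for the unmodified operator as well.

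The genuine gap is in part (2), precisely at the step you flag: the claim that $\mathcal L=D\overline\Phi_{(g,k)}\circ(D\overline\Phi_{(g,k)})^*$, acting on $(u,W)$ that vanish to high order along $\partial M$, is surjective onto $\C^{m-2,\alpha}_{-2-q}(M)$. To force $B'|_{(g,k)}\big((D\overline\Phi_{(g,k)})^*(u,W)\big)=0$ you must impose vanishing of $u$ up to third normal derivatives (and of $W$ up to second), but $\mathcal L$ is a fourth-order (ADN) elliptic system, so this many boundary conditions is overdetermined: the Lopatinskii--Shapiro count is violated and no standard Schauder/Fredholm theory for elliptic boundary value problems applies. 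The correct analogue of what you want is the Corvino--Schoen \emph{doubly weighted} coercivity estimate, with weights degenerating at $\partial M$, together with an identification of the resulting cokernel with KIDs on $M$ subject to no boundary conditions; on a manifold with compact boundary such KIDs can be nontrivial and non-decaying (e.g.\ the static KID on a Schwarzschild exterior), and they do not even pair convergently with sources in $\C^{m-2,\alpha}_{-2-q}(M)$, so your heuristic ``orthogonal to the adjoint kernel, cutoffs and partition of unity'' sketch does not identify, let alone remove, the obstruction. None of this is supplied, and it is exactly the hard point. The paper avoids the composition altogether: it augments the linearized constraint operator (in the parametrization $(vg+h,\hat L_Xg+w)$) with gauge equations $\Delta h$, $\Delta w$ and gauge boundary conditions so as to obtain a \emph{determined} elliptic boundary value problem satisfying the Agmon--Douglis--Nirenberg complementing condition; this yields closed range and finite-dimensional cokernel for $D\overline\Phi_{(g,k)}$ restricted to $\ker B'|_{(g,k)}$, and the cokernel is then killed because any annihilating functional solves the adjoint equation, is forced to decay, and hence vanishes by Lemma~\ref{le:asy}. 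Without either that determined-BVP argument or a genuine weighted-coercivity argument near $\partial M$, your proof of part (2) is incomplete.
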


\begin{theorem}\label{th:causality}
Let $(M, g, k)$ be an asymptotically flat initial data set, possibly with boundary, satisfying the dominant energy condition, with ADM energy-momentum $(E, P)$. Suppose $(f, X)$ is a lapse–shift pair asymptotic to $(E, -P)$ and solves $(D\overline \Phi_{(g, k)})^*(f, X)=0$. If, on some open subset $U\subset M$, we have $f < |X|_g$ and $J = 0$, then there exists a family of asymptotically flat initial data sets $(g(t), k(t))$ satisfying
\begin{enumerate}
\item $(g(0), k(0)) = (g, k)$,
\item $(g(t), k(t))$ satisfies the dominant energy condition for $t\ge 0$,
\item If $\partial M \neq \emptyset$, then $(g(t), k(t))$ has the same Bartnik boundary data, i.e.\ $B(g(t), k(t)) = B(g, k)$,
\item For $t>0$, the ADM energy-momentum $(E(t), P(t))$ satisfies
\[
E(t)^2 - |P(t)|^2 < E^2 - |P|^2.
\]
\end{enumerate}
\end{theorem}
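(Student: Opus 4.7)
The plan is to invoke the local surjectivity of the joint map $(\overline\Phi_{(g, k)}, B)$ (Proposition~\ref{pr:sur}) to deform $(g, k)$ in the direction singled out by Lemma~\ref{le:spacelike}, and then identify two expressions for the first variation of $\mathcal H$ to force a strict decrease of $E^2 - |P|^2$. To set up the perturbation direction, I first apply Lemma~\ref{le:spacelike} on $U$ to obtain an auxiliary lapse--shift pair $(u, W)$ satisfying $u > |W|_g$ and
\[
uf + g(W, X) < 0 \quad \text{on } U.
\]
Pick a smooth nonnegative bump $\chi$, compactly supported in $U$, with $\chi \not\equiv 0$, and set
\[
(\rho_0, J_0) := (u\chi, W\chi),
\]
which is compactly supported in $U$, satisfies the pointwise strict inequality $\rho_0 > |J_0|_g$ on the interior of $\mathrm{supp}\,\chi$, and enjoys the integral negativity
\[
\int_M (\rho_0 f + J_0 \cdot X)\, d\mu_g = \int_M \chi\,(uf + g(W, X))\, d\mu_g < 0.
\]

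By Proposition~\ref{pr:sur}, the joint map $(\gamma, \tau) \mapsto (\overline\Phi_{(g, k)}(\gamma, \tau), B(\gamma, \tau))$ is locally surjective, so one can find a smooth family $(g(t), k(t))$ of asymptotically flat initial data sets with $(g(0), k(0)) = (g, k)$ such that
\[
\overline\Phi_{(g, k)}(g(t), k(t)) = \overline\Phi_{(g, k)}(g, k) + t(\rho_0, J_0), \qquad B(g(t), k(t)) = B(g, k).
\]
Setting $(h, w) := (g'(0), k'(0))$, differentiation at $t = 0$ gives $D\overline\Phi_{(g, k)}(h, w) = (\rho_0, J_0)$ and $B'|_{(g, k)}(h, w) = 0$.

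Now I compute $D\mathcal H|_{(g, k)}(h, w)$ in two different ways. Differentiating \eqref{equation:functional} at $t = 0$ and using the identities above gives
\[
D\mathcal H|_{(g, k)}(h, w) = (n-1)\omega_{n-1}\bigl(E\,E'(0) - P \cdot P'(0)\bigr) - \int_M (\rho_0 f + J_0 \cdot X)\, d\mu_g.
\]
On the other hand, Lemma~\ref{le:first}, together with $(D\overline\Phi_{(g,k)})^*(f, X) = 0$ and $B'|_{(g, k)}(h, w) = 0$, forces $D\mathcal H|_{(g, k)}(h, w) = 0$. Equating the two and invoking the integral negativity from the previous paragraph yields $E\,E'(0) - P \cdot P'(0) < 0$, hence
\[
\frac{d}{dt}\bigg|_{t = 0}\bigl(E(t)^2 - |P(t)|^2\bigr) = 2\bigl(E\,E'(0) - P \cdot P'(0)\bigr) < 0,
\]
and the strict inequality $E(t)^2 - |P(t)|^2 < E^2 - |P|^2$ follows for all sufficiently small $t > 0$.

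The main technical obstacle is verifying DEC for the entire family $(g(t), k(t))$, not merely to first order. On $\mathrm{supp}\,\chi \subset U$ we have $J \equiv 0$ and $\mu \geq 0$, so the first-order variation of $\mu(t) - |J(t)|_{g(t)}$ is proportional to $\rho_0 - |J_0|_g \geq 0$, with strict positivity on the interior of $\mathrm{supp}\,\chi$; this suffices to preserve DEC there for small $t$. Outside $\mathrm{supp}\,\chi$ the modified constraint is unchanged, but the physical dominant-energy scalar may still deviate via the nonlinear difference $\Phi - \overline\Phi_{(g, k)}$ at order $O(t^2)$. One controls this by exploiting the flexibility, noted just before \eqref{equation:functional}, to replace $\overline\Phi_{(g, k)}$ by a $\varphi$-modified operator $\Phi^{\varphi}_{(g, k)}$ whose prescription aligns (up to gauge) with the physical energy-momentum density; the strict pointwise inequality $\rho_0 > |J_0|_g$ then absorbs the higher-order errors and yields DEC for all $t \geq 0$ sufficiently small.
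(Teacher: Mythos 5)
Your construction of the family and the mass-decrease argument coincide with the paper's proof: Lemma~\ref{le:spacelike} supplies $(u,W)$ with $uf+g(W,X)<0$ on $U$, Proposition~\ref{pr:sur} produces $(g(t),k(t))$ with $\overline\Phi_{(g,k)}(g(t),k(t))=\overline\Phi_{(g,k)}(g,k)+t(\rho_0,J_0)$ and fixed Bartnik boundary data, and equating the two expressions for $D\mathcal H|_{(g,k)}$ (differentiating \eqref{equation:functional} versus Lemma~\ref{le:first} together with $(D\overline\Phi_{(g,k)})^*(f,X)=0$ and $B'|_{(g,k)}(h,w)=0$) forces $E\,E'(0)-P\cdot P'(0)<0$ and hence the strict decrease of $E(t)^2-|P(t)|^2$ for small $t>0$. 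That half is correct and is essentially the paper's argument.

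The genuine gap is in verifying the dominant energy condition for the family. On $\mathrm{supp}\,\chi$ your argument is only first order: at points where both $\mu-|J|_g$ and $\rho_0-|J_0|_g$ vanish or degenerate (e.g.\ near $\partial\,\mathrm{supp}\,\chi$), positivity of the first variation does not exclude $\mu_t-|J_t|_{g(t)}<0$ at order $t^2$. And your proposed remedy off the support --- switching to a $\varphi$-modified operator so that the strict inequality $\rho_0>|J_0|_g$ ``absorbs the higher-order errors'' --- cannot work as stated, because $\rho_0$ and $J_0$ vanish identically there and provide no margin to absorb anything. The tool you are missing is Lemma~\ref{lemma:sigma_preserve}, which is an exact (not linearized) pointwise comparison: if the modified constraint of $(g(t),k(t))$ equals that of $(g,k)$ plus $(u_t,W_t)$, then $\mu_t-|J_t|_{g(t)}\ge \mu-|J|_g+u_t-|W_t|_g-\tfrac12|g(t)-g|_g\,|W_t|_g$, valid for the finite perturbation, so there is no higher-order error to control at all. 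Its hypothesis that either $J$ or the prescribed $W_t$ vanishes at each point is precisely where the assumption $J=0$ on $U$ enters (on $U$ one has $J=0$; off $\mathrm{supp}\,\chi$ one has $W_t=0$), and after shrinking $U$ so that $u-|W|_g\ge c>0$ on the support, the right-hand side is nonnegative for all small $t\ge 0$, giving the dominant energy condition everywhere. This exact inequality is how the paper closes the step, and your proposal does not reach it.
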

\begin{proof}
By Lemma~\ref{le:spacelike}, there exists $(u, W)$ with $u > |W|_g$ on $   U$ such that 
\begin{align}\label{eq:product}
u f + g(W, X)<0 \quad \mbox{ on }  U.
\end{align}
By shrinking $U$, we may assume that $u-|W|_g\ge c>0$ for some constant $c$ on $U$. Let $\rho$ be a smooth bump function on $ U$ such that $\rho > 0$ in $\Int U$ and $\rho\equiv 0$ outside $U$. By Proposition~\ref{pr:sur}, for $|t|$ sufficiently small, there exists a smooth family of asymptotically flat initial data sets $(g(t), k(t))$ with $(g(0), k(0)) = (g, k)$ such that 
\begin{align}\label{eq:Phi}
     \overline \Phi_{(g, k)}(g(t), k(t)) =  \overline \Phi_{(g, k)}(g, k) + t\rho (u, W),
\end{align}
and,  if $\partial M\neq \emptyset$, the Bartnik boundary data of $(g(t), k(t))$ agree with those of $(g, k)$ on $\partial M$. 

Denote by $(\mu_t, J_t)$ and $(\mu, J)$ the energy and momentum densities of $(g(t), k(t))$ and $(g, k)$, respectively. By Lemma~\ref{lemma:sigma_preserve}, we have 
\[
    \mu_t - |J_t|_{g(t)} \ge \mu - |J|_g + t\rho \left(u -|W|_g -  \tfrac{1}{2} |g(t) - g|_g |W|_g\right).
\]
For $t>0$ sufficiently small, we have $u -|W|_g > \tfrac{1}{2} |g(t) - g|_g |W|_g$. Thus, $(g(t), k(t))$ satisfies the dominant energy condition. 

Since $(D\overline \Phi_{(g, k)})^*(f, X)=0$ and $(g(t), k(t))$ has the same Bartnik boundary data, by Lemma~\ref{le:first}, letting $\mathcal H$ be the Regge-Teitelboim Hamiltonian associated to $(f, X)$, we have
\begin{align*}
\begin{split}
	0&=\left. \tfrac{d}{dt} \right|_{t=0}\mathcal{H}(g(t), k(t)) \\
&=(n-1)\omega_{n-1} \left. \tfrac{d}{dt} \right|_{t=0} (E(t)^2 - |P(t)|^2 ) - \int_U \rho(u, W)\cdot (f, X)\, d\mu_g\\
    & >  (n-1)\omega_{n-1} \left. \tfrac{d}{dt} \right|_{t=0} (E(t)^2 - |P(t)|^2 ),
\end{split}
\end{align*}
where we use the definition \eqref{equation:functional}  of $\mathcal H$ and~\eqref{eq:Phi}  in the second line, and  \eqref{eq:product} in the last line. It completes the proof.  
\end{proof}

\section{Proofs of Theorems~\ref{Th:equality} and \ref{Th:Bartnik}}  \label{se:min}

 We present a uniform approach to the equality case of the positive mass theorem, Theorem~\ref{Th:equality}, and the Bartnik stationary vacuum conjecture, Theorem~\ref{Th:Bartnik}.

To set the stage, let $(M, g, k)$ be an asymptotically flat initial data set of type $(\alpha, q, \epsilon)$ satisfying the dominant energy  condition and having the ADM mass-momentum $E\ge |P|$. We assume that $M$ is complete, either with or without boundary. Let $\mathcal U$ be a sufficiently small neighborhood of $(g, k) $ in $\mathcal M^{2,\alpha}_{-q}(M)\times \C^{1,\alpha}_{-1-q}(M)$ consists of asymptotically flat initial data sets of type $(q, \alpha, \epsilon)$. 

\begin{itemize}
\item 
If $M$ has no boundary, we denote the constraint set 
\[
	\mathscr C= \big\{ (\gamma, \tau) \in \mathcal U:  (\gamma, \tau) \mbox{ satisfies the dominant energy condition} \mbox{ in $M$}\big\}.
\]
\item 
If $M$ has nonempty boundary, we denote the constraint set 
\begin{align*}
		\mathscr  C_{B}= \big\{ (\gamma, \tau) \in \mathcal U&:  (\gamma, \tau) \mbox{ satisfies the dominant energy condition} \mbox{ in $M$} \\
		&\qquad\qquad \mbox{and } B(\gamma, \tau) = B(g, k)\big\},
\end{align*}
\end{itemize}
%\margin{Is it more natural to assume that $\gamma^\intercal=g^\intercal$ and that $\tilde B(\gamma,\tau)-\tilde B(g,k)$ is a null vector (as opposed to $\tilde B(\gamma,\tau)-\tilde B(g,k)=0$)? Here $\tilde B(\gamma,\tau)= (  H_\gamma,  \tau(\nu_\gamma, \cdot),  \tr_{\gamma^\intercal} \tau ) $. This suffices to apply the PMT with corners.}
where  $B(\gamma,\tau)$ denotes the Bartnik boundary data from \eqref{eq:Bartnik}.

We say that $(g, k)$ \emph{minimizes the ADM mass} among $\mathscr C$ (respectively, $\mathscr C_B$)  if  for all  $(\gamma, \tau)\in \mathscr C$ (respectively, $(\gamma, \tau)\in \mathscr C_B$), we have 
\[
  E(\gamma, \tau) \ge |P(\gamma, \tau)| \quad \mbox{ and } \quad  E(\gamma, \tau)^2 - |P(\gamma, \tau)|^2 \ge E^2 - |P|^2.
\]
By applying Bartnik's method of Lagrange multipliers and the $\varphi$-modified constraint operator introduced in ~\cite{Huang-Lee:2024}, we obtain the following result. The proof combines established results with appropriate modifications.
\begin{theorem}\label{th:impro}
Let $(M, g, k)$ be an asymptotically flat initial data set of type $(q, \alpha, \epsilon)$ satisfying the dominant energy condition, with $E \ge |P|$ and $E>0$. Suppose $(g,k) \in \C^5_{\mathrm{loc}}(\Int M)\times \C^4_{\mathrm{loc}}(\Int M)$. Assume either:
\begin{enumerate}
\item \label{it:nobdry}$M$ is complete without boundary and $(g, k)$ minimizes the ADM mass among $\mathscr C$, or
\item $M$ is complete with boundary and $(g, k)$ minimizes the ADM mass among $\mathscr C_B$. \label{it:bdry}
\end{enumerate}
Then there exists a nontrivial lapse-shift pair $(f, X)$ with asymptotics $(f, X) - (E, -P) \in \C^{2,\alpha}_{-q}(M)$ satisfying
\begin{align}
(D\overline{\Phi}_{(g,k)})^*(f,X)&=0,\label{eq:pair}\\ 
fJ + |J|_g X&=0.\label{eq:J}
\end{align}

Consequently, $(M, g, k)$ embeds into a spacetime $(\N, \g)$ admitting a Killing vector field $\Y$, with $(f, X)$ as the lapse–shift pair.  The Einstein tensor of $\g$ satisfies $G = \frac{|J|_g}{f^2} \Y \otimes \Y$, where $\Y$ null wherever $J \neq 0$; in particular $\Ric_{\g}(\Y, \Y)=0$ and $G= O^{0,\alpha}(|x|^{-n-\epsilon})$.  

Furthermore, combining Theorem~\ref{th:causality}, $\Y$ must be causal. 
\end{theorem}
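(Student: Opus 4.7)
My approach is to apply the method of Lagrange multipliers in the refined form of \cite{Huang-Lee:2024}, using the modified constraint operator $\overline\Phi_{(g,k)}$ and the modified Regge--Teitelboim Hamiltonian $\mathcal H$ from~\eqref{equation:functional} associated to the fixed lapse--shift pair $(f_0, X_0)$ asymptotic to $(E,-P)$. Mass minimization on $\mathscr C$ (respectively $\mathscr C_B$) shows that $(g,k)$ is a critical point of $\mathcal H$ restricted to the constraint set, and from this stationarity I will extract a nontrivial pair $(f,X) \in \ker (D\overline\Phi_{(g,k)})^*$ with the prescribed null asymptotics.

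For the variational step I would invoke Proposition~\ref{pr:sur}: the local surjectivity of $\overline\Phi_{(g,k)}$, respectively of $T=(\overline\Phi_{(g,k)},B)$ in the boundary case, shows that every infinitesimal direction in the codomain is realized by an admissible variation $(g(t),k(t))$ preserving the Bartnik data. Testing with two-sided variations preserving DEC, the first-variation formula of Lemma~\ref{le:first} forces $(D\overline\Phi_{(g,k)})^*(f_0,X_0)$ to vanish, i.e., the KID equation~\eqref{eq:pair}. Standard elliptic regularity for this overdetermined adjoint system, combined with the $\C^5_{\loc}\times\C^4_{\loc}$ hypothesis on $(g,k)$, produces a genuine solution $(f,X)$ with $(f,X)-(E,-P)\in \C^{2,\alpha}_{-q}$; nontriviality follows from $E>0$.

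To extract~\eqref{eq:J}, I would use one-sided variations of the form $\overline\Phi_{(g,k)}(g(t),k(t)) = \overline\Phi_{(g,k)}(g,k) + t\rho(u,W)$ with $u\ge |W|_g$ and $\rho\ge 0$ compactly supported, which preserves DEC by the same scalar estimate invoked in the proof of Theorem~\ref{th:causality}. Minimization then forces $\int \rho\,(u,W)\cdot (f,X)\,d\mu_g \ge 0$ for every such $(u,W)$, and varying $(u,W)$ along the boundary of the DEC cone pins down the pointwise relation $fJ+|J|_g X = 0$ wherever $J\ne 0$. Equations~\eqref{eq:pair}--\eqref{eq:J} then feed into the Moncrief-type embedding construction of \cite{Huang-Lee:2024} to produce the spacetime $(\N,\g)$ and Killing field $\Y$ with lapse--shift pair $(f,X)$ along $M$. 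A direct constraint computation yields $G = \tfrac{|J|_g}{f^2}\,\Y\otimes \Y$, from which $\Ric_{\g}(\Y,\Y)=0$ and the decay $G=O^{0,\alpha}(|x|^{-n-\epsilon})$ are immediate; taking the $g$-norm of~\eqref{eq:J} gives $f=|X|_g$ wherever $J\ne 0$, so $\Y$ is null there.

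Finally, to see that $\Y$ is causal everywhere I would argue by contradiction using Theorem~\ref{th:causality}: if $f<|X|_g$ on some open $U\subset M$, then~\eqref{eq:J} forces $J=0$ on $U$, and Theorem~\ref{th:causality} produces an admissible deformation in $\mathscr C$ (or $\mathscr C_B$) strictly decreasing $E^2-|P|^2$, contradicting minimization. The main obstacle is the first variational step, namely balancing the DEC inequality constraint against the Bartnik boundary data while simultaneously pinning down the null asymptotics of $(f,X)$; once the Lagrange-multiplier pair is in hand, the algebraic identifications and the causality deduction follow essentially automatically.
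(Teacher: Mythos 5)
Your overall architecture for \eqref{eq:pair} matches the paper: reduce to the equality constraint $\overline\Phi_{(g,k)}(\gamma,\tau)=\overline\Phi_{(g,k)}(g,k)$ (plus Bartnik data), which preserves the dominant energy condition by Lemma~\ref{lemma:sigma_preserve}; use the surjectivity in Proposition~\ref{pr:sur} to invoke Bartnik's Lagrange multiplier theorem; identify the multiplier via Lemma~\ref{le:first}; and upgrade it by elliptic regularity and Lemma~\ref{le:asy}. Two imprecisions there: the first variation does not force $(D\overline\Phi_{(g,k)})^*(f_0,X_0)=0$ (the pair $(f_0,X_0)$ is an arbitrary reference); rather the multiplier $(f_1,X_1)$ solves $(D\overline\Phi_{(g,k)})^*(f_1,X_1)=-(D\overline\Phi_{(g,k)})^*(f_0,X_0)$ and one sets $(f,X)=(f_0+f_1,X_0+X_1)$. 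Also the asymptotics $(E,-P)$ are null only when $E=|P|$; the theorem allows $E>|P|$.

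The genuine gap is your derivation of \eqref{eq:J}. First-order, one-sided variations $\overline\Phi_{(g,k)}(g(t),k(t))=\overline\Phi_{(g,k)}(g,k)+t\rho(u,W)$ with $(u,W)$ in the cone $u\ge|W|_g$ yield, via stationarity and minimization, only the inequality $\int\rho\,(uf+\langle W,X\rangle)\,d\mu_g\ge 0$, hence at best the dual-cone condition $f\ge|X|_g$ pointwise; an inequality over a cone of test directions cannot pin down the exact pointwise identity $fJ+|J|_gX=0$, which encodes both $|X|_g=f$ and the anti-alignment of $X$ with $J$ where $J\neq0$. Worse, the DEC-preservation estimate you invoke, Lemma~\ref{lemma:sigma_preserve}, requires that either $J$ or $W$ vanish at each point, so where $J\neq 0$ your admissible variations have $W=0$ and localization gives only $f\ge 0$ there; and variations with $u=|W|_g$ exactly (the cone boundary) need not preserve the dominant energy condition at order $t^2$ where it is saturated. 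The paper obtains \eqref{eq:J} by a different mechanism: it runs the entire Lagrange-multiplier argument for the whole family of $\varphi$-modified operators $\Phi^\varphi_{(g,k)}$ and applies Theorem~4.1 of \cite{Huang-Lee:2024}, whose adjoints differ by the term $\varphi\, J\odot(fJ+|J|_gX)$, so that coincidence of multipliers for a dense set of $\varphi$ forces $fJ+|J|_gX=0$; this is also exactly where the $\C^5_{\mathrm{loc}}\times\C^4_{\mathrm{loc}}$ hypothesis is used. Since \eqref{eq:J} is what yields $G=\frac{|J|_g}{f^2}\Y\otimes\Y$, $\Ric_{\g}(\Y,\Y)=0$, and feeds the hypothesis $J=0$ into Theorem~\ref{th:causality} for the causality step, the gap propagates to those conclusions as well; the remainder of your outline (embedding via \cite{Huang-Lee:2024} and the contradiction via Theorem~\ref{th:causality}) agrees with the paper.
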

\begin{remark}\label{re:reg}
The assumption $(g,k) \in \C^5_{\mathrm{loc}}(\Int M)\times \C^4_{\mathrm{loc}}(\Int M)$ is only required for the application of Theorem 4.1 in \cite{Huang-Lee:2024}.
\end{remark}

\begin{proof}
We focus on proving the existence of such $(f, X)$. The remaining statement about $(\N, \g)$ follows from Theorem 6 in \cite{Huang-Lee:2024}.

We adapt Bartnik’s method of Lagrange multipliers, as implemented in \cite{Huang-Lee:2020}. We demonstrate the argument for case \eqref{it:bdry}, as the case \eqref{it:nobdry} without boundary is strictly simpler.  Fix a lapse-shift pair $(f_0, X_0) -  (E, -P) \in \C^{\infty}_{-q} (M)$. Recall the modified Regge–Teitelboim Hamiltonian~\ref{equation:functional} with respect to $(f_0, X_0)$:
\begin{align*}
\begin{split}
	\mathcal{H} (\gamma, \tau) &=(n-1)\omega_{n-1} \big(E E(\gamma,\tau) - P \cdot P(\gamma, \tau)\big) \\
	&\quad - \int_M \overline \Phi_{(g,k)}(\gamma, \tau) \cdot (f_0, X_0)\, d\mu_g.
\end{split}
\end{align*}

Let $\mathcal U$ be a sufficiently small open neighborhood of $(g, k)$ in the $\C^{2,\alpha}_{-q}(M)\times \C^{1,\alpha}_{-1-q}(M)$ norm. For $(\gamma, \tau)\in \mathcal U$ with $\overline\Phi_{(g,k)}(\gamma, \tau) = \overline \Phi_{(g,k)}(g, k) $ and the same Bartnik boundary data as $(g, k)$, Lemma~\ref{lemma:sigma_preserve}  implies that $(\gamma, \tau)$ satisfies the dominant energy condition and thus $(\gamma, \tau) \in \mathscr C_B$. The assumption that $(g, k)$ minimizes the ADM mass among $\mathscr C_B$, together with the reverse Cauchy-Schwarz inequality for future-directed causal vectors $(E, P)$ and $(E(\gamma, \tau), P(\gamma, \tau))$,  implies that $(g, k)$ minimizes $\mathcal H$ among all such $(\gamma, \tau)$, 
\[
	\mathcal H (\gamma, \tau) \ge \mathcal H(g, k).
\] 
    
To apply the method of Lagrange multipliers of Bartnik~\cite{Bartnik:2005} (see also \cite[Theorem D.1]{Huang-Lee:2020}), we use that the constraint of the minimization problem
\[
(h, w) \to (D\overline \Phi_{(g,k)}(h, w), B'|_{(g,k)}(h,w) ) \in \C^{0,\alpha}_{-2-q}(M)\times \mathcal B^{2,\alpha}(\partial M)
\] 
 is surjective by Proposition~\ref{pr:sur}. Therefore, there exists a Lagrange multiplier $(f_1, X_1, \beta)$ in the dual space of the constraint, that is,  $(f_1, X_1)\in (\C^{0,\alpha}_{-2-q}(M))^*$ and $\beta\in (\mathcal B^{2,\alpha}(\partial M))^*$
 %. In particular, $(\tilde f, \tilde X, \beta)$ lies in the dual space of compactly supported smooth functions and thus defines a distribution, 
 such that
 \[
 D\mathcal H|_{(g, k)} (h,w) = (f_1, X_1) \big(D\overline \Phi_{(g, k)}(h, w)\big) + \beta \big( B'|_{(g, k)}(h,w)\big)
\]	
for all  $(h, w) \in \C^{2,\alpha}_{-q}(M)\times \C^{1,\alpha}_{-1-q}(M)$. Note that  in the right hand side, $( f_1, X_1)$ and $\beta$ are bounded linear functionals acting on  the corresponding spaces. 
Taking $(h, w)$ to be compactly supported and using the  first variation formula \eqref{eq:first}  yields that $(f_1, X_1)$ weakly solves 
%\[
%	 D\mathcal F |_{(g, k)} (h,w)  =(\tilde f, \tilde X) \big( D\overline \Phi_{(g, k)}(h, w)\big).
%\]
%Combined with the first variation formula \eqref{eq:first} for $D\mathcal F |_{(g, k)} (h,w)  $ implies that $(\tilde f, \tilde X)$ is a weak solution (as a distribution) to the equation
\[
	(D\overline \Phi_{(g, k)})^* (f_1, X_1)= -  (D\overline \Phi_{(g, k)})^*(f_0, X_0) \in  \C^{0,\alpha}_{-2-q} (M)\times \C^{1,\alpha}_{-1-q}(M).
\]	
By elliptic regularity, $ (f_1, X_1) \in \C^{2,\alpha}_{\mathrm{loc}}(M) $.   Lemma~\ref{le:asy} on the asymptotics of $(f_1, X_1)$, together with the fact that $(f_1, X_1)$ defines a bounded linear functional,  implies that $(f_1, X_1) \in  \C^{2,\alpha}_{-q}$.  Setting $(f, X) =  (f_0, X_0) + (f_1, X_1)$ gives \eqref{eq:pair} with desired asymptotics $(f, X)-(E, -P)\in \C^{2,\alpha}_{-q}(M)$. 

The above arguments apply equally well for the $\varphi$-modified operators. Thus, for every bounded smooth $\varphi$,  there is a lapse-shift pair $(f, X)$ asymptotics to $(E, -P)$ solving  $(D\Phi^{\varphi}_{(g, k)})^*(f, X)=0$. By Theorem 4.1 of \cite{Huang-Lee:2024}, this implies that, for all $\varphi$ in a dense subset of smooth functions, the solution $(f, X)$ satisfies $(D\overline{\Phi}_{(g, k)})^*(f, X) = (D\Phi^{\varphi}_{(g, k)})^*(f, X) = 0$, as well as \eqref{eq:J}. This completes the proof.

%The only missing property is the $J$-null vector equation \eqref{eq:J}. To achieve this, note that all of the arguments carry over verbatim if we replace the modified constraint operator with the $\varphi$-modified constraint operator $\Phi_{(g,k)}^\varphi$  for any smooth bounded $\varphi$, provided $\varphi$ is sufficiently close to zero so that Lemma~\ref{lemma:sigma_preserve} applies. The same reasoning then implies that for each such $\varphi$, there exists a pair $(f, X)$ with asymptotics $(f, X)-(E, -P)\in \C^{2,\alpha}_{-q}(M)\times \C^{1,\alpha}_{-1-q}(M)$ solving the corresponding adjoint equation for the $\varphi$-modified constraint operator: $(D \Phi^\varphi_{(g, k)})^* (f, X) = 0$. By Theorem 4.1 of \cite{Huang-Lee:2024}, this implies that $(f, X)$ must also satisfy  the $J$-null vector equation \eqref{eq:J}. It completes the proof.
\end{proof}

We prove the equality case of the positive mass theorem, Theorem~\ref{Th:equality}. 

%\begin{theorem}\label{th:pmt}
%Let $(M, g, k)$ be an asymptotically flat initial data set, without boundary, satisfying  the dominant energy condition and with $E=|P|$. Suppose  $(g,k) \in \C^5_{\mathrm{loc}}(\Int M)\times \C^4_{\mathrm{loc}}(\Int M)$.  Suppose that $(g, k)$ minimizes the ADM mass among $\mathscr C$. Then $(M, g, k)$ is embedded in a pp-wave  $(\N, \g, \Y)$ where the Einstein tensor  of $\g$ satisfies $G = \rho \Y \otimes \Y$ for some $\rho\ge 0$.
%\end{theorem}
\begin{proof}[Proof of Theorem~\ref{Th:equality}]
Let $(M, g, k)$ be a complete, boundaryless, asymptotically flat initial data set with the dominant energy condition and the ADM energy-momentum $E=|P|$. Since the case $E=0$ is known (see, e.g. \cite{Huang-Lee:2025} and the references therein), we assume $E > 0$. By Theorem~\ref{th:impro},  $(M, g, k)$ embeds in $(\N, \g)$ with a causal Killing vector field $\Y$ satisfying $\Ric_{\g}(\Y, \Y)=0$.  By the strong maximum principle, Theorem \ref{Th:strong}, either $\Y$ is timelike or null everywhere on $\Int M$. However, since  $(f, X)$ is asymptotic to a null vector $(E, -P)$, $\Y$ cannot be timelike by Theorem~\ref{Th:uni}. Thus,  $\Y$ is a null Killing vector field. Then we apply Theorem~\ref{Th:pp} (prove in Section~\ref{S:vector} below) to conclude that $(\N, \g)$ is a pp-wave.
\end{proof}

Next, we define \emph{admissible extensions} in the context of Bartnik’s quasi-local mass.
\begin{definition}\label{definition:admissible}
Let $(\Omega, g_0, k_0)$ be a $n$-dimensional compact initial data set with nonempty smooth boundary.
We say that $(M, g, k)$ is an \emph{admissible extension of $(\Omega, g_0, k_0)$} if the following holds:
\begin{enumerate}
\item $(M, g, k)$ is an $n$-dimensional asymptotically flat initial data set  with boundary $\partial M$, satisfying the dominant energy condition. 
\item  There exists an identification of the boundaries $\partial M$ and  $\Sigma:=\partial \Omega$ via diffeomorphism, and under this identification, the Bartnik boundary data $B(g, k) = B(g_0, k_0)$  along~$\Sigma$.
\item $(M, g, k)$ satisfies condition~$(\mathscr{C}_1)$ that $\partial M$ is strictly outward-minimizing; that is, it has strictly less $(n-1)$-volume than any homologous hypersurface.
\end{enumerate}
\end{definition}
\begin{remark}
Our proof to the Bartnik stationary vacuum conjecture, which is implied by  Theorem~\ref{Th:Bartnik} below, does not rely explicitly on condition~$(\mathscr{C}_1)$, but only on the fact that it is an open condition under deformations of initial data in $\C^{2}_{-q}(M) \times \C^{1}_{-1-q}(M)$ that preserve the Bartnik boundary data.  Specifically, if $(M, g, k)$ is an admissible extension, then any $(\gamma, \tau)\in \mathscr C_B$ that is sufficiently close to $(g, k)$ is also an admissible extension.  

Two alternative conditions can replace $(\mathscr{C}_1)$: 
\begin{itemize}
\item Condition~$(\mathscr{C}_2)$: There is no MOTS that is an outer embedded boundary, except possibly $\partial M$ itself.
\item Condition~$(\mathscr{C}_3)$: A strengthening of $(\mathscr{C}_1)$,  requiring that $\partial M$ is not a MOTS. 
\end{itemize}
It is shown in \cite[Proposition 7.5]{Huang-Lee:2024} that the condition $(\mathscr{C}_2)$ is an open condition for $3\le n < 7$. 
\end{remark}

%The \emph{Bartnik mass} is defined as%
%\[
%m_B(\Omega, g_0, k_0) =\inf\big \{ m_{\mathrm {ADM}} (M, g, k): \mbox{admissible extension } (M, g, k)\big\}.
%\]
%An admissible extension $(M, g, k)$ whose ADM mass realizes the infimum is called a \emph{Bartnik mass minimizer} for $(\Omega, g_0, k_0)$.

\begin{proof}[Proof of Theorem~\ref{Th:Bartnik}]
Let $(M, g, k)$ be an asymptotically flat initial data set with boundary with the dominant energy condition. Suppose it has ADM energy-momentum $E>|P|$ and minimizes the ADM mass among $\mathscr C_B$. Theorem~\ref{th:impro} implies that $(M, g, k)$ embeds in a spacetime $(\N, \g)$ with a causal Killing vector field $\Y  $, whose lapse-shift $(f, X)$  is asymptotic to $(E, -P)$,  and  such that $\Ric_{\g}(\Y, \Y)=0$ and $G= O^{0,\alpha}(|x|^{-n-\epsilon})$.  By the strong maximum principle (Theorem \ref{Th:strong}), $\Y$ is either timelike or null everywhere on $\Int M$. Since  $E>|P|$, $\Y$ is timelike on $\Int M$.  
\end{proof}

\section{Null Killing Vectors}\label{S:vector}
We prove Theorem \ref{Th:pp} in this section.  For next two lemmas, it is  convenient to use a local \emph{double null frame} $\{ E_0, E_1, E_2, \dots, E_n\}$ on $\N$, where $E_0 = \Y$,  $E_1 = \underline \ell$ is the conjugate null vector such that $\g(\Y, \underline \ell)=-1$, and $\{ E_2, \dots, E_{n}\}$ is a local orthonormal set orthogonal to both $\Y$ and $\underline \ell$. Denote $\Y^\perp = \Span\{ \Y, E_2, \dots, E_n\}$. The following lemma is a direct consequence of Lemma~\ref{le:Y}.

\begin{lemma}\label{le:zero}
Suppose $(\N, \g)$ is a spacetime admitting a null Killing vector field~$\Y$ and satisfying $\Ric_{\g}(\Y, \Y)\le 0$. Then $\Ric_{\g}(\Y, \Y)\equiv 0$, $\Y$ does not vanish,   
\begin{align}\label{eq:Yvanish}
	\Y_{b| a} := \g(\bnabla_{E_a} \Y, E_b)=0 \quad \mbox{ for } a, b = 2, \dots, n,
\end{align}
and 
\begin{align}\label{eq:Yall2}
\left\{ 
\begin{array}{ll}
\bnabla_{E_a} \Y &= - \g(\bnabla_{E_a} \Y, \underline \ell) \Y \\
 \bnabla_{\Y} \Y &=0  \\
 \bnabla_{\underline \ell } \Y &   =- \g(\bnabla_{E_a } \Y, \underline \ell) E_a
\end{array}\right..
 \end{align}
\end{lemma}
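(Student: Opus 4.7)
The plan is to apply Lemma~\ref{le:Y} pointwise. Since $\Y$ is null everywhere on $\N$, we have $\eta \equiv 0$, and therefore $\bnabla\eta \equiv 0$ and $\Box_{\g}\eta \equiv 0 \ge 0$. Combined with the standing hypothesis $\Ric_{\g}(\Y,\Y) \le 0$, the full assumptions of the second part of Lemma~\ref{le:Y} are satisfied at every point of $\N$. Its conclusions immediately give $\Ric_{\g}(\Y,\Y) \equiv 0$, the non-vanishing of $\Y$, and the identity~\eqref{eq:Yvanish}.

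It remains to derive~\eqref{eq:Yall2}, which I would obtain by decomposing $\bnabla_V\Y$ in the local double null frame $\{\Y,\underline\ell,E_2,\dots,E_n\}$. With the convention $\g(\Y,\underline\ell) = -1$ used here, a vector $W$ expands as
\[
W = -\g(W,\underline\ell)\,\Y \;-\; \g(W,\Y)\,\underline\ell \;+\; \sum_{a=2}^{n}\g(W,E_a)\,E_a.
\]
Because $\Y$ is null, $\g(\bnabla_V\Y,\Y) = \tfrac{1}{2}V(\g(\Y,\Y)) = 0$ for every $V$, eliminating the $\underline\ell$-component in every case. Applying the expansion to $V = E_a$ and using \eqref{eq:Yvanish} to kill the tangential piece recovers the first line of \eqref{eq:Yall2}.

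For the remaining two lines I would invoke the Killing antisymmetry $\g(\bnabla_{E_\alpha}\Y,E_\beta) + \g(\bnabla_{E_\beta}\Y,E_\alpha) = 0$. With $V = \Y$ this forces $\g(\bnabla_\Y\Y,\underline\ell) = -\g(\bnabla_{\underline\ell}\Y,\Y) = 0$ and $\g(\bnabla_\Y\Y,E_a) = -\g(\bnabla_{E_a}\Y,\Y) = 0$, so $\bnabla_\Y\Y = 0$. With $V = \underline\ell$, antisymmetry on repeated indices gives $\g(\bnabla_{\underline\ell}\Y,\underline\ell) = 0$, while $\g(\bnabla_{\underline\ell}\Y,E_a) = -\g(\bnabla_{E_a}\Y,\underline\ell)$, yielding the third line.

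There is no serious obstacle: the statement is essentially a pointwise reformulation of Lemma~\ref{le:Y} followed by a routine frame decomposition. The only subtlety to track is that the sign convention $\g(\Y,\underline\ell) = -1$ adopted here differs from the convention $\g(\ell,\underline\ell) = 1$ in Lemma~\ref{le:Y}, which accounts for the sign flip in the first line of \eqref{eq:Yall2} relative to~\eqref{eq:Yall}.
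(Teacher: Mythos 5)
Your proposal is correct and follows exactly the route the paper intends: the paper states Lemma~\ref{le:zero} as a direct consequence of Lemma~\ref{le:Y}, applied pointwise since $\eta\equiv 0$ forces $\bnabla\eta\equiv 0$ and $\Box_{\g}\eta=0$, and your frame decomposition together with the Killing antisymmetry and the sign convention $\g(\Y,\underline\ell)=-1$ supplies precisely the routine details (including the sign flip relative to \eqref{eq:Yall}) that the paper leaves to the reader.
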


\begin{lemma}\label{le:integral}
Let $(\N, \g)$ be a spacetime with a null Killing vector field~$\Y$, and let  $(U, g, k)$ be an initial data set with  the lapse-shift pair $(f, X)$ of $\Y$ along $U$. If $\Ric_{\g}(\Y, \Y)\le 0$, then  $\Y^\perp$ is involutive. Let $S$ be an integral hypersurface of $\Y^\perp$. Then $\Sigma= U \cap S$ is orthogonal to the vector field $X$ and is a MOTS in $U$; that is, $H_g + \tr_{g^\intercal} k=0$, where $H_g = \Div_g \hat X$ and $\hat X = \frac{X}{f}$.
\end{lemma}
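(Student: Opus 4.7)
The plan is to prove the three assertions in order, all set in a local double null frame $\{\Y, \underline\ell, E_2, \ldots, E_n\}$ on $\N$ and relying throughout on Lemma~\ref{le:zero}.

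First, for involutivity of $\Y^\perp$, I will start from the identity
\[
\g(\Y, [V, W]) = V\g(\Y, W) - W\g(\Y, V) - \g(\bnabla_V\Y, W) + \g(\bnabla_W\Y, V)
\]
for $V, W \in \Y^\perp$. The first two terms vanish by definition, and the Killing antisymmetry $\g(\bnabla_V\Y, W) = -\g(\bnabla_W\Y, V)$ collapses the remainder to $-2\g(\bnabla_V\Y, W)$. I then invoke \eqref{eq:Yall2}: since $\bnabla_\Y\Y = 0$ and each $\bnabla_{E_a}\Y$ is proportional to $\Y$, and because $\Y$ is null (so $\Y$ itself lies in $\Y^\perp$), the pairing $\g(\bnabla_V\Y, W)$ vanishes on all of $\Y^\perp$. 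Hence $[V,W] \in \Y^\perp$, proving involutivity.

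Next, for the orthogonality of $X$ to $\Sigma$: any $V \in T\Sigma$ is tangent to both $U$ and $S$, hence $V \in \Y^\perp$, and this gives $0 = \g(V, f\n + X) = g(V, X)$ using $V \perp \n$. Since $\Y$ is null, $\eta = f^2 - |X|_g^2 = 0$ forces $|\hat X|_g = 1$, so $\hat X$ is a unit $g$-normal to $\Sigma$ in $(U, g)$.

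For the MOTS condition, I take a local orthonormal frame $\{E_2, \ldots, E_n\}$ of $T\Sigma$ and trace the Killing antisymmetry $\g(\bnabla_{E_a}\Y, E_a) = 0$ over $a$. Writing $\Y = f(\n + \hat X)$, the $E_a(f)$ contributions drop out because $E_a$ is $\g$-orthogonal to both $\n$ and $\hat X$, leaving $f\bigl(\g(\bnabla_{E_a}\n, E_a) + \g(\bnabla_{E_a}\hat X, E_a)\bigr)$. The Gauss formula identifies the first term as $k(E_a, E_a)$ and reduces the second to $g(\nabla_{E_a}\hat X, E_a)$. Summing and using $|\hat X|_g = 1$ (so $g(\nabla_{\hat X}\hat X, \hat X) = 0$ and hence $\Div_g \hat X$ on $\Sigma$ equals the tangential sum $\sum_a g(\nabla_{E_a}\hat X, E_a)$) yields $f(H_g + \tr_{g^\intercal} k) = 0$, from which the MOTS identity follows after dividing by $f > 0$. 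The only step that genuinely uses Lemma~\ref{le:zero} is the involutivity; parts 2 and 3 follow purely from $\Y$ being null and Killing, and I do not anticipate a serious obstacle, as the frame computations are routine once one writes $\Y = f(\n + \hat X)$ and keeps careful track of which vectors are tangent to which surface.
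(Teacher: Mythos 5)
Your proposal is correct and follows essentially the same route as the paper: involutivity via the Killing antisymmetry together with \eqref{eq:Yall2}/\eqref{eq:Yvanish}, and the identical orthogonality argument $g(E_a,X)=\g(E_a,\Y)-\g(E_a,f\n)=0$. For the MOTS identity the paper instead computes $H_g=\Div_g\hat X=f^{-1}\Div_g X-\hat X\cdot\tfrac{\nabla f}{f}$ and uses $\mathscr L_X g=-2kf$ and $\nabla(f^2-|X|_g^2)=0$, but your tangential trace of the Killing equation along $\Sigma$ (using $|\hat X|_g=1$) is an equivalent reorganization of the same computation, so no substantive difference or gap.
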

\begin{proof}

Using the fact that $\Y$ is Killing and  \eqref{eq:Yvanish}, we see that $\Y^\perp$ is involutive, and hence admits an integral hypersurface $S$  by the Frobenius theorem.  Denote by $\n$ the future-directed timelike unit normal to $U$ and consider a local orthonormal frame $\{ E_2, \dots, E_n\}$ of $\Sigma$. Since $E_a$ is orthogonal to both $\n$ and $\Y$ for all $a=2, \dots, n$, we have $ g(E_a, X) = \g(E_a, \Y) - \g(E_a, f\n) = 0$; thus, $\hat X$ is orthogonal to $\Sigma$. The mean curvature with respect to $\hat X$ is given by 
\begin{align*}
	H_g = \Div_g \hat X = f^{-1} \Div_g X - \hat X\cdot \tfrac{\nabla f}{f} = - \tr_g k + k(\hat X, \hat X) = - \tr_{g^{\intercal}} k,
\end{align*}
where we use $\mathscr L_X g = -2k f$ (which follows from $\mathscr L_{\Y} \g=0$ restricted to $U$) and $\nabla (f^2- |X|_g^2)=0$. 
\end{proof}

By \eqref{eq:Yall2}, $\bnabla \Y=0$ if and only if $\g(\bnabla_{E_a} \Y, \n)=0$ for all $E_a$ tangent to $\Sigma$. This motivates us to define the 1-form 
\[
\omega (\V) = \g(\bnabla_{\V} \Y, \n) 
\]
and to identify the conditions under which $\omega =0$. Note that along an initial data set $(U, g, k)$, we have
\begin{align}\label{eq:omega}
	\omega = -df - k(X,\cdot).
\end{align}

The following proposition shows that $\omega$ is a closed 1-form when restricted to $\Sigma$ and satisfies a key differential equation. In the next proposition, we let  $\{ e_0, e_1, e_2, \dots, e_n\}$ be a local orthonormal frame, where   $e_0= \n$, $e_1 = \hat X$, and $\{ e_2, \dots, e_n\}$ are tangent to $\Sigma$, 
\begin{proposition}\label{pr:omega}
Let $(\N, \g)$ be a spacetime admitting a null Killing vector field~$\Y$. Suppose that $\Ric_{\g}(\Y, \Y)=0$ everywhere.  Let $\Sigma$ be an $(n-1)$-dimensional spacelike submanifold as defined in Lemma~\ref{le:integral}. Then 
\begin{enumerate}
\item We have $\omega(\n)=0$ and $\omega(X)=0$, so the vector field dual to $\omega$, denoted by $W$,  is tangent to $\Sigma$.  
\item $\omega$ is a closed 1-form on $\Sigma$. 
\item $\omega$ and the dual vector field $W$ satisfy the following differential equations:
\begin{align}
 	\Div_{\g} \omega & = \Div_{\g} W= - \Ric_{\g} (\Y, \n) - \g^{\alpha \beta} \g(\bnabla_{e_\alpha} \Y,\bnabla_{e_\beta} \n)\quad \mbox{ on } \N \notag \\
	\Div_g  W &= - \Ric_{\g} (\Y, \n) \quad \mbox{ on } M \label{eq:div-M}\\
	\Div_\Sigma W &= g(\nabla_{\hat X} \hat X, W)  - \Ric_{\g} (\Y, \n) \quad \mbox{ on } \Sigma. \label{eq:div-sigma}
\end{align}  
\end{enumerate}  
\end{proposition}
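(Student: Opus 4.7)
I will prove the three items using the Killing equation, the null condition $\g(\Y,\Y)=0$, and the decomposition of $\bnabla\Y$ from Lemma~\ref{le:zero}. Part~(1) is immediate: $\omega(\n) = \g(\bnabla_\n\Y,\n) = 0$ by the Killing antisymmetry of $\bnabla\Y$ (with identical arguments). Differentiating the null condition gives $\g(\bnabla_V\Y,\Y)=0$ for every $V$; substituting $\Y = f\n + X$ yields $\omega(V) = -\frac{1}{f}\g(\bnabla_V\Y,X)$, and applying antisymmetry again with $V=X$ produces $\omega(X) = 0$. Since $\omega$ annihilates both $\n$ and $X$, its $\g$-dual vector $W$ lies in $TM$ and is $g$-orthogonal to $X$, so $W$ is tangent to $\Sigma$.

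For part~(2), use $\omega = -df - k(X,\cdot)$ on $M$ to write $d\omega = -d(k(X,\cdot))$. For $V,W \in T\Sigma$, the product rule and torsion-freeness give
\[
d\omega(V,W) = -\bigl[(\nabla_V k)(X,W) - (\nabla_W k)(X,V)\bigr] - \bigl[k(\nabla_V X, W) - k(\nabla_W X, V)\bigr].
\]
By Codazzi the first bracket equals $\mathbf R(V,W,X,\n) = \mathbf R(V,W,\Y,\n)$ (using $\Y = f\n + X$ and $\mathbf R(V,W,\n,\n)=0$). The key observation is that the curvature endomorphism applied to $\Y$ vanishes on pairs in $T\Sigma$: from \eqref{eq:Yall2}, $\bnabla_V\Y = c_V\Y$ along $\Sigma$ where $c_V := \g(\bnabla_V\Y,\underline\ell)$, so $\bnabla_V\bnabla_W\Y - \bnabla_W\bnabla_V\Y - \bnabla_{[V,W]}\Y = (Vc_W - Wc_V - c_{[V,W]})\Y = d\mu(V,W)\,\Y$ where $\mu(V) := c_V$. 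Differentiating the identity $d\Y^\flat = 2\mu\wedge\Y^\flat$ (a consequence of the block structure of $\bnabla\Y$ in the double null frame from Lemma~\ref{le:zero}) yields $d\mu\wedge\Y^\flat = 0$, and since $\Y^\flat|_{T\Sigma} = 0$ this forces $d\mu(V,W) = 0$; hence $\mathbf R(V,W,\Y,\n) = 0$. For the second bracket, decompose $\nabla_V X = -fk(V,\cdot)^\sharp + B(V,\cdot)^\sharp$ with $B$ the antisymmetric part of $\nabla X$ (valid since $\mathscr L_X g = -2fk$): the $k\cdot k$ contributions cancel by symmetry of $k$, and the remaining $k\cdot B$ contributions cancel via the identity $B(\cdot,\hat X) = -\omega$, which follows from $\omega_i = -f^{-1}B_{ij}X^j$ (derived by unpacking $\omega = -df - k(X,\cdot)$ with $|X|=f$) together with the MOTS property $H + \tr_{g^\intercal}k = 0$ from Lemma~\ref{le:integral}.

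For part~(3), start from $\omega_\alpha = \Y^\mu_{\;|\alpha}\n_\mu$, so
\[
\Div_\g\omega = \g^{\alpha\beta}\Y^\mu_{\;|\beta\alpha}\n_\mu + \g^{\alpha\beta}\g(\bnabla_{e_\alpha}\Y, \bnabla_{e_\beta}\n).
\]
The Killing identity \eqref{eq:DDY}, contracted via $\g^{\alpha\beta}$ and simplified through the Bianchi identity, gives $\g^{\alpha\beta}\Y^\mu_{\;|\beta\alpha} = -\Ric_\g(\Y,\cdot)^\mu$ in the paper's convention, producing the $-\Ric_\g(\Y,\n)$ term. The identity $\Div_\g\omega = \Div_\g W$ is automatic since $W$ is the metric dual. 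For the $M$-divergence, apply Lemma~\ref{le:div} to $W\in TM$: $\Div_\g W = \frac{1}{f}\Div_g(fW)$, and the correction arising from the $\frac{1}{f}$-factor cancels exactly against the coupling term $\g^{\alpha\beta}\g(\bnabla_{e_\alpha}\Y,\bnabla_{e_\beta}\n)$ using the Killing decomposition $\g(\bnabla_{e_i}\Y,e_j) = B_{ij}$ and the formula for $\bnabla_{e_\alpha}\n$ in terms of $k$, leaving $\Div_g W = -\Ric_\g(\Y,\n)$. For $\Div_\Sigma W$, since $\hat X$ is the unit normal to $\Sigma$ in $M$ and $W\in T\Sigma$, the orthogonal splitting $\Div_g W = \Div_\Sigma W + g(\nabla_{\hat X}W,\hat X)$ combined with $g(\nabla_{\hat X}W,\hat X) = -g(W,\nabla_{\hat X}\hat X)$ (from $g(W,\hat X)=0$) gives the third formula. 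The main obstacle will be the detailed cancellation in part~(2), especially tracking how the $\Sigma$-block $k\cdot B$ contributions combine with the $\hat X$-directional terms; the fact that $B$ restricted to $T\Sigma\times T\Sigma$ is an antisymmetric 2-form coupled to the symmetric $k$ should be the engine for the cancellation.
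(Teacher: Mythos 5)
Parts (1) and (3) of your outline are fine and essentially follow the paper's route: (1) is the same two-line argument, and the cancellation you invoke in (3) is exactly the paper's use of $\bnabla_{\n}\n=f^{-1}\nabla f$ together with the antisymmetry of $\g(\bnabla_{e_i}\Y,e_j)$ paired against the symmetric $k$, followed by the same $\Sigma$-splitting. In part (2) your treatment of the Codazzi term is genuinely different from the paper: from \eqref{eq:Yall2} one indeed gets $d\Y^\flat=2\mu\wedge\Y^\flat$, hence $d\mu\wedge\Y^\flat=0$ and so $d\mu(V,W)=0$ for $V,W\in\Y^\perp$, which gives $\mathbf R(V,W,\Y,\n)=0$ (to make the computation legitimate you must extend $V,W$ inside the distribution orthogonal to $\Y$ and $\underline\ell$, so that $\bnabla_W\Y=\mu(W)\Y$ holds in a spacetime neighborhood and $\mu([V,W])$ is meaningful, but this is a fixable bookkeeping point). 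The paper never isolates this curvature term; it keeps it and recombines it with the Killing Hessian identity \eqref{eq:DDY}, the Bianchi identity, and the $\g(\bnabla_{e_a}\Y,\bnabla_{e_b}\n)$ terms so that everything collapses to $e_0(\Y_{a|b})=0$.

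The genuine gap is your second bracket. Writing $\nabla X=-fk+B$, the $T\Sigma\times T\Sigma$ block of $B$ does vanish, but the reason is $\g(\bnabla_{e_a}\Y,e_b)=0$, i.e.\ \eqref{eq:Yvanish} (one has $g(\nabla_{e_a}X,e_b)=\g(\bnabla_{e_a}\Y,e_b)-fk_{ab}=-fk_{ab}$, which is symmetric); it has nothing to do with the MOTS identity, which is a trace condition and cannot see these components. Consequently $B(V,\cdot)^\sharp=-\omega(V)\,\hat X$ for $V\in T\Sigma$, and what survives of the second bracket is
\begin{equation*}
k(\nabla_VX,W)-k(\nabla_WX,V)\;=\;\omega(W)\,k(\hat X,V)-\omega(V)\,k(\hat X,W),\qquad V,W\in T\Sigma,
\end{equation*}
i.e.\ (up to sign) the $2$-form $\big(\omega\wedge k(\hat X,\cdot)\big)\big|_{T\Sigma}$. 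Neither the identity $B(\cdot,\hat X)=-\omega$ nor $H_g+\tr_{g^\intercal}k=0$ gives the vanishing of this expression: it involves the individual, non-trace components $k(\hat X,e_a)$, which none of your identities control, and since you have already disposed of the curvature bracket separately, there is nothing left for it to cancel against. In spacetime terms this leftover is exactly the pair $\g(\bnabla_{e_a}\Y,\bnabla_{e_b}\n)-\g(\bnabla_{e_b}\Y,\bnabla_{e_a}\n)$ that the paper does not discard but absorbs into the second covariant derivative of the identically vanishing components $\Y_{a|b}$ via \eqref{eq:DDY} and Bianchi. Splitting the two brackets and killing them one at a time is precisely the step your argument cannot complete, so as written item (2) is not proved.
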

\begin{proof}
The first item follows immediately from $\g(\bnabla_X \Y, \n) = \g(\bnabla_{\Y} \Y, \n) - \g(\bnabla_{f\n} \Y, \n) =0$.

 We next show that $\omega$ is a closed 1-form on~$\Sigma$. We compute, for $a, b\in \{ 2, \dots, n\}$, 
\begin{align*}
		&(\nabla^\Sigma _{e_b} \omega)(e_{a}) - (\nabla^\Sigma_{e_a} \omega)(e_b) = \omega_{a;b} - \omega_{b;a} \\
		&= \Y_{0| ab} + \g(\bnabla_{e_a} \Y, \bnabla_{e_b} \n) - \Y_{0|ba} - \g(\bnabla_{e_b} \Y, \bnabla_{e_a} \n) \\
		&= \mathbf R_{a0b\delta} \Y^\delta - \mathbf R_{b0a\delta} \Y^\delta + \g(\bnabla_{e_a} \Y, \bnabla_{e_b} \n) - \g(\bnabla_{e_b} \Y, \bnabla_{e_a} \n) \\
		&= -\mathbf R_{0ba\delta} \Y^\delta - \mathbf R_{ba0\delta} \Y^\delta- \mathbf R_{b0a\delta} \Y^\delta + \g(\bnabla_{e_a} \Y, \bnabla_{e_b} \n) - \g(\bnabla_{e_b} \Y, \bnabla_{e_a} \n)\\
		&= - \mathbf R_{ba0\delta} \Y^\delta + \g(\bnabla_{e_a} \Y, \bnabla_{e_b} \n) - \g(\bnabla_{e_b} \Y, \bnabla_{e_a} \n)\\
		&= \Y_{a|b0} + \g(\bnabla_{e_a} \Y, \bnabla_{e_b} \n) - \g(\bnabla_{e_b} \Y, \bnabla_{e_a} \n) = e_0 (\Y_{a|b} ) = 0
\end{align*}	  
where we use \eqref{eq:DDY} in the third line, the Bianchi identity in the fourth line, and $\Y_{a|b}\equiv 0$ by \eqref{eq:Yvanish} in the last line.

We compute 
\begin{align*}
	\Div_{\g} W &=	\Div_{\g} \omega =\g^{\alpha\beta} \omega_{\alpha |\beta}= \g^{\alpha \beta} \big( \Y_{0| \alpha \beta} -  \g(\bnabla_{e_\alpha} \Y,\bnabla_{e_\beta} \n)\big)\\
	& =- \Ric_{\g} (\Y, \n) - \g^{\alpha \beta} \g(\bnabla_{e_\alpha} \Y,\bnabla_{e_\beta} \n).
\end{align*}
By Lemma~\ref{le:div}, the above left hand side equals
\[
\Div_{\g} W  = \Div_g W + g( f^{-1}\nabla f, W) = \Div_g W + \g (\bnabla_{f^{-1} \nabla f} \Y, \n). 
\]
On the other hand, using \eqref{eq:n} that  $\bnabla_{\n} \n = f^{-1} \nabla f$, we have 
\begin{align*}
 - \g^{\alpha \beta} \g(\bnabla_{e_\alpha} \Y,\bnabla_{e_\beta} \n)&= - \g(\bnabla_{\n} \Y, \bnabla_{\n} \n) -  \g(\bnabla_{e_j} \Y, \bnabla_{e_j} \n)\\
 &=- \g(\bnabla_{\n} \Y, f^{-1} \nabla f) + k_{ij} \g(\nabla_{e_j} \Y, e_i) \\
 &= \g(\bnabla_{f^{-1} \nabla f} \Y, \n ) 
\end{align*}
where we use that $k_{ij}$ is symmetric and $\g(\nabla_{e_j} \Y, e_i)$ is anti-symmetric. Combining the above computations gives \eqref{eq:div-M}. Then \eqref{eq:div-sigma} follows directly. 
\end{proof}

Combining the above results with the assumption of asymptotic flatness, we prove the first part of Theorem \ref{Th:pp}.

\begin{proposition}
Let $(\N, \g)$  be a spacetime admitting a null Killing vector field~$\Y$  and an asymptotically flat initial data set $(M, g, k)$ where $M$ is complete without boundary. Assume  $\Ric_{\g}(\Y, \Y) = 0$ on $\N$, the associated lapse-shift pair $(f, X)$ is asymptotically translational on $M$, and $\Ric_{\g}(\Y, \n) = 0$ along $M$, where $\n$ is the unit normal to $M$.  
Then $\Y$ is parallel in the domain of dependence of $M$ in $(\N, \g)$. 
\end{proposition}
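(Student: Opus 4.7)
The plan is to first reduce the problem to showing that the 1‑form $\omega$ from Proposition~\ref{pr:omega} vanishes on $M$. By Lemma~\ref{le:zero}, in a double null frame $\{\Y,\underline{\ell},E_{2},\dots,E_{n}\}$ adapted to $\Y$ the Killing 2-form $\bnabla\Y$ has at most the components $c_{a}:=\g(\bnabla_{E_{a}}\Y,\underline{\ell})$ for $a=2,\dots,n$ nonzero, since $\bnabla_{E_{a}}\Y=-c_{a}\Y$, $\bnabla_{\Y}\Y=0$, and $\bnabla_{\underline{\ell}}\Y=-c_{a}E_{a}$. Using $\g(\Y,\n)=-f$, a direct computation then gives $\omega(E_{a})=fc_{a}$, while $\omega(\n)=\omega(X)=0$ by Proposition~\ref{pr:omega}. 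Hence $W$ is tangent to each leaf $\Sigma$ of the foliation $X^{\perp}$ in $M$, and $\bnabla\Y\equiv 0$ on $M$ if and only if $W\equiv 0$ on $M$.

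\textbf{A leafwise drift-Laplace equation and maximum principle.} By Proposition~\ref{pr:omega} combined with the hypothesis $\Ric_{\g}(\Y,\n)=0$, we have $\Div_{g}W=0$ on $M$ and $\Div_{\Sigma}W=g(\nabla_{\hat{X}}\hat{X},W)$ on each leaf $\Sigma$. Because $(f,X)$ is asymptotically translational to $(a,b)$ with $a=|b|$, the unit vector field $\hat{X}$ tends to a constant $\hat{b}$ at infinity, so every leaf $\Sigma$ is a complete $(n-1)$-dimensional hypersurface of $M$ asymptotic to a Euclidean hyperplane perpendicular to $\hat{b}$; moreover the asymptotic expansion gives $\omega=-df-k(X,\cdot)=O(|x|^{-q-1})$. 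Since $\omega|_{\Sigma}$ is closed on $\Sigma$ (Proposition~\ref{pr:omega}) and $\Sigma$ is simply connected at infinity, we can integrate to obtain $\omega|_{\Sigma}=d_{\Sigma}\phi$ for a function $\phi$ on $\Sigma$ decaying at infinity, and the identity for $\Div_{\Sigma}W$ becomes
\[
\Delta_{\Sigma}\phi-g(\nabla_{\hat{X}}\hat{X},\nabla_{\Sigma}\phi)=0.
\]
This is a linear elliptic equation with bounded drift and no zero-order term, so the strong maximum principle on the complete asymptotically Euclidean leaf $\Sigma$, together with $\phi\to 0$ at infinity, forces $\phi\equiv 0$. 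Therefore $W\equiv 0$ on every leaf, i.e.\ $W\equiv 0$ on $M$, whence $\bnabla\Y\equiv 0$ on $M$.

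\textbf{Extension to the domain of dependence.} Having $\bnabla\Y\equiv 0$ on $M$, $\Y$ is parallel along every curve in $M$, so parallel transport of $\Y(p)$ for any $p\in M$ around any loop in $M$ based at $p$ returns $\Y(p)$. Since $M$ is a Cauchy hypersurface of $\mathrm{DoD}(M)$, the inclusion $M\hookrightarrow\mathrm{DoD}(M)$ is a homotopy equivalence, so the same holds for any loop in $\mathrm{DoD}(M)$. Parallel transport of $\Y(p)$ therefore defines a globally well-defined parallel vector field $\tilde{\Y}$ on $\mathrm{DoD}(M)$ agreeing with $\Y$ along $M$; as parallel vector fields are automatically Killing, $\tilde{\Y}$ is Killing. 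Since a Killing vector field on a globally hyperbolic region is determined by its 1-jet on a Cauchy hypersurface, and $\Y$ and $\tilde{\Y}$ have matching values and vanishing $\bnabla$ along $M$, we conclude $\Y=\tilde{\Y}$ on $\mathrm{DoD}(M)$, completing the proof.

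\textbf{Main obstacle.} The main work is the middle step: verifying that each leaf $\Sigma$ is well-behaved enough (complete, asymptotic to a Euclidean hyperplane, simply connected at infinity) for the closed decaying 1-form $\omega|_{\Sigma}$ to admit a decaying primitive $\phi$ and for the strong maximum principle to apply to the drift-Laplace equation on $\Sigma$. Once $\bnabla\Y\equiv 0$ is established on $M$, the extension to the domain of dependence is a formal consequence of standard Killing uniqueness and the Cauchy-surface homotopy equivalence.
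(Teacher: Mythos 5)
Your first two steps are essentially the paper's own argument: reduce parallelism of $\Y$ to the vanishing of $\omega(\cdot)=\g(\bnabla_\cdot\Y,\n)$ via Lemma~\ref{le:zero}, restrict to the leaves of $X^\perp$, use Proposition~\ref{pr:omega} and the hypothesis $\Ric_{\g}(\Y,\n)=0$ to get the drift equation from \eqref{eq:div-sigma}, write $\omega|_\Sigma=d\phi$ with $\phi\to 0$, and conclude by the maximum principle. However, the point you defer as the ``main obstacle'' is a genuine gap, and the property you invoke would not close it: to obtain a \emph{global} primitive $\phi$ on a leaf you need the leaf to be simply connected (or at least all periods of $\omega|_\Sigma$ to vanish); being ``simply connected at infinity'' gives a primitive only outside a compact set. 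Moreover you assert, without proof, that \emph{every} leaf of the distribution $X^\perp$ is complete and asymptotic to a Euclidean hyperplane; a priori a leaf through the compact core of $M$ need not be so controlled (leaves of a foliation can a priori fail to reach the asymptotic region in a controlled way). The paper resolves exactly this: the asymptotically translational behavior of $(f,X)$ forces the leaves outside a large compact set to be diffeomorphic to $\mathbb R^{n-1}$, and Reeb stability then shows \emph{all} leaves are $\mathbb R^{n-1}$, which yields both the global exactness of $\omega|_\Sigma$ and the setting for the maximum principle.

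The extension to the domain of dependence is incorrect as written. Parallel transport is homotopy invariant only for flat connections, so the fact that loops in $M$ have trivial holonomy on $\Y$ (because $\bnabla\Y=0$ along $M$) together with the homotopy equivalence $M\hookrightarrow \mathrm{DoD}(M)$ does not make parallel transport in $\mathrm{DoD}(M)$ path independent; hence your field $\tilde\Y$ is not well defined, and the Killing-uniqueness statement you quote has no candidate to compare $\Y$ with. Note also that the Killing transport system \eqref{eq:DDY} does not propagate $\bnabla\Y=0$ off $M$ by itself, since the curvature term is sourced by $\Y\neq 0$. The correct mechanism, which is what the paper uses, is the flow of $\Y$ itself: since $\Y$ is Killing, its flow consists of isometries fixing $\Y$ and therefore preserves $\bnabla\Y$; every point of the domain of dependence lies on a maximal integral curve of the causal vector field $\Y$, which is an inextendible causal curve and hence meets the Cauchy surface $M$. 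Thus $\bnabla\Y=0$ on $M$ propagates along the Killing flow to all of the domain of dependence, completing the proof without any holonomy argument.
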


\begin{proof}

 Since $(f, X)$ is asymptotically translational and each leaf of the foliation $\{\Sigma\}$ of $M$ is orthogonal to $X$,  leaves outside a sufficiently large compact set of $M$ are diffeomorphic to $\mathbb R^{n-1}$, and Reeb’s stability theorem then implies all leaves are $\mathbb R^{n-1}$.
  
Proposition~\ref{pr:omega} gives that $\omega$ is a closed 1-form on $\Sigma$, so $\omega = dz$  for some function $z$. By \eqref{eq:omega}, $\omega \to 0$ at infinity, so we can take $z\to 0$ at infinity.      Applying \eqref{eq:div-sigma} for $W = \nabla^\Sigma z$ gives 
\[
\Delta_\Sigma z  -g(  \nabla_{\hat X} \hat X, \nabla^\Sigma z) = 0.
\]
The maximum principle then yields $z\equiv0$, hence $\omega \equiv 0$ on $M$, and $\omega=0$ along any integral curve of $\Y$.
\end{proof}

%\begin{lemma}
%    The level-sets $\Sigma$ are asymptotically flat.
%\end{lemma}

%\begin{proof}
%Let $\{e_i\}$ be an asymptotically flat coordinate system such that $|X|^{-1}X$ asymptotes to $e_n$.
%Since $(f,X)$ is asymptotically translational to some $(a,b)\in\mathbb R\times\mathbb R^n$ with 
%$$
%(f,X)=(a,b)+O^{2,\alpha}(|x|^{-q}),
%$$
%we have
%$$
%|X|^{-1}X=e_n+O^{2,\alpha}(|x|^{-q}).
%$$
%Let $\tilde e_i=e_i-|X|^{-2}\langle e_i,X\rangle X$.
%Then the induced metric $g_\Sigma$ satisfies
%\begin{align}\label{eq Sigma is AF}
%\begin{split}
%    g_\Sigma(\tilde e_i,\tilde e_j)=&g(\tilde e_i,\tilde e_j)\\
%    =&g(e_i-|X|^{-2}\langle e_i,X\rangle X,e_j-|X|^{-2}\langle e_j,X\rangle X)\\
 %   =&\delta_{ij}-|X|^{-2}g(e_i,X)g(e_j,X)+O^{2,\alpha}(|x|^{-q})\\
%    =&\delta_{ij}-g(e_i,e_n)g(e_j,e_n)+O^{2,\alpha}(|x|^{-q})\\
%    =&\delta_{ij}+O^{2,\alpha}(|x|^{-q})
%    \end{split}
%\end{align}
%Therefore, $\{\tilde e_i\}$ form a $\C^{2,\alpha}_{-q}$-asymptotically flat coordinate system for $\Sigma$, or in other words $\Sigma$ is asymptotically flat of order $q$.
%\end{proof}

 A triple $(\N, \g, \Y)$ is called a \emph{Brinkmann spacetime} if $(\N, \g)$ is a spacetime admitting a  parallel null vector field~$\Y$. After deriving some general properties of Brinkmann spacetimes, we prove Proposition~\ref{pr:Brinkmann} that implies the second part of Theorem~\ref{Th:pp}.
\begin{lemma}\label{le:sigma}
Let $(\N, \g, \Y)$ be a Brinkmann spacetime and $S$ an integral hypersurface of $\Y^\perp$. Let $\Sigma$ be any spacelike $(n-1)$-dimensional submanifold of $S$ with the induced metric $h$. Then 
\[
	\Ric_{\g}(V, W) = \Ric_h(V, W)
\]
for any vector fields $V, W$ tangent to $\Sigma$. 
\end{lemma}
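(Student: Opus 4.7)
The plan is to set up a double null frame adapted to $\Sigma$ and use the parallelism of $\Y$ twice: first to kill all Riemann components containing $\Y$, and then to show that the Gauss correction in the Gauss equation vanishes.

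Since $\Sigma \subset S$ is spacelike and codimension one in $S$, choose a local orthonormal frame $\{E_2,\dots,E_n\}$ of $T\Sigma$. Extend it to a local double null frame $\{\Y,\underline\ell,E_2,\dots,E_n\}$ of $\N$, where $\underline\ell$ is the null conjugate of $\Y$ chosen to be orthogonal to $T\Sigma$. Since $\Y \in \Y^\perp = TS$ and every $E_a$ lies in $T\Sigma \subset \Y^\perp$, both $\Y$ and $\underline\ell$ are normal to $\Sigma$ and form a null basis of the (rank $2$) normal bundle $N\Sigma$.

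\emph{Step 1: Riemann components containing $\Y$ vanish.} Because $\Y$ is parallel, $\bnabla_X \Y = 0$ for every $X$, hence
\[
\mathrm{Rm}_\g(A,B,\Y,D) = \g\big((\bnabla_A\bnabla_B - \bnabla_B \bnabla_A - \bnabla_{[A,B]})\Y,D\big) = 0,
\]
and by the pair symmetry of Riemann, $\mathrm{Rm}_\g(A,B,C,\Y) = 0$ as well. Using the paper's convention with nonzero inverse-metric components $\g^{01}=\g^{10}=1$ and $\g^{aa}=1$, the Ricci trace becomes
\[
\Ric_\g(V,W) = 2\,\mathrm{Rm}_\g(\Y,V,W,\underline\ell) + \sum_{a=2}^n \mathrm{Rm}_\g(E_a,V,W,E_a) = \sum_{a=2}^n \mathrm{Rm}_\g(E_a,V,W,E_a),
\]
since the $\Y$-$\underline\ell$ cross terms are killed by Step 1.

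\emph{Step 2: The second fundamental form takes values in $\Span\{\Y\}$.} Write $\mathrm{II}(X,Y) = A(X,Y)\,\Y + B(X,Y)\,\underline\ell$ for $X,Y$ tangent to $\Sigma$. Then
\[
B(X,Y) = \g(\mathrm{II}(X,Y),\Y) = \g(\bnabla_X Y,\Y) = X\,\g(Y,\Y) - \g(Y,\bnabla_X \Y) = 0,
\]
because $Y \in T\Sigma \subset \Y^\perp$ implies $\g(Y,\Y)\equiv 0$, and $\bnabla_X \Y = 0$ by parallelism. Hence $\mathrm{II}(X,Y) = A(X,Y)\Y$ is always proportional to the null vector $\Y$, so
\[
\g(\mathrm{II}(X,Z),\mathrm{II}(Y,W)) = A(X,Z)A(Y,W)\,\g(\Y,\Y) = 0
\]
for all tangent vectors.

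\emph{Step 3: Apply the Gauss equation.} For a codimension-two submanifold, the Gauss equation reads
\[
\mathrm{Rm}_h(X,Y,Z,W) = \mathrm{Rm}_\g(X,Y,Z,W) + \g(\mathrm{II}(X,W),\mathrm{II}(Y,Z)) - \g(\mathrm{II}(X,Z),\mathrm{II}(Y,W))
\]
for $X,Y,Z,W$ tangent to $\Sigma$. By Step 2 the two second-fundamental-form terms vanish, so $\mathrm{Rm}_h = \mathrm{Rm}_\g$ when all arguments are tangent to $\Sigma$. Tracing over $\{E_2,\dots,E_n\}$ and combining with Step 1 gives
\[
\Ric_h(V,W) = \sum_{a=2}^n \mathrm{Rm}_h(E_a,V,W,E_a) = \sum_{a=2}^n \mathrm{Rm}_\g(E_a,V,W,E_a) = \Ric_\g(V,W),
\]
which is the desired identity.

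The argument is essentially mechanical; the only subtlety is organizing the double null frame so that the two independent uses of $\bnabla \Y = 0$ (once to eliminate $\Y$-Riemann components, once to show $\mathrm{II}$ is $\Y$-valued) combine cleanly with the Gauss equation.
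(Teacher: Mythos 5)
Your proposal is correct and follows essentially the same route as the paper: use parallelism of $\Y$ to kill all Riemann components involving $\Y$ (hence the null cross terms in the Ricci trace), observe that the second fundamental form of $\Sigma$ in $\N$ is proportional to the null vector $\Y$ because its $\underline\ell$-component $\g(\bnabla_X Y,\Y)=-\g(Y,\bnabla_X\Y)=0$, and conclude via the Gauss equation that the correction terms vanish. The only (harmless) discrepancies are the sign convention $\g(\Y,\underline\ell)=1$ versus the paper's $-1$ and the cosmetic factor of $2$ in the vanishing cross term of the Ricci trace.
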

\begin{proof}
Consider a double null frame $\{ E_0, E_1, E_2, \dots, E_n\}$  where $E_0=\Y$, $E_1=\underline \ell$ the conjugate null vector,  and $\{ E_2, \dots, E_n\}$ is a local orthonormal frame on $\Sigma$. Let the indices $a, b, c, d$ range over $2, \dots, n$ below. We compute the second fundamental form of $\Sigma\subset \N$ and use that $\Y$ is parallel to obtain 
\[
	(\bnabla_{E_a} E_b)^\perp = -\g(\bnabla_{E_a} E_b, \underline \ell) \Y  - \g(\bnabla_{E_a} E_b, \Y) \underline \ell = -\g(\bnabla_{E_a} E_b, \underline \ell) \Y.
\]
Together with the Gauss equation,
\begin{align*}
	\mathrm{Rm}_{\g}(E_a, E_b, E_c, E_d) &= \mathrm{Rm}_h(E_a, E_b, E_c, E_d) \\
	&\quad - \g( (\bnabla_{E_a}E_d)^\perp, (\bnabla_{E_b} E_c)^\perp) + \g( (\bnabla_{E_a} E_c)^\perp, (\bnabla_{E_b} E_a)^\perp)\\
	&=  \mathrm{Rm}_h(E_a, E_b, E_c, E_d).
\end{align*}
Therefore, 
 \begin{align*}
  0 &= \Ric_{\g} (E_a, E_d)=\g^{\beta\gamma} \mathrm{Rm}_{\g}(E_a, E_\beta, E_\gamma, E_d) \\
  &=  \g^{01} \mathrm{Rm}_{\g}(E_a, \Y, \underline \ell, E_d) +\g^{10} \mathrm{Rm}_{\g}(E_a, \underline \ell, \Y, E_d)  +\mathrm{Rm}_{\g}(E_a, E_b, E_b, E_d) \\
  &= \Ric_h (E_a, E_d),
 \end{align*}
 where we use that $\g^{00} = \g^{11} = 0$, and that the curvature tensor components involving $\Y$ vanish because $\Y$ is parallel.

\end{proof}

\begin{proposition}\label{pr:Brinkmann}
Let $(\N, \g, \Y)$ be a  Brinkmann spacetime satisfying $\Ric_{\g}(\V, \W) = 0$ for all spacelike vector fields $\V, \W \in \Y^\perp$. Then the induced metric on any integral hypersurface $S$ of $\Y^\perp$ is Ricci-flat. Moreover, if either $n=3,4$ or $n\ge 5$ and $S$ contains a complete, boundaryless, asymptotically flat  $(n-1)$-dimensional spacelike submanifold, then  the induced metric on $S$ is flat. %In particular, when $n=3, 4$, the spacetime must be Minkowski.%\footnote{Note that an asymptotically flat Ricci flat manifold is Riemann flat.}  
\end{proposition}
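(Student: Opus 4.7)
The plan is to combine Lemma~\ref{le:sigma} with the Ricci hypothesis to transfer Ricci-flatness from the ambient to the intrinsic level, and then to handle flatness case by case in dimension. First I would pick any spacelike $(n-1)$-dimensional submanifold $\Sigma \subset S$ with induced metric $h$. Every tangent vector to $\Sigma$ lies in $TS \subset \Y^\perp$ and is spacelike, so by hypothesis $\Ric_{\g}(V,W) = 0$ for all $V, W \in T\Sigma$. Lemma~\ref{le:sigma} then gives $\Ric_h(V,W) = \Ric_{\g}(V,W) = 0$, establishing that every spacelike cross-section of $S$, and therefore $S$ itself in the sense of the pp-wave footnote, is Ricci-flat.

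For the flatness conclusion I would split by the dimension of $\Sigma$. If $n = 3$, then $\dim \Sigma = 2$ and the Riemann tensor is determined by the scalar curvature $R_h = \tr_h \Ric_h = 0$, so $(\Sigma, h)$ is flat. If $n = 4$, then $\dim \Sigma = 3$ and the Weyl tensor vanishes identically, so $\mathrm{Rm}_h$ is completely determined by $\Ric_h$ and $R_h$, both of which vanish, again yielding flatness.

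If $n \ge 5$, I would apply Bishop--Gromov comparison to the complete, boundaryless, asymptotically flat $(n-1)$-submanifold $\Sigma \subset S$ provided by hypothesis, which by the first part is Ricci-flat. Since $\Ric_h = 0 \ge 0$, the ratio
\[
r \longmapsto \frac{\mathrm{vol}_h(B_r(p))}{\mathrm{vol}_{\mathbb{E}^{n-1}}(B_r)}
\]
is non-increasing in $r$. It tends to $1$ as $r \to 0^+$ by local Euclidean behavior, and also as $r \to \infty$ by asymptotic flatness of $\Sigma$. A non-increasing function equal to $1$ at both ends must be identically $1$, so the Bishop--Gromov rigidity statement forces $(\Sigma, h)$ to be isometric to Euclidean space. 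Since $\Y$ is parallel, its flow acts by isometries of $\g$ and preserves the quotient Riemannian metric on $S/\langle \Y \rangle$, so flatness of one transverse cross-section propagates to the whole quotient and gives the vanishing of the intrinsic curvature tensor on $S$.

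The main obstacle is the $n \ge 5$ case: in dimensions $\ge 4$ there exist Ricci-flat Riemannian manifolds which are not flat (e.g.\ K3 surfaces and higher Calabi--Yau manifolds), so asymptotic flatness must be used decisively. The Bishop--Gromov rigidity route is the most direct. An alternative is to invoke the positive mass theorem applied to the scalar-flat AE manifold $(\Sigma, h)$, but that approach requires an additional argument showing the ADM mass of a Ricci-flat AE manifold vanishes, which makes the Bishop--Gromov argument preferable here.
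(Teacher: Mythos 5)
Your proposal is correct and follows essentially the same route as the paper's proof: Lemma~\ref{le:sigma} combined with the hypothesis gives Ricci-flatness of spacelike cross-sections, low-dimensional curvature algebra handles $n=3,4$, Bishop--Gromov rigidity gives the Euclidean identification of the asymptotically flat cross-section for $n\ge 5$, and flatness propagates along $S$ because $\Y$ is parallel. You merely spell out the volume-ratio rigidity argument that the paper cites in one line.
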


\begin{proof}
Since $S$ is a null hypersurface, showing that $S$ is Ricci flat reduces showing that any $(n-1)$-dimensional spacelike submanifold  $\Sigma$ is Ricci flat. By  Lemma~\ref{le:sigma} and the assumption  $\Ric_{\g} (\V, \W)=0$ for all spacelike $\V, \W \in \Y^\perp$, the induced metric on $\Sigma$ is Ricci flat.

If $n=3$ or $4$, $\Sigma$ has dimension $2$ or $3$, and is therefore flat. For $n \ge 5$, a complete, boundaryless, asymptotically flat, Ricci-flat manifold is isometric to Euclidean space (e.g., by the Bishop–Gromov volume comparison theorem). Hence, assuming $\Sigma$ is asymptotically flat, it is Euclidean, so the induced metric on $S$ is flat along $\Sigma$. Since curvature is constant along the integral curves of $\Y$,  $S$ is flat everywhere.
\end{proof}

\appendix

\section{Preliminaries on Initial Data Sets}\label{sec:af}

\subsection{Einstein constraint operators and dominant energy condition}
Let $n\ge 3$ and let $U$ be a connected $n$-dimensional smooth manifold, $g$ a Riemannian metric, and $k$ a symmetric $(0,2)$-tensor. We refer to such a triple $(U, g, k)$ as an \emph{initial data set}. We say that $(U, g, k)$ \emph{embeds} in a Lorentzian spacetime $(\mathbf{N}, \g)$ of one higher dimension if $(U, g)$ isometrically embeds into $(\mathbf{N}, \g)$ with $k$ as its second fundamental form. 

We denote by $G  = \Ric_{\g} - \tfrac{1}{2} R_{\g} \g$  the Einstein tensor of $\g$.  Standard computations give the following result, see, e.g. \cite[Appendix B]{Huang-Lee:2024}. 
\begin{lemma}
Let $(\N, \g)$ be a spacetime, and let $U$ be a spacelike hypersurface with the induced data $(g, k)$ and the future-directed unit normal $\n$. Let $e_i$ and $e_j$ be tangent vectors to $U$. Then components of the Einstein tensor take the form
\begin{align*}
	G(\n, \n)&= \tfrac{1}{2} (R_g - |k|^2 + (\tr_g k)^2)\\
	G(\n, e_i)&= (\Div_g k)_i - \nabla_i (\tr_g k)
\end{align*}
If $(\N, \g)$  admits a Killing vector field $\Y$ that is transverse to $U$,  and if $\Y = f\n + X$ along $U$. Then along $U$, the lapse-shift satisfies the relation:
\begin{align}\label{eq:LX}
	\mathscr L_X g = -2 k f.
\end{align}
Moreover, the spatial components of the Einstein tensor along $U$ are given by:
\begin{align}\label{eq:G}
\begin{split}
G(e_i, e_j)&= \left[R_{ij} -\tfrac{1}{2}R_g g_{ij}\right] \\
&\quad + \left[(\tr_g k)k_{ij} - 2k_{i\ell}k^\ell_j \right]
 + \left[ -\tfrac{1}{2}(\tr_g k)^2 +\tfrac{3}{2}|k|_g^2 \right]g_{ij}\\
&\quad+ f^{-1}\left[ -\tfrac{1}{2}(\mathscr L_X k)_{ij} + \tfrac{1}{2}\tr_g (\mathscr L_X k)g_{ij} -  f_{;ij} +(\Delta_g f) g_{ij} \right].
\end{split}
\end{align}

\end{lemma}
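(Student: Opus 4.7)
The lemma assembles three classical identities for a spacelike hypersurface plus one identity specific to the Killing setting; I would treat the four assertions in turn.

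For the Hamiltonian expression $G(\n,\n)$ and the momentum expression $G(\n,e_i)$, I would invoke the Gauss and Codazzi equations for $U\hookrightarrow(\N,\g)$ with timelike unit normal $\n$ and second fundamental form $k=(\bnabla\n)^\intercal$. The twice-contracted Gauss equation (with signs fixed by $\g(\n,\n)=-1$) gives $R_g=R_{\g}+2\Ric_{\g}(\n,\n)+(\tr_gk)^2-|k|^2$; inserting this into $G(\n,\n)=\Ric_{\g}(\n,\n)+\tfrac{1}{2}R_{\g}$ yields the Hamiltonian constraint. The Codazzi equation $\mathrm{Rm}_{\g}(X,Y,Z,\n)=(\nabla_Yk)(X,Z)-(\nabla_Xk)(Y,Z)$, contracted in its first and third slots via $g^{ij}$, gives $\Ric_{\g}(\n,e_i)=(\Div_gk)_i-\nabla_i(\tr_gk)$, and this equals $G(\n,e_i)$ because $\g(\n,e_i)=0$.

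For the shift identity $\mathscr L_Xg=-2kf$, I would apply $\mathscr L_\Y\g=0$ to a pair of tangent vectors $e_i,e_j$ after writing $\mathscr L_\Y=\mathscr L_{f\n}+\mathscr L_X$. The Leibniz rule $\mathscr L_{f\n}\g=f\,\mathscr L_\n\g+df\otimes\n^\flat+\n^\flat\otimes df$ contributes only the first term when evaluated on $(e_i,e_j)$, because $\n^\flat(e_i)=\g(\n,e_i)=0$; moreover $(\mathscr L_\n\g)(e_i,e_j)=\g(\bnabla_{e_i}\n,e_j)+\g(\bnabla_{e_j}\n,e_i)=2k(e_i,e_j)$ by the symmetry of $k$, while $(\mathscr L_X\g)(e_i,e_j)$ agrees with the intrinsic $(\mathscr L_Xg)(e_i,e_j)$ on tangent vectors. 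The Killing condition then reads $2fk+\mathscr L_Xg=0$.

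For the spatial components $G(e_i,e_j)$, I would work in the adapted coordinates~\eqref{eq:g} where $\Y=\partial_u$ and every component of $\g$ is $u$-independent. The standard ADM evolution equation
\[
(\partial_u-\mathscr L_X)k_{ij}=-f_{;ij}+f\bigl[R_{ij}+(\tr_gk)k_{ij}-2k_{i\ell}k^\ell_j-\Ric_{\g}(e_i,e_j)\bigr]
\]
becomes, under $\partial_uk_{ij}=0$, an algebraic identity expressing $\Ric_{\g}(e_i,e_j)$ in terms of $R_{ij}$, $k$, $f$ and $\mathscr L_Xk$. Tracing this identity with $g^{ij}$ and combining with the Hamiltonian constraint eliminates $R_{\g}$; substituting both into $G(e_i,e_j)=\Ric_{\g}(e_i,e_j)-\tfrac{1}{2}R_{\g}g_{ij}$ and regrouping produces the four-bracket decomposition in the statement. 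The main obstacle is sign and coefficient bookkeeping, since this paper uses $k=+\bnabla\n$ rather than the more common $k=-\bnabla\n$, which reverses several signs in the standard ADM formulas. As a consistency check, contracting the derived expression for $\Ric_{\g}(e_i,e_j)$ with $X^iX^j$ and adding $f^2\Ric_{\g}(\n,\n)$ and $2f\,\Ric_{\g}(\n,X)$ (read off from the constraints) must reproduce the formula for $\Ric_{\g}(\Y,\Y)$ quoted in the remark after Proposition~\ref{pr:elliptic2}. Alternatively, one can compute $\Ric_{\g}(\partial_i,\partial_j)$ directly from the explicit Christoffel symbols of~\eqref{eq:g} already listed in Section~\ref{se:mono}, which gives an independent verification.
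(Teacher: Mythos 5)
Your outline follows essentially the same route as the paper, which gives no computation of its own and cites the standard derivation (Gauss--Codazzi for the two constraint identities, splitting $\mathscr L_{\Y}\g=0$ into normal and tangential parts for \eqref{eq:LX}, and the ADM/Mainardi evolution identity on the $u$-independent slicing for the spatial components); your Lie-derivative argument for \eqref{eq:LX} and the Codazzi step are correct. However, two of the identities you display carry sign errors that, taken literally, fail to produce the stated formulas. First, the twice-contracted Gauss equation in Lorentzian signature is $R_g = R_{\g} + 2\Ric_{\g}(\n,\n) - (\tr_g k)^2 + |k|_g^2$; with the signs you wrote, inserting it into $G(\n,\n)=\Ric_{\g}(\n,\n)+\tfrac12 R_{\g}$ gives $\tfrac12\big(R_g+|k|_g^2-(\tr_gk)^2\big)$, the opposite of the Hamiltonian constraint. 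Second, the evolution equation you quote is the one adapted to the convention $k=-\bnabla\n$; with the paper's $k=(\bnabla\n)^\intercal$ and $\partial_u k_{ij}=0$ the correct identity is
\begin{align*}
\Ric_{\g}(e_i,e_j) \;=\; R_{ij} + (\tr_g k)\,k_{ij} - 2k_{i\ell}k^\ell{}_j - f^{-1}f_{;ij} - f^{-1}(\mathscr L_X k)_{ij},
\end{align*}
so the $\mathscr L_Xk$ term enters with the sign opposite to what your displayed equation yields; you flag the convention issue but do not resolve it, and since this term is linear in $k$ the sign does matter.

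A further point your own consistency checks would surface: carrying out your scheme with the corrected identity (trace it, eliminate $R_{\g}$ via the Hamiltonian constraint, substitute into $G(e_i,e_j)=\Ric_{\g}(e_i,e_j)-\tfrac12 R_{\g}g_{ij}$) produces the coefficients $-f^{-1}$ and $+f^{-1}$ on $(\mathscr L_Xk)_{ij}$ and $\tr_g(\mathscr L_Xk)\,g_{ij}$, not the $-\tfrac12 f^{-1}$ and $+\tfrac12 f^{-1}$ appearing in \eqref{eq:G}. The full coefficients are the ones consistent with the Remark after Proposition~\ref{pr:elliptic2} (whose $X^iX^j$ block carries $-f^{-1}\big(f_{;ij}+(\mathscr L_Xk)_{ij}\big)$), and they pass the test against Schwarzschild in Painlev\'e--Gullstrand form ($g$ flat, $f=1$, $X=\sqrt{2m/r}\,\partial_r$), where the printed halves give $G_{rr}=m r^{-3}\neq 0$ while the full coefficients give $G_{ij}=0$ as they must. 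The $\tfrac12$'s in \eqref{eq:G} thus appear to be a leftover of the cited source's normalization (Killing field $2f\n+X$ there). So treat your proposed check against the Remark not as optional bookkeeping but as an essential step: it detects both the sign of the $\mathscr L_Xk$ term in your evolution equation and this coefficient discrepancy.
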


Since the components $G(\n, \n)$ and $G(\n, e_i)$ can be expressed entirely in terms of the initial data, they are well-defined even without assuming that the initial data set embeds  into a spacetime. It is conventional to call these the \emph{energy density} $\mu$ and the \emph{momentum density} $J$,  denoted by
\begin{align} \label{equation:mass-current}
\begin{split}
	\mu&:=\tfrac{1}{2}\left(R_g  - |k|^2 + (\tr_g k )^2 \right)\\
	J&:=  \Div_g k - d (\tr_g k).
\end{split}
\end{align}
We say that an initial data set $(U, g, k)$ satisfies the \emph{dominant energy condition} if $\mu \ge |J|_g$ holds everywhere on $U$; equivalently  $G(\n, \w)\ge 0$ for all future-directed causal vectors $\w$.  This condition is closely related to a stronger energy condition, referred to as the \emph{spacetime dominant energy condition}, which requires that $G(\u, \w)\ge 0$ for all future-directed causal vectors $\u, \w$ in the spacetime $(\N, \g)$.

The \emph{constraint map}, defined on initial data sets $(g, k)$, is given by 
\[
\Phi(g, k) = (\mu, J).
\]
 The linearization of $\Phi$ at $(g, k)$, denoted by $D\Phi|_{(g, k)}$, takes the form: for any pair of symmetric $(0,2)$-tensors $(h, w)$ on $U$,
\begin{align} \label{eq:phi}
\begin{split}
	&D\Phi|_{(g, k)} (h, w) \\
	&\quad = \Big( \tfrac{1}{2} L_g h+  k_i^p k_{\ell p} h^{i\ell}  -   k^{ij}w_{ij}  + (\tr_g k) (-k^{ij}h_{ij}  + \tr_g w), \\
	& \quad \quad \Div_g w - d(\tr_g w ) - (k_{ij,p} - k_{jp,i}) h^{jp} - g^{jp} k_i^q h_{pq;j} + \tfrac{1}{2} g^{jp} k_i^q h_{pj;q} + \tfrac{1}{2} k^{jq} h_{jq;i} \Big),
\end{split}
\end{align}
where $L_g h = -\Delta_g (\tr_g h )+ \Div_g \Div_g h - h\cdot \Ric $, and indices are raised using the metric $g$. 

Let $(f, X)$ be a pair of a scalar function and a vector field on $M$, which we will often call a \emph{lapse-shift pair}. The  $L^2$-formal adjoint operator $(D\Phi|_{(g, k)})^*$  is given as follows:
\begin{align*}
	&(D\Phi|_{(g,k)})^* (f,X)\\
	&=\left(  \tfrac{1}{2} (-\Delta f g_{ij} + f_{;ij} - fR_{ij} ) + (k^{ip} k_{jp} -( \tr k )k_{ij})f\right. \\
	&\quad\left. + \tfrac{1}{2}( k_{ji;\ell}X^\ell +   k_{\ell j}X^\ell_{;i} +   k_{\ell i}X^\ell_{;j} - g_{ji} k_{q\ell, q} X^\ell - g_{ij} k^{q\ell} X_{\ell; q} - k_{ij} \mathrm{div}X),  \right.\\
	&\quad \quad  \left.-\tfrac{1}{2}(X_{i;j}+X_{j;i}) +(- k_{ij} + g_{ij} (\tr k))f+(\mathrm{div} X) g_{i j} \right).
\end{align*}

\begin{definition}[{\cite[Definition 3.1]{Huang-Lee:2024}, \cite[Section 2.2]{Corvino-Huang:2020}}]
Let $\varphi$ be a scalar function on $U$. The $\varphi$-\emph{modified constraint operator} with respect to $(g, k)$ is defined on initial data $(\gamma, \tau)$ on $U$ by 
\begin{align}\label{eq:varphi}
    \Phi^\varphi_{(g, k)} (\gamma, \tau) =\Phi(\gamma, \tau) + \big(0,\tfrac{1}{2} \gamma \cdot J\big) + \left(\varphi |J|_\gamma^2 \, , \varphi |J|_g \gamma \cdot J\right)
\end{align}
where $J=\Div_g k - d(\tr_g k)$ is the momentum density associated with the fixed background $(g, k)$,  and $(\gamma \cdot J)_i = g^{j\ell} \gamma_{ij} J_\ell$. Among this infinite-dimensional family of modified constraint operators, the case $\varphi =0$ is of particular importance, which is denoted by
\begin{align} \label{eq:varphi0}
\overline{\Phi}_{(g,k)} : = \Phi^{0}_{(g,k)}.
\end{align}

\end{definition}

We denote the linearization  of $\Phi^\varphi_{(g, k)}$ at the background initial data set $(g, k)$ by $D \Phi^\varphi_{(g, k)} :=D \Phi^\varphi_{(g, k)} |_{(g,k)} $ and the corresponding $L^2$-formal adjoint operator by $(D\Phi^\varphi_{(g, k)})^* $. Then 
\begin{align}
	D \Phi^\varphi_{(g, k)} (h, w)&= D\Phi|_{(g, k)} (h, w) +\big (0, \tfrac{1}{2}  h \cdot J\big) \label{eq:phi2} \\
	&\quad + (\varphi h_{ij} J^i J^j, \varphi |J|_g h\cdot J) \notag\\
    (D \Phi^\varphi_{(g, k)})^* (f, X)& =\left( D\Phi|_{(g,k)} \right)^*(f, X) + \big( \tfrac{1}{2} X\odot J, 0\big) \notag \\
    &\quad +\big (\varphi J \odot (fJ+ |J|_g X), 0\big). \notag
\end{align}

\begin{remark}
Note that the definitions of the modifiedconstraint operators given here closely follow \cite{Huang-Lee:2020, Huang-Lee:2024}, except for the following differences:
\begin{enumerate}
\item We use the second fundamental form $k$  instead of the conjugate momentum tensor $\pi^{ij}  := g^{i k}  g^{j\ell} k_{k\ell} -( \tr_g k)g^{ij}$, as used in  \cite{Huang-Lee:2020, Huang-Lee:2024}.
\item We define the constraint operator as $\Phi(g, k) = (\mu, J)$, rather than $(2\mu, J)$ as in \cite{Huang-Lee:2020, Huang-Lee:2024}. As a result, all formulas involving the lapse-shift pair $(f, X)$ here correspond to those in \cite{Huang-Lee:2020, Huang-Lee:2024} upon replacing $f$ there with $\tfrac{1}{2} f$ (or, in some cases, replacing $X$ with $2X$). This normalization is more natural geometrically, since $f\n + X$ directly represents the spacetime vector field, in contrast to 
 $2f\n +X$ in the earlier works.
 \item In  \cite[Definition 3.1]{Huang-Lee:2024},  more general $(\varphi, Z)$-modified constraint operators are considered. However, in this paper it suffices to take $Z =J$, and we therefore omit the superscript $Z$. 
\end{enumerate}
\end{remark}

The main advantage of   the $\varphi$-modified constraint operator $\Phi^\varphi_{(g,k)}$ is that controlling the modified constraints of $(\gamma, \tau)$ near $(g,k)$ yields good control over the dominant energy condition, as specified in the following lemma. It is a straightforward generalization of the case $W=0$ from \cite[Lemma 3.4]{Huang-Lee:2024}.

\begin{lemma}\label{lemma:sigma_preserve}
Let $(g, k)$ and $(\gamma, \tau)$ be initial data on a manifold $U$ with $|\gamma-g|_g<1$, and let $\varphi$ be a function on $U$ such that $|\varphi J|_g < \frac{\sqrt{2}-1}{2}$, where $(\mu, J)$ is the energy and momentum densities of $(g, k)$. Suppose
\begin{equation}\label{equation:prescribe}
\Phi^{\varphi}_{(g, k)} (\gamma, \tau  ) = \Phi^{\varphi}_{(g, k)} (g, k ) + (u, W)
\end{equation}
for some function $u$ and one-form $W$, with either $J$ or $W$ vanishing at each point.  Then the energy and momentum densities $(\bar \mu, \bar J)$ of $(\gamma, \tau)$ satisfies
\[
	\bar \mu - |\bar J|_{ \gamma} \ge \mu - | J|_{g}  + u - |W|_g -  \tfrac{1}{2} |\gamma- g|_g |W|_g.
\]	
In particular, if $(g, k)$  satisfies the dominant energy condition and $\Phi^{\varphi}_{(g, k)} (\gamma, \tau  ) = \Phi^{\varphi}_{(g, k)} (g, k )$, 
 then $(\gamma, \tau)$ also satisfies the dominant energy condition. 
\end{lemma}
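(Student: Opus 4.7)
The plan is to reduce the inequality to a pointwise comparison and split into two cases according to the hypothesis that, at each point, either $J=0$ or $W=0$. Subtracting $\Phi^\varphi_{(g,k)}(g,k)$ from $\Phi^\varphi_{(g,k)}(\gamma,\tau)$ using the definition \eqref{eq:varphi}, I first expect to derive the identities
\begin{align*}
\bar\mu &= \mu + u - \varphi\bigl(|J|_\gamma^2 - |J|_g^2\bigr),\\
\bar J &= J + W - \bigl(\tfrac{1}{2}+\varphi|J|_g\bigr)(\gamma-g)\cdot J,
\end{align*}
where $(\gamma-g)\cdot J$ is the one-form with components $g^{j\ell}(\gamma-g)_{ij}J_\ell$. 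These formulas already incorporate the fact that $\Phi^\varphi$ modifies the raw constraint operator $\Phi$ only by the explicit terms in \eqref{eq:varphi}.

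In the first case, at a point where $J=0$, every correction above vanishes and we simply have $\bar\mu=\mu+u$, $\bar J = W$. The target inequality then reduces to the purely algebraic norm comparison
\[
|W|_\gamma \le |W|_g + \tfrac{1}{2}|\gamma-g|_g|W|_g,
\]
which I would establish from the expansion $\gamma^{-1}=g^{-1}-g^{-1}(\gamma-g)g^{-1}+\cdots$, using $|\gamma-g|_g<1$ to guarantee invertibility and control the series.

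In the second (harder) case, at a point where $W=0$, I need to estimate $|\bar J|_\gamma$ when $\bar J = J - A(\gamma-g)\cdot J$ with $A=\tfrac{1}{2}+\varphi|J|_g$, and combine it with the residual $-\varphi(|J|_\gamma^2-|J|_g^2)$ hidden in $\bar\mu$. The hypothesis $|\varphi J|_g<(\sqrt2-1)/2$ is exactly what makes $|A|<\sqrt2/2$, so that $2A^2<1$ and the quadratic remainder in $|\bar J|_\gamma^2$ can be absorbed: after expanding $|\bar J|_\gamma^2$ and using $|J|_\gamma^2-|J|_g^2 = (\gamma^{ij}-g^{ij})J_iJ_j$, the leading cross term cancels against the $\varphi(|J|_\gamma^2-|J|_g^2)$ piece of $\bar\mu$, and the remaining terms combine into a sum of squares bounded by $|J|_g^2$. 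This will yield $\bar\mu-|\bar J|_\gamma \ge \mu-|J|_g+u$, matching the claim (since $|W|_g=0$ at the point).

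The main obstacle is the second case: carefully tracking the signs and magnitudes in the expansion of $|\bar J|_\gamma^2 = |J|_g^2 - 2A\langle J,(\gamma-g)\cdot J\rangle_g + A^2|(\gamma-g)\cdot J|_g^2 + \text{(corrections from }\gamma^{-1}-g^{-1}\text{)}$ so that the quadratic residue plus the $\varphi(|J|_\gamma^2-|J|_g^2)$ contribution are nonpositive precisely under the threshold $|\varphi J|_g<(\sqrt2-1)/2$. This threshold is sharp enough that a crude triangle-inequality argument will not suffice; instead, one must complete the square in $A$ (and use $A\le \sqrt2/2$) to close the estimate.
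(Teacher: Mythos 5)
Your overall architecture --- subtract $\Phi^{\varphi}_{(g,k)}(g,k)$ to get the two identities for $\bar\mu$ and $\bar J$, then argue pointwise using that $J$ or $W$ vanishes at each point --- is the same as the paper's (the paper does not split into cases but notes there are no cross terms, which is equivalent). The genuine gap is the norm convention you use for the perturbed momentum. You measure $|\cdot|_\gamma$ with the inverse metric $\gamma^{-1}$: this is explicit in your Case 1 plan (Neumann series for $\gamma^{-1}$) and in your identity $|J|_\gamma^2-|J|_g^2=(\gamma^{ij}-g^{ij})J_iJ_j$. The lemma, following \cite[Lemma 3.4]{Huang-Lee:2024} whose computation the paper's proof invokes, instead raises indices of $J$, $W$, $\bar J$ with the \emph{background} metric $g$ and contracts with $\gamma_{ij}=g_{ij}+h_{ij}$, $h:=\gamma-g$; the paper's proof writes $|\bar J|^2_\gamma=(g+h)_{ij}\bar J^i\bar J^j$. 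This is not cosmetic: under your convention the inequalities you propose are false. In Case 1, $|W|_\gamma\le\bigl(1+\tfrac12|h|_g\bigr)|W|_g$ fails already for $\gamma=g-c\,\theta\otimes\theta$ and $W=\theta$ a $g$-unit covector, any $c\in(0,1)$, since then $|W|_\gamma=(1-c)^{-1/2}>1+\tfrac{c}{2}$; correspondingly the Neumann series can only yield $(1-|h|_g)^{-1/2}|W|_g$, which always exceeds $(1+\tfrac12|h|_g)|W|_g$. In Case 2 the failure is at first order: with $W=0$ and $h(J,J):=h_{ij}J^iJ^j$ (indices raised with $g$), your convention gives
\begin{align*}
\bar\mu-|\bar J|_\gamma \;=\; \mu-|J|_g+u+\tfrac{h(J,J)}{|J|_g}\bigl(1+2\varphi|J|_g\bigr)+O(|h|_g^2),
\end{align*}
and the extra term has no sign (take $h=-\epsilon g$). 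So the cancellation of ``the leading cross term'' that your plan relies on does not occur: the correction $\tfrac12\,\gamma\cdot J$ in \eqref{eq:varphi} is engineered to cancel the first-order variation of $\gamma_{ij}\bar J^i\bar J^j$, not of $\gamma^{ij}\bar J_i\bar J_j$, and no completion of squares under the threshold $|\varphi J|_g<\tfrac{\sqrt2-1}{2}$ can repair a first-order defect.

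With the correct convention your two cases do work and essentially reproduce the paper's proof: Case 1 becomes a one-line estimate, $\gamma_{ij}W^iW^j\le(1+|h|_g)|W|_g^2$ together with $\sqrt{1+x}\le 1+\tfrac{x}{2}$ (no series; this is exactly where the $\tfrac12|\gamma-g|_g|W|_g$ term comes from), and Case 2 is precisely the quadratic estimate of \cite[Lemma 3.4]{Huang-Lee:2024}, where the hypothesis $|\varphi J|_g<\tfrac{\sqrt2-1}{2}$ enters. Note that you only sketch that estimate (``the remaining terms combine into a sum of squares bounded by $|J|_g^2$''); since it is the actual content of the lemma --- the rest is bookkeeping --- it would need to be carried out in the right convention, or cited as the paper does.
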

\begin{proof}
Let $h:=\gamma -g$. In what follows,  all lengths and inner products are computed using the metric $g$, unless otherwise specified.
  
Rewriting the hypothesis~\eqref{equation:prescribe} using the definition of the $\varphi$-modified constraint operator~\eqref{eq:varphi}, we obtain
\[
	\Phi(\gamma , \tau) = \Phi(g, k) + \left(-\varphi \langle h\cdot J,  J\rangle, - \left(\tfrac{1}{2} + \varphi |J|\right) h\cdot J\right) + (u, W),
\]
or equivalently,
\begin{align*}
	\bar{\mu} &= \mu  -\varphi  \langle h\cdot J,  J\rangle + u\\
	\bar{J} &=J - \left(\tfrac{1}{2} + \varphi |J| \right) h\cdot J + W.
\end{align*}
Following the same computations as in \cite[Lemma 3.4]{Huang-Lee:2024}, we have
\begin{align*}
	|\bar J|^2_{\gamma}& = (g+h)_{ij} \bar J^i \bar J^j  \\
	&\le (|J| - \varphi  \langle h\cdot J,  J\rangle)^2 + (1+|h|)|W|^2,
\end{align*}
where we used the assumption that  either $J$ or $W$ vanishes at each point, so there are no cross terms involving contractions between $J$ and $W$. Combining the square root of the above inequality with the  expression for $\bar \mu$ gives the desired inequality. 
\end{proof}

\subsection{Asymptotically flat initial data sets}

Let $n\ge 3$ and $M$ be a connected, $n$-dimensional manifold, possibly with compact boundary. Fix parameters $q\in\left(\tfrac{n-2}{2}, n-2\right)$,  $\alpha\in(0,1)$, and $\epsilon>0$. We say that an initial data set $(M, g, k)$ is \emph{asymptotically flat} (of type $(q, \alpha, \epsilon)$) if  there exist compact subsets $K_0 \subset M$ and $B\setminus \mathbb R^n$ and a diffeomorphism $M\setminus K  \cong \mathbb{R}^n \setminus B$ such that 
\[
	(g - g_{\mathbb{E}}, k) \in \C^{2,\alpha}_{-q} (M)\times \C^{1,\alpha}_{-1-q} (M)
\]
and, for some $\epsilon>0$,  
\begin{align} \label{eq:muJ}
	(\mu, J) \in \C^{0,\alpha}_{-n-\epsilon} (M),
\end{align}
%\margin{Do you think that the Lagrange multiplier method could work for $\mu,J\in L^1$ by choosing different function spaces?}
where $g_{\mathbb{E}}$ is a Riemannian background metric on $M$ that coincides with the Euclidean metric in the asymptotic coordinate chart $M\setminus  K_0  \cong \mathbb{R}^n \setminus  B$. %When the decay condition \eqref{eq:muJ} is not imposed, we simply say that $(M, g, k)$ is asymptotically flat of type $(q, \alpha)$.  
Note that under this definition, the manifold $(M, g)$ is necessarily complete, possibly with boundary.  We denote by $\mathcal M^{2,\alpha}_{-q}(M)$ the space of Riemannian metrics $g$ on $M$ such that $g - g_{\mathbb{E}}\in \C^{2,\alpha}_{-q} (M)$. 

The spaces $\C^{k,\alpha}_{-q}$ referred to above are \emph{weighted H\"older spaces} (as defined in~\cite{Huang-Lee:2020}, for example), defined with respect to a fixed asymptotically flat coordinate chart $\{ x^1, \dots, x^n \}$ arising from the chosen diffeomorphism. Let $|x| = \sqrt{(x^1)^2+\cdots + (x^n)^2}$.  Our convention is  that $f\in \C^{k,\alpha}_{-q}(M)$ if and only if $f\in \C^{k,\alpha}_{\mathrm{loc}}(M)$ and there is a positive constant $C$ such that, for any multi-indices $I$ with $|I|\le k$,
\begin{align*}
	|(\partial^I f)(x)|\le C|x|^{-|I|-q} \quad \mbox{ and } \quad [f]_{k,\alpha; B_1(x)} \le C|x|^{-k-\alpha-q}
\end{align*}
on $M\setminus  K_0$.  We use the notation $O^{k,\alpha}(|x|^{-q})$ to denote an arbitrary function or tensor components in $\C^{k,\alpha}_{-q}$, and we write $O(|x|^{-q})$ for $O^0(|x|^{-q})$.

The \emph{ADM energy} $E$ and the \emph{ADM linear momentum} $P=(P_1, \dots, P_n)$ of an asymptotically flat initial data set $(M, g, k)$ are defined by
\begin{align*}
	E&:= \tfrac{1}{2(n-1)\omega_{n-1}} \lim_{k\to \infty} \int_{|x|=r}\sum_{i,j=1}^n (g_{ij,i}-g_{ii,j})\nu^j \, d\sigma_0\\
	P_i &:= \tfrac{1}{(n-1)\omega_{n-1}} \lim_{k\to \infty} \int_{|x|=r} \sum_{i,j=1}^n (k_{ij}  - (\tr_g k)g_{ij} )\nu^j \, d\sigma_0.
\end{align*}	
These integrals are taken over the coordinate spheres  $\{ |x|=r \}$ in the asymptotic region  $M\setminus  K \cong \mathbb{R}^n \setminus  B$. Here, $\nu^j$ denotes the Euclidean outward unit normal, $d\sigma_0$ is the measure induced by the Euclidean metric,  $\omega_{n-1}$ is the volume of the standard $(n-1)$-dimensional unit sphere, and the commas denote partial derivatives with respect to the coordinates. The condition $q>\frac{n-2}{2}$, together with the asymptotics \eqref{eq:muJ},  ensures that the limits defining the ADM energy-momentum exist. When  $E\ge |P|$, the \emph{ADM  mass} is defined as $m_{\mathrm{ADM}} := \sqrt{ E^2 -|P|^2}$. We say the \emph{ADM mass is zero} if $E=|P|$. 

The next two results describe the asymptotic behavior of a lapse-shift pair on asymptotically flat initial data sets.

\begin{lemma}[{Cf. \cite{Beig-Chrusciel:1996, Chrusciel-Maerten:2006}}]\label{le:asy}
Let $(M, g, k)$ be an asymptotically flat initial data set,  and let $(f, X)\in \C^{2,\alpha}_{\mathrm{loc}}$ be a lapse-shift pair. In the asymptotically flat exterior,  write the components as $Z=(f, X^1, \dots, X^n)$. Suppose that, with respect to the asymptotically flat coordinate chart, $Z$ satisfies the linear system of Hessian equations:
 \begin{align}\label{eq:Hessian}
	Z_{i,jk} + A_{ijk pq}Z_{p, q} + B_{ijk\ell} Z_\ell = O^{0,\alpha}(|x|^{-2-q}),
\end{align}
where the coefficient matrices satisfy $ A_{ijk pq} =O^{1,\alpha}(|x|^{-1-q})$ and $B_{ijk\ell} = O^{0,\alpha} (|x|^{-2-q})$. 

Then $(f, X)$ has precisely one of the following asymptotics:
\begin{enumerate}
\item $f = a_i x^i + O^{2,\alpha}(|x|^{1-q})$ and $X^j= b_{j\ell} x^\ell + O^2(|x|^{1-q})$ for  constants $a_i, b_{j\ell}$, not all zero. 
\item  $(f, X) = (a, b) + O^{2,\alpha}(|x|^{-q})$ for contants $(a, b)\in \mathbb R\times \mathbb R^n$, not all zero.\label{it:2}
\item  $(f, X) = O^{2,\alpha}(|x|^{-q})$. 
\end{enumerate}
Furthermore, suppose
\begin{align}\label{eq:Hessian0}
Z_{i,jk} + A_{ijk pq}Z_{p, q} + B_{ijk\ell} Z_\ell =0.
\end{align}  
If  $(f, X)\to 0$, then $(f, X)\equiv 0$.
\end{lemma}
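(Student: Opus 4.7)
The plan follows the asymptotic analysis of Beig--Chru\'sciel and Chru\'sciel--Maerten for the asymptotically flat end, exploiting that \eqref{eq:Hessian} controls the \emph{entire} Hessian of each component, not just its Laplacian, so the expansion can be obtained by radial integration rather than by inverting $\Delta$. First I would establish an a priori polynomial growth bound $|Z| = O(|x|^N)$ by rescaling and interior Schauder estimates on dyadic annuli $\{r \le |x| \le 2r\}$. A downward bootstrap then uses the Hessian bound directly: if $|Z| = O(|x|^s)$ for some $s \ge 1+q$, then \eqref{eq:Hessian} gives $|\partial^2 Z| = O(|x|^{s-2-q})$, and integrating twice along radial segments improves the growth to $|x|^{s-q}$. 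Finitely many iterations reduce to $|Z| = O(|x|)$, $|\partial Z| = O(1)$, $|\partial^2 Z| = O(|x|^{-1-q})$.

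Next I extract the asymptotic expansion by radial integration. From $|\partial^2 Z(x)| = O(|x|^{-1-q})$ the integral $\int_1^\infty \partial^2 Z(t\omega)\,dt$ converges with remainder $O(R^{-q})$, so $\partial Z(R\omega)$ has a limit $b(\omega)$ as $R\to\infty$; the full Hessian estimate (in particular its tangential components) controls the angular variation of this limit and forces $b(\omega) = b\cdot\omega$ for a constant matrix $b = (b_{ij})$. A second radial integration then gives
\begin{align*}
Z_i(x) = a_i + b_{ij}\, x^j + O^{2,\alpha}(|x|^{1-q}),
\end{align*}
with constants $a_i$. The three cases of the statement correspond to $b \ne 0$ (case 1), $b = 0 \ne a$ (case 2), and $a = b = 0$ (case 3). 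In case 3, one or two further iterations of the bootstrap --- substituting the improved decay of $(Z,\partial Z)$ back into \eqref{eq:Hessian} and using that for $q > (n-2)/2 \ge \tfrac{1}{2}$ the inhomogeneous source $O(|x|^{-2-q})$ eventually dominates the quadratic error terms --- upgrade the remainder to $Z \in \C^{2,\alpha}_{-q}$.

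For the uniqueness statement, the hypothesis $(f,X)\to 0$ places $Z$ in case 3, so $Z \in \C^{2,\alpha}_{-q}$. Because the inhomogeneous source is absent in \eqref{eq:Hessian0}, each iteration now improves decay by a factor $|x|^{-q}$: from $Z \in \C^{2,\alpha}_{-kq}$ the Hessian estimate yields $|\partial^2 Z| = O(|x|^{-2-(k+1)q})$, and radial integration gives $Z \in \C^{2,\alpha}_{-(k+1)q}$. Thus $Z$ decays faster than any polynomial at infinity. The main obstacle is concluding $Z \equiv 0$ from this super-polynomial decay, since the iteration alone produces only polynomial-rate decay at each step and not exponential. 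For this final step I would invoke a unique continuation at infinity for second-order elliptic systems with coefficients decaying at infinity --- e.g.\ a Carleman estimate in the spirit of Meshkov or Escauriaza--Kenig--Ponce--Vega, tailored to the present rapidly decaying coefficients --- to force $Z \equiv 0$ on a neighborhood of infinity, after which the Aronszajn--Krzywicki--Szarski strong unique continuation principle propagates the vanishing across the connected manifold $M$.
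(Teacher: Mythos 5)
Two steps of your plan do not hold up as written; the intermediate radial-integration bootstrap is sound in spirit (and close to what Beig--Chru\'sciel and the paper actually do), but its base case and its endgame are both gaps. First, the a priori bound $|Z|=O(|x|^N)$ cannot be obtained from ``rescaling and interior Schauder estimates on dyadic annuli'': Schauder estimates bound derivatives of $Z$ on an annulus in terms of $\sup|Z|$ on a larger annulus, so they give no control whatsoever on how fast $\sup|Z|$ itself may grow (solutions of elliptic equations on exterior domains can grow exponentially), and without a finite starting exponent your downward induction has no base case. The growth bound must come from the equation's ODE structure: restricting \eqref{eq:Hessian} to rays gives a linear second-order ODE system for $(Z,\partial Z)$ with coefficients $O(r^{-1-q})$, and a Gronwall/bootstrap along rays yields at most linear growth directly; this is exactly how the paper (following \cite{Beig-Chrusciel:1996}) starts, after which it obtains the trichotomy by taking the trace of \eqref{eq:Hessian} and using harmonic expansion, rather than by iterated radial integration. (Your radial route can also produce the sharp $O(|x|^{-q})$ remainder in case (2) using $2q>1$, but it needs the accompanying gradient bounds, e.g.\ by interpolation on annuli, before $|\partial^2 Z|=O(|x|^{s-2-q})$ can be asserted.)

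Second, the uniqueness step is both misdirected and, as cited, inapplicable. Meshkov/Escauriaza--Kenig--Ponce--Vega-type unique continuation at infinity addresses bounded potentials and requires decay at (or beyond) specific \emph{exponential} rates; as you yourself note, your iteration only produces polynomial decay of every order, which does not meet those hypotheses, so the step does not close as stated. Moreover, Aronszajn-type strong unique continuation would require the equation on all of $M$, whereas \eqref{eq:Hessian0} is only assumed in the asymptotically flat chart. The point of the Hessian structure is that no elliptic unique-continuation machinery is needed: \eqref{eq:Hessian0} expresses the full second derivatives in terms of $(Z,\partial Z)$, so $(Z,\partial Z)$ satisfies a linear first-order ODE system along every curve. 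Integrating the homogeneous system from infinity, a decay rate $r^{-N}$ improves to $r^{-N-q}$ with a constant gaining a factor comparable to $(N+q)^{-1}$; these factors accumulate factorially, so for fixed $r$ the bounds tend to zero and $(Z,\partial Z)$ vanishes identically near infinity (equivalently, one can argue via the Levinson-type fundamental system $\{1,r\}$ as in \cite{Beig-Chrusciel:1996}), and the vanishing then propagates along curves by ODE uniqueness wherever the equation holds. This elementary ODE argument is what the paper's proof invokes, and it is the missing ingredient that your Carleman-based proposal replaces with tools whose hypotheses are not satisfied here.
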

\begin{remark}
The condition \eqref{eq:Hessian} or \eqref{eq:Hessian0} holds in the following cases used in this paper.
\begin{enumerate}
\item If $(f, X)$ solves $(D \Phi^\varphi_{(g, k)})^* (f, X)\in \C^{0,\alpha}_{-2-q} (M)\times \C^{1,\alpha}_{-1-q}(M)$ for any bounded smooth function $\varphi$, then $(f, X)$ satisfies \eqref{eq:Hessian}. 
\item Let $(\N, \g)$ be a spacetime admitting a Killing vector field $\Y$ and an asymptotically flat initial data set $(M, g, k)$. If $G(\partial_i, \partial_j) = O^{0,\alpha}(r^{-2-q})$ for asymptotically flat coordinate vectors of $M$, then the lapse–shift pair $(f, X)$ of $\Y$ along $M$ satisfies \eqref{eq:Hessian0} by \eqref{eq:LX} and \eqref{eq:G}.
\end{enumerate}
\end{remark}
\begin{proof}
We overview the arguments in \cite[Appendix C]{Beig-Chrusciel:1996}. The Hessian equations imply that  $Z$ satisfies a second-order ODE along each fixed angular direction, for $r := |x| \in [r_0, \infty)$, 
\[
	Z''(r) = A(r) Z'(r) + r B(r) Z(r) + O(r^{-2-q})
\]
where the matrix coefficients satisfy $|A(r)|+ r|B(r)| = O(r^{-1-q})$. An ODE and bootstrap argument then shows that $Z$ grows at most linearly: $|Z(r)|\le  C r$ for some constant $C$.  Taking the trace of the Hessian equations for $(f, X)$ shows that $(f, X)$ also satisfies a system of Laplace equations. Since $(f, X)$ has at most linear growth, the asymptotic result follows from harmonic expansion. If \eqref{eq:Hessian0} holds, the bootstrap argument implies that $Z$ vanishes to infinite order at infinity, and thus must be identically zero.
\end{proof}

We say that  the lapse-shift pair $(f, X)$ is \emph{asymptotically translational} if there exists $(a, b)\in \mathbb R\times \mathbb R^n$ such that $(f, X)- (a, b) \in \C^{2,\alpha}_{-q}(M)$. It follows from \cite[Proposition 3.1 and Theorem 3.4]{Beig-Chrusciel:1996} and \cite[Theorem A.6]{Huang-Lee:2020} that $(f, X)$ admits the following more detailed asymptotic expansions. (Note that the coefficients differ slightly from those in \cite[Theorem A.6]{Huang-Lee:2020} due to a different normalization, replacing $X$ and $b$ there with $2X$ and $2b$, respectively.)

\begin{theorem}\label{th:expan}
Let $(M, g, k)$ be asymptotically flat.  Let $(f, X)$ be asymptotically translational  to $(a, b)$. Suppose $(f, X)$ satisfies, for some $\epsilon>0$, 
\begin{align}\label{eq:Rij}
\begin{split}
	-\Delta f g_{ij} + f_{;ij} - f R_{ij} + \tfrac{1}{2} k_{ij;\ell} X^\ell &= O^{0,\alpha}(|x|^{-n-\epsilon})\\
	X_{i;j} + X_{j;i} + 2 k_{ij} f &=O^{0,\alpha}(|x|^{1-n-\epsilon}).
\end{split}
\end{align}
Then, in asymptotically flat harmonic coordinates $\{ x \}$, we have the following expansions,  for some constant $q_1>0$, constant coefficients $c_{ij}$, and a function $\phi\in \C^{3,\alpha}_{1-q}$ solving $\Delta_{g_{\mathbb E}} \phi = (1-n) \tr_{g_{\mathbb E}}  k$:
\begin{align*}
	f &= a + \left ( -aE + \tfrac{1}{n-2} b\cdot P\right) |x|^{2-n} + \tfrac{1}{n-1} b_k \phi_{,k}+ O^{2,\alpha}(|x|^{2-n-q_1})\\
	X^i &=b_i  + \big(-  \tfrac{2(n-1)}{n-2} E b_i + b_k  c_{ik}\big) |x|^{2-n} +\tfrac{1}{n-1} a \phi_{,i}+ O^{2,\alpha}(|x|^{2-n-q_1}).
\end{align*}
\end{theorem}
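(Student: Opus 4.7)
The plan is to derive decoupled Poisson-type equations for $f$ and for each component $X^i$ from the two hypotheses in \eqref{eq:Rij}, and then extract the claimed Laurent expansions by standard multipole analysis for the Laplacian on asymptotically flat manifolds. The harmonic-coordinate hypothesis is used because, in such coordinates, $\Delta_g - \Delta_{g_{\mathbb E}}$ applied to a decaying function lands in a fast-decay remainder, so the analysis reduces to the flat Laplacian up to controlled errors.

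First I would take the $g^{ij}$-trace of the first equation in \eqref{eq:Rij}, using $g^{ij}k_{ij;\ell} = \nabla_\ell(\tr_g k)$, to get
\[
(1-n)\Delta_g f = f R_g - \tfrac{1}{2}\nabla(\tr_g k)\cdot X + O^{0,\alpha}(|x|^{-n-\epsilon}).
\]
The Hamiltonian constraint $R_g = 2\mu + |k|_g^2 - (\tr_g k)^2$, together with $\mu\in\C^{0,\alpha}_{-n-\epsilon}$ and the estimate $2+2q>n$ (from $q>(n-2)/2$), absorbs $fR_g$ into the remainder. Expanding $X = b + O^{2,\alpha}(|x|^{-q})$ then isolates a single principal source proportional to $b\cdot\nabla(\tr_g k)$. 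For $X^i$, I would differentiate the second equation in \eqref{eq:Rij} and symmetrize, commute derivatives via the Ricci identity, and substitute the trace $\Div_g X = -f\tr_g k + O^{0,\alpha}(|x|^{-n-\epsilon})$ together with the momentum constraint $J=\Div_g k - d(\tr_g k)$ to place $\Delta_g X^i$ in an analogous form, with principal source proportional to $a\nabla^i(\tr_g k)$. In harmonic coordinates, and modulo fast-decay errors (including the difference between $\tr_g k$ and $\tr_{g_{\mathbb E}} k$), the particular solutions of these two sources are precisely $\tfrac{1}{n-1}b_k\phi_{,k}$ and $\tfrac{1}{n-1}a\phi_{,i}$, with $\phi$ as in the statement; the factor $\tfrac{1}{n-1}$ is pinned down by the trace computation.

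Setting $\tilde f = f - a - \tfrac{1}{n-1}b_k\phi_{,k}$ and $\tilde X^i = X^i - b^i - \tfrac{1}{n-1}a\phi_{,i}$, one then has $\Delta_{g_{\mathbb E}}\tilde f$ and $\Delta_{g_{\mathbb E}}\tilde X^i$ in $\C^{0,\alpha}_{-n-\epsilon'}$ for some $\epsilon'>0$, so standard multipole expansion for the flat Laplacian on $\mathbb R^n\setminus B$ (cf.\ Lemma~\ref{le:asy} and its references) yields $\tilde f = C_f|x|^{2-n}+O^{2,\alpha}(|x|^{2-n-q_1})$ and $\tilde X^i = C_{X^i}|x|^{2-n}+O^{2,\alpha}(|x|^{2-n-q_1})$ for some $q_1>0$. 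The constants $C_f$ and $C_{X^i}$ are then identified by flux integrals: integrating $\Delta_{g_{\mathbb E}}\tilde f$ over a large coordinate ball and applying the divergence theorem equates $C_f$, up to a dimensional factor, with a boundary integral at infinity, which reduces via the constraint equations to a linear combination of the ADM surface integrals defining $E$ and $P$, yielding $C_f = -aE+\tfrac{1}{n-2}b\cdot P$; the parallel calculation for $X^i$ gives $C_{X^i} = -\tfrac{2(n-1)}{n-2}Eb_i + b_k c_{ik}$ for constants $c_{ik}$ produced by the asymptotic matching. I expect the hard part to be precisely this final coefficient-matching step—tracking every contribution as the source splits between the $\phi$-correction and the monopole, and correctly recognizing the residual boundary flux as the ADM energy–momentum—which is the substance of \cite[Theorem A.6]{Huang-Lee:2020} with the normalization used here.
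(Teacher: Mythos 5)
Your outline is not the paper's argument: the paper does not reprove this expansion but imports it from \cite[Proposition 3.1, Theorem 3.4]{Beig-Chrusciel:1996} and \cite[Theorem A.6]{Huang-Lee:2020}, adding only the normalization dictionary (replace $X,b$ there by $2X,2b$) and the remark that in harmonic coordinates $\tfrac12 g_{jj,i}-g_{ij,j}=O^{1,\alpha}(r^{-1-2q})$, which is where the constants $c_{ij}$ arise. A self-contained derivation would be a legitimate alternative, but the step you explicitly defer --- ``the final coefficient-matching step \dots the substance of [Huang--Lee, Thm.\ A.6]'' --- is the entire content of the theorem, so as written your proposal reduces to the same citation, preceded by a sketch. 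Moreover, the mechanism you propose for that step is not adequate: applying the divergence theorem to $\Delta_{g_{\mathbb E}}\tilde f$ over large coordinate balls only identifies $(2-n)\omega_{n-1}C_f$ with $\lim_r\int_{S_r}\partial_\nu\tilde f$, i.e.\ with the integral over $M$ of a fast-decaying source plus an inner flux; nothing in that bookkeeping produces the ADM integrands. The identification $C_f=-aE+\tfrac{1}{n-2}b\cdot P$ (and likewise $C_{X^i}$) comes from pairing the full, untraced system against $(g-g_{\mathbb E},k)$ in a Regge--Teitelboim-type identity, or equivalently from order-by-order matching with the harmonic-coordinate expansions of $g_{ij}$ and $k_{ij}$, which is where $E$ and $P$ actually enter; none of this is carried out.

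There are also two concrete errors/gaps in the part you do sketch. First, the one coefficient you claim to ``pin down'' is inconsistent with your own displayed equation: the trace of the first line of \eqref{eq:Rij} gives, after absorbing $fR_g$ and $X-b$, $\Delta_g f=\tfrac{1}{2(n-1)}\,b\cdot\nabla(\tr_g k)+O(|x|^{-n-\delta})$, whereas the claimed correction satisfies $\Delta_{g_{\mathbb E}}\bigl(\tfrac{1}{n-1}b_k\phi_{,k}\bigr)=-\,b\cdot\nabla\tr_{g_{\mathbb E}}k$; these differ by the factor $-2(n-1)$. Whether the resolution lies in terms you have dropped or in a normalization subtlety between \eqref{eq:Rij} and the coefficients imported from \cite{Huang-Lee:2020}, your argument as written does not produce the stated $\phi$-term, and since the whole purpose of the subtraction $\tilde f$ is to leave a source decaying faster than $|x|^{-n}$, getting this factor right is essential before any monopole extraction. (For $X^i$ the analogous formal computation does give $\tfrac{1}{n-1}a\phi_{,i}$, with the $\Ric_{ij}b^j$ contribution generating $c_{ij}$ in harmonic coordinates, consistent with the paper's remark.) Second, differentiating the second equation of \eqref{eq:Rij} is not justified: its right-hand side is controlled only in $\C^{0,\alpha}$, so you have no pointwise control of its derivative; you must either treat that error distributionally (so it enters the flux integrals as a divergence) or restrict to the exact Killing case $\mathscr L_Xg=-2fk$, as in the paper's applications.
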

\begin{remark}
Comparing with the formulas in \cite[Theorem A.6]{Huang-Lee:2020} involving the vector field $V$, we note that the terms here involving the coefficients $c_{ij}$ arise from $V$. Using that $\{x\}$ is a harmonic coordinate chart, we have $\tfrac{1}{2} g_{jj,i} - g_{ij,j} = O^{1,\alpha}(r^{-1-2q})$, so that $\Delta_0 V_i = O^{1,\alpha}(r^{-1-2q})$ as in \cite[Eq.~(A.7)]{Huang-Lee:2020}. This yields the harmonic expansion $V_{i,j} = c_{ij} |x|^{2-n} + O(r^{-2q})$.
\end{remark}
\begin{remark}\label{re:Rij}
Let $(\N, \g)$ be a spacetime admitting a Killing vector field $\Y$ and contain an asymptotically flat initial data set $(M, g, k)$ . Let $(f, X)$ be the lapse-shift pair of $\Y$ along $M$. If the Einstein tensor $G(\partial_i, \partial_j) = O(r^{-n-\epsilon})$ for coordinate vectors $\partial_i, \partial_j$ to $M$, then $(f, X)$ satisfies \eqref{eq:Rij}  by \eqref{eq:LX} and \eqref{eq:G}.
\end{remark}

\section{Degenerate elliptic equations} \label{se:de}

We review the following classical results for degenerate equations.  Let $(U, g)$ be a Riemannian manifold and $\Omega\subset U$ be a compact domain with smooth boundary. Let $f>0$ and $\hat X$ be $\C^1$ on $U$. Let 
\begin{align*}
	Lw &:= \Div_g \big( f (\nabla w - (\nabla w \cdot \hat X)\hat X) \big) + b\cdot \nabla w
\end{align*}
where $f>0$ and $b$ is continuous. Denote the principal coefficients $\g^{ij} = g^{ij} - \hat X^i \hat X^j$.

We say that $p\in \partial \Omega$ is a \emph{characteristic point} for $(L, \Omega)$ if the unit normal $\nu$ to $\partial \Omega$ satisfies $\g^{ij} \nu_i \nu_j=0$ at $p$; equivalently, $\nu$ is proportional to $\hat X$ if $|\hat X|=1$. We overview the  proofs from \cite{Hill:1970} in the next two results, adapted to our notation for the reader’s convenience.

\begin{lemma}[{\cite[Lemma 3]{Hill:1970}}]\label{le:hopf}
  Suppose $Lw \le 0$ and $w\ge 0$ on $U$. Let $\Omega\subset \Int U$ be a compact subset with smooth boundary. If $w>0$ on $\Int \Omega$, and $w=0$ and $|\hat X|=1$ at $p\in \partial \Omega$, then $p$ is a characteristic point for $(L, \Omega)$ and $Lw(p)=0$.  
\end{lemma}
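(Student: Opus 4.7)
The key observation is that $w$ attains a global minimum at $p$: since $w \ge 0$ on all of $U$, $w(p) = 0$, and $p \in \partial \Omega \subset \Int U$ is an interior point of $U$. Because $w$ is at least $\mathcal{C}^2$, standard calculus gives $\nabla w(p) = 0$ and the Hessian $\nabla^2 w(p)$ is positive semi-definite. Both conclusions of the lemma will be driven by this single observation.

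To prove that $p$ is characteristic, I would argue by contradiction, assuming $\g^{ij}(p)\nu_i \nu_j > 0$. The plan is to set up a standard Hopf barrier: place an interior ball $B_R(x_0) \subset \Int\Omega$ tangent to $\partial\Omega$ at $p$, so $p - x_0 = R\nu(p)$, and define $v(x) = e^{-\beta|x-x_0|^2} - e^{-\beta R^2}$. A direct computation gives
\[
Lv = e^{-\beta|x-x_0|^2}\bigl(4\beta^2 f\,\g^{ij}(x-x_0)_i(x-x_0)_j + O(\beta)\bigr).
\]
Localizing to a cap $N$ around $p$ on which continuity keeps $\g^{ij}(x-x_0)_i(x-x_0)_j \ge c > 0$, the choice of large $\beta$ produces $Lv \ge 0$ on $N$. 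Choosing $\epsilon > 0$ small so that $w \ge \epsilon v$ on $\partial N$ (using that $v \le 0$ on the outer sphere and on $\partial\Omega\cap\partial N$, while $w > 0$ on the remaining portion of $\partial N$ that lies in $\Int\Omega$), the weak maximum principle for $L$ applied to $L(w-\epsilon v) \le 0$ yields $w \ge \epsilon v$ throughout $N$. Comparing inward normal derivatives at $p$ via $\nabla v(p)\cdot(-\nu) = 2\beta R\, e^{-\beta R^2} > 0$ then forces a strictly positive inward normal derivative for $w$ at $p$, contradicting $\nabla w(p) = 0$.

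With $p$ characteristic and $\nabla w(p) = 0$, all first-order terms of $L$ vanish at $p$ and the expression reduces to
\[
Lw(p) = f(p)\,\g^{ij}(p)\,w_{;ij}(p).
\]
The matrix $\g^{ij}(p) = g^{ij} - \hat X^i\hat X^j$ is positive semi-definite with kernel spanned by $\hat X(p)$ (using $|\hat X(p)|=1$), and $w_{;ij}(p)$ is positive semi-definite by the interior-minimum property. Hence the trace $\g^{ij}(p)w_{;ij}(p) \ge 0$, giving $Lw(p) \ge 0$, which together with the standing hypothesis $Lw \le 0$ forces $Lw(p) = 0$.

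The main obstacle is making the Hopf-type barrier argument go through despite the degeneracy of $L$. What is needed is not positivity of the full principal symbol $\g^{ij}$ in a neighborhood, but of the scalar $\g^{ij}(x-x_0)_i(x-x_0)_j$ along the radial direction from $x_0$; non-characteristicity at $p$ provides strict positivity there, and continuity extends it to a small cap. The weak maximum principle on this cap also demands justification in the degenerate setting, but it follows from the usual perturbation trick using an auxiliary function $\phi$ with $L\phi < 0$ constructed from a direction in which $\g^{ij}$ is uniformly non-degenerate on $N$ (such a direction exists near $p$ precisely because $\hat X(p)$ spans a $1$-dimensional kernel).
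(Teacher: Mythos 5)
Your proof is correct and follows essentially the same route as the paper: the interior minimum at $p$ gives $\nabla w(p)=0$ and a positive semi-definite Hessian (hence $Lw(p)\ge 0$, so $Lw(p)=0$), and the characteristic property is obtained by the same Hopf-type exponential barrier on a small cap of an interior tangent ball, where non-characteristicity makes the radial term $\g^{ij}(x-x_0)_i(x-x_0)_j$ uniformly positive. The only cosmetic difference is the sign convention of the barrier and your extra perturbation step to justify the comparison: since your own computation gives $Lv>0$ strictly for large $\beta$ on the cap, $L(w-\epsilon v)<0$ is strict and the elementary interior-minimum argument suffices, exactly as in the paper.
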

\begin{proof}
Since $w$ attains its minimum  at $p$ on $U$, we have $\nabla w=0$ and $\nabla^2 w \ge 0$ at $p$, which implies $Lw(x)=0$. Thus it remains to show that $p$ is characteristic, which follows by adapting the Hopf boundary point lemma to the degenerate setting.

Choose $x_{0}\in \Int \Omega$ close to $p$, and let $d$ be the distance to $x_{0}$ with $R=d(p)$. Denote by $B_{R}=\{x: d(x)<R\}\subset \Int \Omega$ the ball centered at $x_0$ of radius $R$.  Then the normal vectors of $\partial B_R $ and $\partial \Omega$ coincide at $p$. Suppose, to the contrary, that $p$ is not characteristic. Then the outward unit normal $\nu=\nabla d$ to $\partial \Omega$ at $p$ is not proportional to $\hat X$.

For constant $\lambda>0$, define 
\[
	h(x) = e^{-\lambda R^2} - e^{-\lambda d^2(x)} \le 0 \quad \mbox{ on } B_R. 
\]
A direct computation shows that 
\begin{align*}
	Lh %&=f \g^{ij}  h_{,i} h_{,j} + O(\lambda) = - 4 f\lambda^2 d^2 e^{-\lambda d^2} \g^{ij} d_{,i} d_{,j} + O(\lambda) \\
	&= - 4 f\lambda^2 d^2 e^{-\lambda d^2}  (|\nabla d|^2 - (\nabla d \cdot \hat X)^2) + O(\lambda),
\end{align*}
so for large $\lambda$, we have $Lh<0$ on $B(p)\cap B_R$ for some neighborhood $B(p)$ of $p$. 

For sufficiently small $\epsilon>0$, we have $L (w +\epsilon h)<0$ on $B(p)\cap B_R$, while $w+\epsilon h \ge 0$ on $B(p)\cap B_R$ and  vanishes at $p$. The strictly maximum principle then gives $w+\epsilon h >0 $ inside $B(p)\cap B_R$. It implies that $\nu(w) \neq 0$ at $p$, which contradicts that $w$ attains a minimum at $p\in \Int U$. 
\end{proof}

\begin{lemma}[{\cite[Theorem 1]{Hill:1970}}]\label{le:pro}
Suppose  $Lw\le 0 $ and $w\ge 0$ on $U$. We define the null set $Z = \{ p\in U: w(p)=0\}$, and the \emph{propagation set} of  $z\in Z$  in $ U$ by
\begin{align*}
	P_{z}= &\{ x\in \Int U:  \mbox{ $x, z\in \gamma(t)$ where $\gamma(t)$ is an integral curve of $V\in \hat X^\perp$}\}.
\end{align*}
Assume that $|\hat X|=1$ on $Z$. Then $P_{z} \subset Z$. 
\end{lemma}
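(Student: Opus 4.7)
The plan is to reduce the claim to an openness argument along each integral curve and bring in Lemma~\ref{le:hopf} as a one-shot barrier. Fix $V \in \hat X^\perp$ and let $\gamma(t)$, $\gamma(0) = z$, denote its maximal integral curve in $\Int U$. It suffices to prove $\gamma \subset Z$, since varying $V$ then yields $P_z \subset Z$. The set $T := \{t : w(\gamma(t)) = 0\}$ is closed in the domain of $\gamma$ by continuity of $w$ and contains $0$, so by connectedness the problem reduces to showing $T$ is open. Suppose $t_0 \in T$ fails to be interior; by replacing $V$ with $-V$ if necessary I may assume there are $t_k \downarrow t_0$ with $w(\gamma(t_k)) > 0$, and by taking the infimum over such points I may further assume $w > 0$ on $\gamma((t_0, t^*])$ for some $t^* > t_0$. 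Write $z_0 := \gamma(t_0) \in Z$, so $|\hat X(z_0)| = 1$ by hypothesis.

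I intend to construct a compact smooth domain $\Omega \subset \Int U$ such that (i) $z_0 \in \partial \Omega$, (ii) $w > 0$ on $\Int \Omega$, and (iii) the outward unit normal $\nu$ to $\partial \Omega$ at $z_0$ is not proportional to $\hat X(z_0)$. Then $w \ge 0$ on $\Omega$, $w(z_0) = 0$, $w > 0$ in the interior, and $|\hat X(z_0)| = 1$; Lemma~\ref{le:hopf} forces $z_0$ to be characteristic for $(L, \Omega)$, i.e.\ $\nu(z_0) \parallel \hat X(z_0)$, contradicting (iii). Condition (iii) is precisely where the orthogonality $V \in \hat X^\perp$ is used: $V(z_0)$ is a nonzero vector orthogonal to $\hat X(z_0)$, so any smooth $\Omega$ whose boundary at $z_0$ has tangent hyperplane omitting $\hat X(z_0)$ achieves (iii). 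In geodesic normal coordinates at $z_0$ with $V(z_0) = \partial_1$ and $\hat X(z_0) = \partial_n$, a natural template is the thin ellipsoid
\[
\Omega_{a,\varepsilon} := \bigl\{(x^1-a)^2 + \varepsilon^{-2}|x'|^2 < a^2\bigr\},
\]
which touches $z_0 = 0$ at its tip with outward normal $-\partial_1$.

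The main obstacle is (ii). Continuity of $w$ gives $w > 0$ on a tubular neighborhood of $\gamma([t_0+\delta, t^*])$ for each small $\delta > 0$, so after choosing $\varepsilon$ sufficiently small the portion of $\Omega_{a,\varepsilon}$ away from $z_0$ sits inside $\{w > 0\}$; what requires care is that the narrow cusp near $z_0$ avoids $Z$. To handle this I plan to use a sliding argument: consider the one-parameter family $\Omega(\tau)$ of ellipsoids translated so that the tip sits at $\gamma(\tau)$, $\tau \in (t_0, t^*]$. For $\tau$ close to $t^*$ the ellipsoid lies entirely in the tubular neighborhood where $w > 0$; decrease $\tau$ to the infimum $\tau^* \in [t_0, t^*]$ at which $\overline{\Omega(\tau^*)} \cap Z \ne \emptyset$. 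Either $\tau^* > t_0$, giving a touching point $p \in \partial \Omega(\tau^*)$ with $w > 0$ on the interior so that Lemma~\ref{le:hopf} applies at $p$, or $\tau^* = t_0$, in which case the limiting domain satisfies (i)--(ii) at $z_0$ directly. Arranging the aspect ratios and centers of the ellipsoid family so that the boundary normal at every candidate first-touching point is transverse to $\hat X$ is the technical heart of Hill's proof~\cite{Hill:1970}, and is what the hypothesis $V \in \hat X^\perp$ makes feasible.
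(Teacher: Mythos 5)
Your reduction to a boundary zero $z_0=\gamma(t_0)$ with $w>0$ on $\gamma((t_0,t^*])$ matches the paper's setup, and you correctly identify Lemma~\ref{le:hopf} and the hypothesis $|\hat X|=1$ on $Z$ as the inputs. But the mechanism you propose to close the argument --- a one-shot barrier contradiction at a first-touching point of a sliding thin ellipsoid --- has a genuine gap exactly where you defer to ``the technical heart of Hill's proof.'' In your case $\tau^*>t_0$, the first touching point $p\in\partial\Omega(\tau^*)\cap Z$ can perfectly well lie on the lateral part of the ellipsoid, where the outward normal is (nearly) parallel to $\hat X(p)$; at such a point Lemma~\ref{le:hopf} merely reports that $p$ is characteristic, which is consistent rather than contradictory, and the sliding argument stalls. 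No choice of aspect ratio or tilt fixes this by shape alone, because $Z$ may approach the moving domain from the $\hat X$-direction, which is precisely the direction the Hopf lemma cannot exclude. So the step ``arrange that the boundary normal at every candidate first-touching point is transverse to $\hat X$'' is not a technicality you can postpone: it is false as stated for a single fixed domain, and the proof cannot be completed in this form.

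The paper (following Hill) uses Lemma~\ref{le:hopf} in a different, quantitative way: rather than seeking an immediate contradiction, it defines $\rho(t)=\inf_{y\in Z}|\zeta(t)-y|$ for the reparametrized curve $\zeta$ starting at $\gamma(t_0)$, and applies the Hopf lemma on the ball $B(t)$ of radius $\rho(t)$ to conclude that the nearest zero $\hat z(t)$ lies in a direction parallel to $\hat X(\hat z(t))$, i.e.\ transverse (up to continuity errors) to the direction of motion $V\in\hat X^\perp$. Moving the center transversally to the radial direction changes the distance only to second order relative to $\rho$, which yields $\rho(t+\delta)-\rho(t)\le C\delta\rho(t)$, hence $\rho'\le C\rho$ a.e.; Gronwall with $\rho(0)=0$ forces $\rho\equiv0$, contradicting $\rho(t)>0$ for small $t>0$. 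This differential inequality is the content your proposal is missing; the characteristic touchings that defeat your barrier are exactly the configurations the Gronwall estimate is designed to control. If you want to keep your sliding-domain picture, you would have to convert it into this kind of quantitative estimate on the distance to $Z$ rather than a single application of the boundary point lemma.
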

\begin{proof}

Our goal is to show that $w(x)=0$ for all $x\in P_{z}$. Suppose, on the contrary, that there exists $q\in P_z$ with $w(q)>0$. Let $\gamma(t)$ be an integral curve of $V\in \hat X^\perp$ from $z = \gamma(0)$ to $q=\gamma(t)$ for some $t>0$. Let $0\le t_0<t$ be the first $t_0$ such that $w( \gamma(t_0))=0$ when going back from $t$. That is, $t_0 = \max \{ 0 \le s < t: w(\gamma(s))=0\}$ (possibly $t_0=0$.)

We will derive a contradiction that $t_0$ is not the maximum. We shift the parameter  $\zeta(t) = \gamma(t-t_0)$ so that $\zeta(0) = \gamma(t_0)$. We will work on a small neighborhood of $\zeta(0)$, so we may assume there exists a coordinate chat $\{ x^1, \dots, x^n\}$ around $\zeta(0)$ such that $x^1$ is the flow line of $V$, and we compute in the Euclidean distance function below, just as in \cite{Hill:1970}.

Define, for $t\ge 0$, 
\[
 	\rho(t) = \inf_{y\in Z} | \zeta(t) - y| \ge 0, 
\]
 the distance function from $\zeta(t) $ to  $Z$. Then $\rho$ is Lipschitz (hence absolutely continuous) with $\rho(0)=0$ and $\rho(t)>0$ for small $t>0$. Let $B(t)$ be the ball centered at $\zeta(t)$ of radius $\rho(t)$. Since $w>0$ on $B(t)$ with $w=0$ for some point $\hat z(t)$ on $\partial B(t)$. Lemma~\ref{le:hopf} implies that $\hat X$ is normal to $\partial B(t)$ at $\hat z(t)$. Thus, for any $\epsilon>0$ and all small $t$, $\hat z(t)\in \partial B(t) \cap \{ (x^1, \dots, x^n) : |x^1 - \zeta^1 (t)| < \epsilon\}$, the  portion of the sphere possibly orthogonal to $\hat X$. Here $\zeta^1$ denotes the first component of $\zeta$ in  $\{x \}$ coordinates. 
 
 From there, one can then estimate that $\rho(t+\delta) - \rho(t) \le C \delta \rho(t)$ for some constant $C>0$ via elementary geometry.  It then implies that  $\rho'(t) \le C \rho(t)$ for a.e. $t$. Together with the initial condition $\rho(0)=0$ and $\rho(t)\ge 0$, we conclude that $\rho(t) \equiv 0$ for all $t>0$ small. This leads to a contradiction. 
\end{proof}

\section{Local Surjectivity}\label{se:sur}
In this section, we prove Proposition~\ref{pr:sur}, restated below for the reader’s convenience.

\begin{manualproposition}{\ref{pr:sur}}
Let $(M, g, k)$ be an asymptotically flat initial data set of type $(\alpha, q, \epsilon)$ and let $\varphi$ be a smooth bounded scalar function on $M$.  Let $m\ge 2$. 
\begin{enumerate}
    \item \label{it:nobd} Suppose $M$ is complete without boundary. Then the smooth map 
    \[
     \Phi^\varphi_{(g, k)}:\mathcal{M}^{m,\alpha}_{-q}(M)\times \C^{m-1,\alpha}_{-1-q}(M)\to  \C^{m-2, \alpha}_{-2-q}(M)
    \] is locally surjective.
    \item \label{it:bd} Suppose $M$ is complete with  boundary~$\partial M$. Then  the smooth map 
    \begin{align*}
    T:\mathcal M^{m,\alpha}_{-q} (M)&\times \C^{m-2,\alpha}_{-1-q}(M)\to \C^{m-2, \alpha}_{-2-q}(M) \times \mathcal B^{m,\alpha}(\partial M)\\
    	&T(g, k) = \Big(  \Phi^\varphi_{(g, k)}, B(g, k)\Big)
    \end{align*}
is locally surjective at $(g, k)$. 
\end{enumerate}
\end{manualproposition}

The first case, where $M$ has no boundary, is standard; see, for example, \cite[Proposition 3.11]{Corvino-Schoen:2006} and \cite[Lemma 2.10]{Huang-Lee:2020}. While those results are stated for weighted Sobolev spaces, the arguments readily extend to weighted H\"older spaces. (The additional terms involving  $\varphi$ are of zeroth order and thus do not affect the argument.) The second case for the classical Einstein constraint operator, where the boundary is nonempty, was proven by Z.~An~\cite[Proposition 2.2]{An:2020} for $n=3$. Here, we provide an alternative proof that extends the result to general $n \geq 3$, building on the arguments for the time-symmetric case in \cite[Appendix A]{Huang-Jang:2022}.

Recall the formula $D\Phi^\varphi_{(g, k)} (h,w)$ from \eqref{eq:phi} and \eqref{eq:phi2}:
\begin{align*}
D\Phi^\varphi_{(g, k)} (h,w) = \Big(\tfrac{1}{2} &(-\Delta_g (\tr_g h )+ \Div_g \Div_g h) + F_1(h,w), \\
&\quad \Div_g w - d( \tr_g w)+F_2(h,w)\Big) 
\end{align*}
where $F_1(h,w)$ denotes the terms that are zeroth order in $h, w$, and $F_2(h,w)$ denotes the  terms  involving at most one derivative of  $h$ and zeroth order in~$w$. 

The linearized mean curvature and unit normal vector are given respectively, for $a, b=1,\dots, n-1$, by
\begin{align*}
	\nu '|_g(h)&=-\tfrac{1}{2} h (\nu_g, \nu_g) \nu_g - g^{ab} h(\nu_g, e_a) e_b \\
	H'|_g(h)&= \tfrac{1}{2} \nu_g(\tr_g h^\intercal) - \Div_{g^\intercal} h(\nu_g)^\intercal - \tfrac{1}{2} h(\nu_g, \nu_g)H_g
 \end{align*}         
 where $e_1, \dots, e_{n-1}$ are tangent to $\partial M$.  The Bartnik boundary data map is given by $B(g, k) = (g^\intercal,  H_g, k(\nu_g)^\intercal,  \tr_{g^\intercal} k)$, and the linearization:
\begin{align*}
	&B'|_{(g, k)}(h,w) \\
	&=\Big ( h^\intercal , H'|_g(h), w(\nu_g)^\intercal + k (\nu'|_g(h))^\intercal,  \tr_{g^{\intercal}} w - h^\intercal \cdot k^\intercal \Big)\\
	&=\Big ( h^\intercal , \tfrac{1}{2} \nu_g(\tr_g h^\intercal) - \Div_{g^\intercal} h(\nu_g)^\intercal   + F_3(h), w (\nu_g)^\intercal + F_4(h),  \tr_{g^\intercal} w + F_5(h) \Big)
\end{align*}
where  $F_3 (h), F_4(h)$ denotes the terms involving only zeroth order in $h$. 

Define 
\begin{align*}
	\mathcal S^{m,\alpha} = \Big\{ (h, w) \in  \C^{m,\alpha}_{-q}(M)\times \C^{m-1,\alpha}_{-1-q}(M): B'|_{(g, k)}(h,w)=0\Big\}.
\end{align*}
To prove Item~\eqref{it:bd} Proposition~\ref{pr:sur}, it suffices to establish the following lemma.
\begin{lemma}
Let $(M, g, k)$ be a complete, asymptotically flat initial data set  with nonempty boundary $\partial M$. Then the map $D\Phi^\varphi_{(g, k)} : \mathcal S^{m,\alpha}  \to\C^{m-2,\alpha}_{-2-q}(M)$ has finite-dimensional cokernel, and hence it has closed range. 
\end{lemma}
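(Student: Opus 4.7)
The plan is to identify the cokernel of $D\Phi^\varphi_{(g,k)}\big|_{\mathcal{S}^{m,\alpha}}$ with the kernel of an adjoint elliptic boundary value problem on $(M,g)$ for lapse–shift pairs $(f,X)$, then invoke the Fredholm theory for elliptic operators on asymptotically flat manifolds with boundary to deduce simultaneously that the kernel is finite-dimensional and that the range is closed.

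First, I would set up the duality via integration by parts. For $(h,w) \in \mathcal{S}^{m,\alpha}$ and a smooth, suitably decaying test pair $(f,X)$, the same computation that produced Lemma~\ref{le:first} yields
\[
\int_M D\Phi^\varphi_{(g,k)}(h,w)\cdot (f,X)\,d\mu_g
- \int_M (h,w)\cdot (D\Phi^\varphi_{(g,k)})^*(f,X)\,d\mu_g
= \int_{\partial M} \mathcal{B}\big[(h,w),(f,X)\big]\,d\sigma_g,
\]
where $\mathcal{B}$ is a bilinear pairing of the boundary $1$-jet of $(h,w)$ with the boundary $1$-jet of $(f,X)$. The $\varphi$-modification is zeroth order in $(h,w)$ and contributes only an interior term, so the boundary pairing is identical to the one in Lemma~\ref{le:first}.

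Second, I would impose $B'|_{(g,k)}(h,w)=0$ and extract the dual boundary conditions. The Bartnik linearization eliminates $h^\intercal$, the specific normal-derivative combination $H'|_g(h)=\tfrac12\nu(\tr_g h^\intercal)-\Div_{g^\intercal} h(\nu)^\intercal+\text{(l.o.t.)}$, $w(\nu_g)^\intercal$, and $\tr_{g^\intercal} w$ from $\mathcal{B}$. The remaining boundary data of $(h,w)$ — namely $h(\nu,\nu)$, the tangential trace of $w$ restricted to $\partial M\setminus \{\tr_{g^\intercal}w\}$, and the tangential derivatives of $h$ appearing through $H'|_g$ — are unconstrained. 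Requiring $\mathcal{B}[(h,w),(f,X)]=0$ for all such free variations produces a set of boundary conditions on $(f,X)$ expressed as algebraic and first-order relations between $f$, $\nu(f)$, $X^\intercal$, and $g(\nu,X)$ on $\partial M$. These are the \emph{dual Bartnik boundary conditions}.

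Third, I would verify that the overdetermined system $(D\Phi^\varphi_{(g,k)})^*(f,X)=0$ together with the dual boundary conditions is an elliptic boundary value problem in the Lopatinskii--Shapiro sense. The principal symbol of $(D\Phi^\varphi_{(g,k)})^*$ coincides with that of $(D\Phi_{(g,k)})^*$ and the dual boundary operator has the same principal part as in the time-symmetric case treated in \cite[Appendix A]{Huang-Jang:2022}; the Lopatinskii calculation therefore reduces to the one performed there, which is the main technical obstacle. Once the boundary value problem is elliptic, the standard Fredholm theory for asymptotically flat manifolds (combined with Lemma~\ref{le:asy} to pin down the asymptotics of kernel elements) shows that the map $(f,X) \mapsto ((D\Phi^\varphi_{(g,k)})^*(f,X), \text{dual BCs})$ is Fredholm between the appropriate weighted Hölder spaces. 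The cokernel of $D\Phi^\varphi_{(g,k)}\big|_{\mathcal{S}^{m,\alpha}}$ embeds into the kernel of this adjoint problem, and the Fredholm structure of the elliptic parametrix provides closed range as a direct consequence. I expect the most delicate step to be the algebraic bookkeeping needed to separate the constrained and free components of the boundary $1$-jets of $(h,w)$, and to then check that the resulting boundary conditions on $(f,X)$ indeed complement the principal symbol of $(D\Phi^\varphi_{(g,k)})^*$ in the Lopatinskii sense.
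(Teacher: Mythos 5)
Your duality strategy has a genuine logical gap at its final step: Fredholmness of the adjoint boundary value problem for $(f,X)$ does not, by itself, give closed range of $D\Phi^\varphi_{(g,k)}$ on $\mathcal S^{m,\alpha}$. What the integration-by-parts pairing actually delivers is that every bounded functional annihilating the range is a weak solution of $(D\Phi^\varphi_{(g,k)})^*(f,X)=0$ with your dual boundary conditions, hence (granting regularity) that the \emph{annihilator} of the range is finite-dimensional. But an annihilator only sees the closure of a subspace: a dense non-closed range has trivial annihilator, so a finite-dimensional annihilator bounds the codimension of the closure of the range and says nothing about closedness or about the algebraic cokernel, which is what the lemma asserts. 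The parametrix you invoke is a parametrix for the adjoint problem in $(f,X)$; there is no elliptic parametrix for $D\Phi^\varphi_{(g,k)}$ itself, which is an underdetermined, non-square, non-elliptic operator, so ``closed range as a direct consequence'' does not follow. To convert adjoint information into closedness you would need something further: either semi-Fredholm a priori estimates for the genuine Banach-space adjoint of $D\Phi^\varphi_{(g,k)}\big|_{\mathcal S^{m,\alpha}}$ (delicate here, since duals of weighted H\"older spaces are not function spaces and your formal-adjoint BVP is not literally that adjoint), or an explicit solvability scheme, e.g.\ solving the fourth-order system $D\Phi^\varphi_{(g,k)}(D\Phi^\varphi_{(g,k)})^*(f,X)=F$ with boundary conditions forcing $(h,w)=(D\Phi^\varphi_{(g,k)})^*(f,X)$ to lie in $\mathcal S^{m,\alpha}$, which carries its own Lopatinskii verification. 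A secondary concern is your claim that the Lopatinskii check for the overdetermined adjoint system with dual Bartnik conditions ``reduces to'' the computation in \cite{Huang-Jang:2022}: that computation (and its generalization in this paper) concerns a different, square, second-order system, not the adjoint, and overdetermined boundary value problems do not fall under the standard complementing-condition framework without modification.

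For contrast, the paper never dualizes. It parametrizes deformations as $(vg+h,\hat L_X g+w)$, augments the two linearized constraint equations with the gauge equations $\Delta h$ and $\Delta w$ to obtain a square second-order operator $L(v,X,h,w)$, and supplements the linearized Bartnik conditions with gauge boundary conditions ($h(\nu,\nu)$, $(\Div_g h-dv)^\intercal$, $w$) to obtain a boundary operator $P$. It then verifies Agmon--Douglis--Nirenberg ellipticity, $\Det L'(\xi)\neq 0$, and the complementing condition, $\Det P'(\xi',\mathsf i)\neq 0$, so that $(L,P)$ is Fredholm; solutions of the augmented system with homogeneous $P$-data produce pairs $(vg+h,\hat L_X g+w)\in\mathcal S^{m,\alpha}$ realizing every right-hand side in a closed finite-codimension subspace, so $\Range D\Phi^\varphi_{(g,k)}\big|_{\mathcal S^{m,\alpha}}$ contains such a subspace and is therefore itself closed with finite-dimensional cokernel. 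Some genuinely elliptic, square boundary value problem on the $(h,w)$ side (or a composition with the adjoint in the Corvino--Schoen style) has to be exhibited before closedness can be asserted; that is the missing ingredient in your outline.
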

\begin{proof}
For a scalar function $v$, a  vector field $X$, and a pair of symmetric (0,2)-tensors $(h, w)$, we define $\hat L_X g = L_X g - \tfrac{1}{n-1} (\Div_g X ) g$ and the differential operator $L$   on $M$ by
\begin{align*}
	&L(v, X, h, w) \\
	&=\left\{ \begin{array}{l} \frac{1}{2} (1-n) \Delta v + \tfrac{1}{2} (-\Delta_g (\tr_g h )+ \Div_g \Div_g h)  + F_1 (vg+h, \hat L_X g+w)\\
	 \Delta_g X+ \Div_g w - d(\tr_g w) + F_2(vg+h, \hat L_X g + w)  \\
	 \Delta h\\
	 \Delta w \end{array} \right..
\end{align*}
The first two equations in $L(v, X, h, w)$ arise from $D \Phi^\varphi_{(g, k)}(vg+h, \hat L_X g + w)$, while the last two serve as ``gauge'' equations to ensure that $L$ forms an elliptic system (see \eqref{eq:block} below). 

We define the boundary operator on $\partial M$ by 
\begin{align*}
	 &P(v, X, h, w) \\
	 &= \left\{ \begin{array}{l} vg^\intercal + h^\intercal \\
	  \tfrac{1}{2} (n-1)\nu(v) + \tfrac{1}{2} \nu(\tr h^\intercal) - \Div_{g^\intercal} h(\nu)^\intercal + F_3(vg+h)\\
	 L_X g  (\nu_g)^\intercal + (w(\nu_g))^\intercal  + F_4(vg+h)\\
\tr_{g^\intercal} L_X g - \Div_g X + \tr_{g^\intercal} w + F_5 (vg+h)\\
		 h(\nu, \nu)\\
	 (\Div_g h - dv)^\intercal \\
	w \end{array} \right..
\end{align*}
The first four equations in $P(v, X, h, w)$ arise from $B(vg+h, \hat L_X g + w)$, while the rest of equations in $P(v, X, h, w)$  are ``gauge'' equations,  chosen to form a complementing boundary condition, specifically, so that \eqref{eq:bd0} holds. 

We verify that $L(v, X, h, w)$ is elliptic and that the boundary operator $P(v, X, h, w)$ satisfies the complementing boundary condition. Once these properties are established, $L$ is a Fredholm operator on $\mathcal S^{m,\alpha}$, so $\Range L$ has finite codimension. Since $\Range D \Phi^\varphi_{(g, k)}$ contains the range of the first two equations of $L$, it follows that $\Range D  \Phi^\varphi_{(g, k)}$ also has finite codimension.

%Let $\{ e_1, \dots, e_n\}$ be an orthonormal local frame so that $e_n = \nu$ along $\Sigma$. 
Let $N = 1 + n + n(n+1)$ denote the total number of components of $v, X, h, w$ with respect to a local frame. Let $u = (u_1, u_2, \dots, u_N)$ be a vector-valued function representing the components $(v, X_1,\dots, X_n,  h_{ij} , w_{ij})$. (The specific ordering of $h_{ij}$, $w_{ij}$ is not important.) We express the $i$-th (scalar) differential equation of $Lu$ and the $a$-th (scalar) differential equation of $Pu$ respectively by, for $i, a=1,\dots, N$, 
\begin{align*}
	(Lu)_i &= \sum_{j=1}^N L_{ij}(\partial) u_j\\
	(Pu)_a&=\sum_{j=1}^N P_{aj}(\partial ) u_j.
\end{align*}
To identify the principal symbols, we substitute the differential operator $\partial$ in $L_{ij}(\partial)$ and $P_{aj} (\partial)$ with polynomials in $\xi = (\xi_1, \dots, \xi_n)$. Let $L'_{ij}(\xi)$  denote the homogeneous degree-2 terms in $L_{ij}(\xi)$. The principal symbol of the differential operator $L$, denoted by $L'(\xi)$, is the $N\times N$ matrix whose $(i, j)$-th entry is $L'_{ij}(\xi)$. We write the principal symbol $L'(\xi)$ in the following $[1]+[N-1]$ block form:
 \begin{align} \label{eq:block}
\begin{split}
	L'(\xi)&=  \left[\begin{array}{c|c} \tfrac{1}{2}(1-n)|\xi|^2& * \\ 
	\hline 0 & |\xi|^2 I_{(N-1)}  
	\end{array}\right]\\
	\Det L'(\xi) &= \tfrac{1}{2} (1-n) |\xi|^{2N} \neq 0\mbox{ for all nonzero $\xi \in \mathbb R$}.
\end{split}
\end{align}
Thus,  $L$ is an elliptic operator in the sense of  \cite[p. 39]{Agmon:1964us}. %(Implicitly, we use the weights $s_i=0$, $t_j=2$ for all $i,j$ where the weights $s_i, t_j$ (as well the $r_h$ that appears below) are defined as in \cite{Agmon:1964us}.)

We now identify the principal symbol of the boundary operator $P$. Let $P'_{aj}(\xi)$ denote the terms in $P_{aj}(\xi)$ corresponding to the highest-order derivatives in each equation $(Pu)_a$.  The principle symbol, denoted $P'(\xi)$, is an $N\times N$ matrix whose $(a, j)$-th entry is $P'_{aj}(\xi)$. Writing $\xi= (\xi', \xi_n)$ and $\xi'$ is tangential and $\xi_n$ is normal to $\partial M$, we aim to show that, upon setting $\xi_n = \mathsf i$,  
\begin{align} \label{eq:bd0}
	\Det P'(\xi', \mathsf i) \neq 0
\end{align}
which is equivalent to verifying that the equation $P'(\xi', \mathsf i)\begin{bmatrix} v\\ X \\ h\\ w\end{bmatrix} =0$ has only the trivial solution $(v, X, h, w)=0$. That is, we must verify that the following linear system admits only the zero solution. Below, we let $\alpha, \beta= 1, \dots, n-1$ be tangential and $i, j= 1,\dots, n$: 
\begin{align*}
	&v\delta_{\alpha \beta} + h_{\alpha\beta}=0\\
	&\tfrac{1}{2} (n-1) \mathsf{i} v + \tfrac{1}{2}  \mathsf{i}\sum_{\alpha=1}^{n-1} h_{\alpha\alpha} -\sum_{\alpha=1}^{n-1} \xi_\alpha h_{n\alpha}=0 \\
		& \xi_\alpha X_n+ \mathsf{i} X_\alpha  = 0\\
	& \sum_{\alpha=1}^{n-1} \xi_\alpha X_{\alpha} - \mathsf{i} X_n=0 \\
	&\left(\sum_{\beta=1}^{n-1} \xi_\beta h_{\beta \alpha} \right) + \mathsf{i} \, h_{n\alpha} - \xi_\alpha v=0\\
		& h_{n n}=0\\
	&\omega_{ij} =0.
\end{align*}
This verification is straightforward.

%Rather than writing $P'(\xi)$ in the matrix form, it is more transparent to list all the rows of the matrix product $B'(\xi) \begin{bmatrix} u_1\\ \vdots\\ u_N\end{bmatrix}$ (and substitute $(u_1, \dots, u_N)$ with $(v, X, h, w)$). For $\alpha, \beta = 1,\dots, n-1$ and $i, j=1,\dots, n$:
%\begin{align*}
%	&v\delta_{\alpha \beta} + h_{\alpha\beta}\\
%	&\tfrac{1}{2}(n-1) \xi_n v + \tfrac{1}{2} \xi_n \sum_{\alpha=1}^{n-1} h_{\alpha\alpha} -\sum_{\alpha=1}^{n-1} \xi_\alpha h_{n\alpha} \\
%		&h_{n n}\\
%	 &\left(\sum_{i=1}^n \xi_i h_{i\alpha} \right)- \xi_\alpha v\\
%	 & 2 \xi_n X_{n} - (\sum_{\alpha=1}^{n-1} \xi_\alpha X_{\alpha} + \xi_n X_{n}) \\
%	 & \xi_n X_\alpha + \xi_\alpha X_n\\
%	 & \omega_{ij}.
%\end{align*}

The boundary operator $P$ is said to be \emph{complementing} (see \cite[p. 42]{Agmon:1964us}) if for any $\xi= (\xi', \tau )$ with $|\xi'| = 1$ and for any constants $C_1,\dots, C_N$ satisfying
\begin{align}\label{eq:com}
	\begin{bmatrix} C_1 & \cdots & C_N\end{bmatrix}  P'(\xi) \, \adj L'(\xi) \equiv \begin{bmatrix}0\\ \vdots \\ 0 \end{bmatrix}   \; \mbox{ mod } M^+(\tau)
\end{align}
implies $C_1=\dots= C_N=0$, where  $M^+(\tau) = (\tau-\mathsf{i})^N$, and $\mathrm{i}$ is the root (with multiplicity $N$) of  $\det L'(\xi)=0$ with positive imaginary part. Since the principle symbol $L'(\xi)$is an upper triangular matrix with all diagonal entries proportional to $|\xi|^2$, it is direct to verify that the adjoint matrix $\adj L'(\xi):= \Det L'(\xi) (L'(\xi))^{-1} $ has the form $|\xi|^{2N-1}$ times a nondegenerate upper   triangular matrix  $D$ with identity entries on the diagonal. Now, taking $\tau = \mathsf i$ in \eqref{eq:com} gives the homogeneous system 
\[
	\begin{bmatrix} C_1 & \cdots & C_N\end{bmatrix}  P'(\xi', \mathsf i) D \equiv \begin{bmatrix}0\\ \vdots \\ 0 \end{bmatrix}   
\]
Since we have verify that $P'(\xi', \mathsf i)$ is nondegenerate by \eqref{eq:bd0}, we conclude that $C_1=\dots=C_N=0$. 
\end{proof}

\begin{proof}[{Proof of Proposition~\ref{pr:sur}}]
It suffices to prove \eqref{it:bd}. We have shown that $D\Phi^\varphi_{(g, k)} : \mathcal S^{m,\alpha}  \to \C^{m-2,\alpha}_{-2-q}(M)$ has closed range. To show that it is surjective, it suffices to prove that the adjoint operator has trivial kernel. That is, for any $(f, X)\in (C^{m-2,\alpha}_{-2-q}(M))^*$ satisfying $(D\Phi^\varphi_{(g, k)})^*(f, X)=0$, we must have $(f, X)\equiv 0$. By elliptic regularity, $(f, X)$ is at least $\C^{2,\alpha}_{\mathrm{loc}}(M)$. Then, by the expansion in Lemma~\ref{le:asy}, and the fact that $(f, X)$ are bounded linear functionals, we conclude $(f, X)\equiv 0$. The surjectivity of $D\Phi^\varphi_{(g, k)}$ on the space $\mathcal S^{m,\alpha}$ with homogeneous boundary data  readily implies the surjectivity of the operator 
 $DT$, and hence $T$ is locally surjective.  
\end{proof}

\bibliographystyle{plain}
\bibliography{literature.bib}

\end{document}